\pdfoutput=1
\documentclass[11pt]{article}

\let\originalleft\left
\let\originalright\right
\renewcommand{\left}{\mathopen{}\mathclose\bgroup\originalleft}
\renewcommand{\right}{\aftergroup\egroup\originalright}

\usepackage{algorithm,algpseudocode}

\usepackage{amsmath, amssymb, amsthm}
\usepackage{enumerate,enumitem} \setlist{leftmargin=*, itemsep=0pt}
\usepackage[figurewithin=none]{caption}
\usepackage{autobreak, bbm, booktabs, comment, dsfont, float, geometry, graphicx, mathtools, mathrsfs,  soul, suffix, tabu, tikz, tikzsymbols, titling, thmtools, thm-restate, tocloft, verbatim, xcolor, xspace}

\setlength{\cftbeforesecskip}{2.1ex}

\usepackage[backend=biber, isbn=false, backref, maxbibnames=99]{biblatex}
\DefineBibliographyStrings{english}{
	backrefpage = {p\adddot},
	backrefpages = {pp\adddot}
}

\usepackage[colorlinks, allcolors=blue, linktocpage, breaklinks]{hyperref}
\usepackage[capitalize]{cleveref}
\crefname{ALG@line}{Line}{Lines}

\numberwithin{equation}{section}

\allowdisplaybreaks

\theoremstyle{plain}
\newtheorem{thm}{Theorem}[section]
\newtheorem{lem}[thm]{Lemma}
\newtheorem{cor}[thm]{Corollary}

\newtheorem{obs}[thm]{Observation}

\crefname{lem}{Lemma}{Lemmas}
\crefname{thm}{Theorem}{Theorems}
\crefname{cor}{Corollary}{Corollaries}
\crefname{clm}{Claim}{Claims}
\crefname{prp}{Proposition}{Propositions}
\crefname{xmp}{Example}{Examples}
\crefname{que}{Question}{Questions}

\theoremstyle{definition}
\newtheorem{dfn}[thm]{Definition}

\newtheorem{prb}[thm]{Problem}
\crefname{prb}{Problem}{Problems}

\theoremstyle{remark}
\newtheorem*{rmk*}{Remark}

\newcommand*{\C}{\ensuremath{\mathbb{C}}}
\newcommand*{\N}{\ensuremath{\mathbb{N}}}

\newcommand*{\R}{\ensuremath{\mathbb{R}}}

\newcommand{\microspace}{\mspace{.5mu}}
\newcommand{\ket}[1]{\ensuremath{\left\lvert\microspace #1
		\microspace\right\rangle}}
\newcommand{\bra}[1]{\ensuremath{\left\langle\microspace #1
		\microspace\right\vert}}
\newcommand{\ketbra}[2]{\ensuremath{\left\lvert\microspace #1
		\microspace\right\rangle\! \left\langle \microspace #2 \microspace \right\rvert}}
\newcommand*{\kb}[1]{\ketbra{#1}{#1}}
\newcommand*{\ip}[2]{\left\langle #1 \middle| #2 \right\rangle}

\newcommand*{\acz}{$\mathsf{AC^0}$\xspace}
\newcommand*{\qacz}{$\mathsf{QAC^0}$\xspace}
\newcommand*{\qaczf}{$\mathsf{QAC_f^0}$\xspace}

\newcommand*{\qnco}{$\mathsf{QNC^1}$\xspace}

\newcommand*{\tcz}{$\mathsf{TC^0}$\xspace}

\usetikzlibrary{arrows, calc, positioning, shapes}

\newcommand*{\adj}[1]{#1^\dagger}
\newcommand*{\bits}{\{0,1\}}
\newcommand*{\cc}[1]{\ensuremath{\mathsf{#1}}}
\newcommand*{\class}[1]{\mathsf{#1}}

\newcommand*\cube[1]{\bits^{#1}}
\newcommand*{\eps}{\varepsilon}

\newcommand*{\interact}{\mathord{\rightleftarrows}}

\newcommand*{\mrm}[1]{\mathrm{#1}}
\newcommand*{\mc}{\mathcal}
\newcommand*{\norm}[1]{\|#1\|}
\newcommand*{\Norm}[1]{\left\|#1\right\|}

\newcommand*{\poly}{\mathrm{poly}}

\newcommand*{\pr}[1]{\mathrm{Pr}(#1)}

\newcommand*{\PR}[1]{\mathrm{Pr}\left(#1\right)}

\newcommand*{\reg}{\mathsf}

\newcommand*\tr[1]{\operatorname{tr} \left( #1 \right)}
\newcommand*\trg[2]{\operatorname{tr}_{> #1} \left( #2 \right)}
\DeclareMathOperator*{\td}{td}
\newcommand*\wt\widetilde
\newcommand*{\zs}{0\dotsc0}
\DeclarePairedDelimiter{\ceil}{\lceil}{\rceil}

\newcommand{\Paren}[1]{\left(#1\right)}

\newcommand*{\Mag}[1]{\left| #1 \right|}

\newcommand{\IP}{\class{IP}}

\newcommand{\PSPACE}{\class{PSPACE}}
\newcommand{\QIP}{\class{QIP}}

\newcommand{\statePSPACE}{\cc{statePSPACE}\xspace}
\newcommand{\sPe}{\cc{statePSPACE_{exp}}\xspace}

\makeatletter
\newcommand{\hlabel}[1]{\edef\@currentcounter{ALG@line}\label{#1}}
\makeatother

\algblockdefx{Ctrl}{EndCtrl}[1]{\textbf{controlled on} #1}{\textbf{end control}}

\floatname{algorithm}{Procedure}
\crefname{algorithm}{Procedure}{Procedures}
\algnewcommand{\IIf}[1]{\State\algorithmicif\ #1\ \algorithmicthen}
\algnewcommand{\EndIIf}{\unskip\ \algorithmicend\ \algorithmicif}

\newcommand*{\ft}{\mathbb F_2}
\DeclareMathOperator*{\im}{im}
\DeclareMathOperator*{\sgn}{sgn}
\usepackage{multirow,makecell}
\newcommand*{\mr}[1]{\multirow2*{#1}}
\usepackage{tablefootnote}

\newcommand{\stateqip}{\class{stateQIP}}
\newcommand{\linear}{\mathrm{D}}
\newcommand\ptr[2]{\mathrm{tr}_{#1} \Paren{#2}}

\newcommand{\Tr}{\mathrm{tr}}

\DeclareMathOperator{\sr}{sgnRe}
\newcommand*{\crl}[1]{\text{\textnormal{ctrl-}}#1}

\addbibresource{refs.bib}
\title{Efficient Quantum State Synthesis with One Query}
\author{Gregory Rosenthal\thanks{Email: \href{mailto:grosenth@uwaterloo.ca}{\color{black}\texttt{grosenth@uwaterloo.ca}}. Part of this work was done while the author was visiting the Simons Institute for the Theory of Computing.} \\ University of Waterloo}
\predate{} \postdate{} \date{}
\begin{document}
\maketitle
\begin{abstract}
We present a polynomial-time quantum algorithm making a single query (in superposition) to a classical oracle, such that for every state $\ket\psi$ there exists a choice of oracle that makes the algorithm construct an exponentially close approximation of $\ket\psi$. Previous algorithms for this problem either used a linear number of queries and polynomial time, or a constant number of queries and polynomially many ancillae but no nontrivial bound on the runtime. As corollaries we do the following:
\begin{itemize}
	\item We simplify the proof that $\cc{statePSPACE} \subseteq \cc{stateQIP}$ (a quantum state analogue of $\cc{PSPACE} \subseteq \cc{IP}$) and show that a constant number of rounds of interaction suffices.
	\item We show that \qaczf lower bounds for constructing explicit states would imply breakthrough circuit lower bounds for computing explicit Boolean functions.
	\item We prove that every $n$-qubit state can be constructed to within 0.01 error by an $O(2^n/n)$-size circuit over an appropriate finite gate set. More generally we give a size-error tradeoff which, by a counting argument, is optimal for \emph{any} finite gate set.
\end{itemize}
\end{abstract}
\newpage\tableofcontents\newpage
\section{Introduction} \label{sec:intro}
Many natural tasks in quantum computing can be phrased as \emph{constructing a quantum state} $\ket\psi$, by which we mean implementing a quantum circuit that outputs $\ket\psi$ given the all-zeros input. Examples include ground states of physical systems~\cite{cerezo2021variational}, Hamiltonian simulation applied to the all-zeros state~\cite{swingle2018unscrambling}, QSampling states~\cite{AT03}, quantum money~\cite{aaronson2009quantum}, quantum pseudorandom states~\cite{ji2018pseudorandom}, and the first step in Linear Combinations of Unitaries (LCU)~\cite{BCK15,CW12}. A survey by Aaronson~\cite{Aar16} discusses other examples.

Despite this, much less is known about the complexity of constructing quantum states than is known about the (quantum) complexity of computing Boolean functions. This motivates the goal of finding, for a state $\ket\psi$ that we would like to construct, a Boolean function $f$ such that the task of constructing $\ket\psi$ efficiently reduces to that of computing $f$. We can phrase this problem as follows:

\begin{prb}[The state synthesis problem~\cite{Aar16}, stated informally] \label{que:ssp}
	Find a low-complexity quantum algorithm $A$, which can make (adaptive) queries in superposition to a classical oracle, such that for every state $\ket\psi$ there exists an oracle $f$ such that $A^f$ approximately constructs $\ket\psi$.
\end{prb}

We call $A$ a \emph{state synthesis algorithm}. By $A^f$ we mean $A$ with query access to the Boolean function $f$. We may assume without loss of generality that $f$ has a single output bit, for reasons that will be explained in \cref{sec:prelim} when we define the query model. The requirement that all queries be to the \emph{same} function $f$ is also without loss of generality, because if the $j$'th query is to a function $f_j$ then the function $(j,x) \mapsto f_j(x)$ can simulate all queries.

Our main result is the following state synthesis algorithm, where by a ``clean construction" we mean that the ancillae end in approximately the all-zeros state (as opposed to some other state unentangled with $\ket\psi$):

\begin{thm}[Main theorem, informal] \label{thm:mi}
	There is a uniform sequence $(C_n)_n$ of $\poly(n)$-size quantum circuits, each making one (resp.\ four) queries to a classical oracle, such that for every $n$-qubit state $\ket\psi$ there exists a classical oracle $f$ such that $C_n^f$ non-cleanly (resp.\ cleanly) constructs $\ket\psi$ to within exponentially small error.
\end{thm}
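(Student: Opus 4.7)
\emph{Proof sketch.} The plan has three steps: (i) discretize $\ket\psi$ to a state $\ket{\tilde\psi}$ with a compact combinatorial description; (ii) encode the description in a single-bit-output oracle $f$; (iii) give a fixed polynomial-size circuit that, with one query to $f$, prepares $\ket{\tilde\psi}$ up to the desired error.

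For (i), I would round each amplitude $\alpha_x$ to a complex rational with denominator $M = 2^{\poly(n)}$, losing only exponentially small fidelity. For (ii), I would decompose the rounded amplitudes as $\tilde\alpha_x = \frac{1}{M}\sum_{j=1}^M s_{x,j}$ with $s_{x,j} \in \{\pm 1, \pm i\}$ (a signed-digit expansion of real and imaginary parts handled separately), and let $f(x, j, b)$ output the $b$-th classical bit of $s_{x,j}$; this is a single-output-bit function. For (iii), the natural candidate circuit prepares the uniform superposition $\frac{1}{\sqrt{N M}} \sum_{x, j} \ket{x} \ket{j}$ via Hadamards, queries $f$ with phase kickback (on a suitable ancilla configuration) to obtain $\frac{1}{\sqrt{N M}} \sum_{x, j} s_{x,j} \ket{x} \ket{j}$, and applies $H^{\otimes \log M}$ to the $j$-register. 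The amplitude on $\ket{x} \ket{0}_j$ in the result is $\tilde\alpha_x/\sqrt{N}$, so measuring $j = 0$ yields $\ket{\tilde\psi}$ with probability $1/N$.

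The main obstacle is eliminating this $1/N$ success-probability penalty without amplitude amplification, which would cost $\Omega(\sqrt N)$ oracle queries and thus blow the single-query budget. The key idea is that $f$ is chosen freely given $\ket\psi$; one can therefore design $f$ so that the ``junk'' branches in the post-processed state are \emph{also} aligned with $\ket\psi$ (tensored with a garbage state $\ket{g_\psi}$ depending on $\ket\psi$), yielding $\ket\psi \otimes \ket{g_\psi}$ unconditionally rather than only on post-selection. Proving the existence of such an $f$ for every $\ket\psi$ reduces to showing that the map from oracles to output states of the fixed circuit surjects (up to exponentially small error) onto the manifold $\{\ket\psi \otimes \ket g : \ket g \text{ arbitrary}\}$; I would establish this by a dimension count (the oracle has $2^{\poly(n)}$ bits of freedom, far more than parametrizes $n$-qubit states) combined with an explicit algebraic construction extending prior polynomial-query state-synthesis algorithms through a careful redistribution of amplitudes. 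For the clean four-query version, I would take the non-clean one-query protocol producing $\ket\psi \otimes \ket{g_\psi}$ and run three additional queries in a Bennett-style uncomputation routine that reversibly zeros out $\ket{g_\psi}$.
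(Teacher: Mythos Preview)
Your proposal has a genuine gap at exactly the step you flag as the ``main obstacle'': eliminating the exponentially small success probability. The dimension count you invoke is not merely incomplete---for the specific circuit you describe, the claim is \emph{false}. After applying Hadamards, a phase query $s_{x,j}\in\{\pm1,\pm i\}$, and then Hadamards on the $j$-register, the reduced state on the $x$-register is
\[
\rho_{x,x'} \;=\; \frac{1}{NM}\sum_{j} s_{x,j}\,\overline{s_{x',j}},
\]
because the final Hadamards on $j$ do not affect this marginal. Since $|s_{x,j}|=1$, the diagonal is $\rho_{x,x}=1/N$ for \emph{every} choice of oracle. Hence no choice of $f$ can make the output approximate any state whose standard-basis amplitude magnitudes are non-uniform (for instance $\ket{0^n}$), let alone factor as $\ket\psi\otimes\ket{g_\psi}$. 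The oracle indeed has $2^{\poly(n)}$ bits of freedom, but the image of the oracle-to-output map for your fixed circuit is confined to states with uniform diagonal; a dimension count cannot repair this. The sentence ``an explicit algebraic construction extending prior polynomial-query state-synthesis algorithms through a careful redistribution of amplitudes'' is doing all of the work, and this is precisely the content of the theorem.

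The paper breaks this obstruction by allowing the post-query unitary on the $x$-register to depend on $\ket\psi$, but in a way that is itself read from the oracle and hence keeps the circuit fixed. Concretely, it writes $\ket\psi\approx\sum_{j}\alpha\beta^{j}\ket{\phi_j}$ where each $\ket{\phi_j}=C_j\cdot 2^{-n/2}\sum_x \pm\ket x$ is a \emph{Clifford} times phase state chosen (via a lemma of Irani et al.) to have $\Omega(1)$ overlap with the current residual; the geometric weights make the LCU $\ell_1$-norm $O(1)$, so the success probability is a constant rather than $2^{-n}$. The descriptions of the $C_j$ are packed into the same single query, and the oracle-independent circuit applies $C_j$ controlled on $j$. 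Constant success probability is then boosted to $1-\eps$ by parallel repetition with the parallel queries merged into one (the non-clean one-query version), or by a hybrid of parallel repetition and exact amplitude amplification (the clean four-query version). Your Bennett-style uncomputation idea for the clean case would also need revisiting: the garbage here is a quantum state, not a classical transcript, so running the inverse circuit uncomputes $\ket\psi$ along with it; the paper instead uses amplitude amplification to land exactly on $\ket\psi\ket{0\cdots0}$.
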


We state \cref{thm:mi} formally in \cref{sec:msa}. The rest of the Introduction is organized as follows: in \cref{sec:ssa} we compare \cref{thm:mi} to previously known state synthesis algorithms, in \cref{sec:bar,sec:iss,sec:cub} we discuss three different applications of \cref{thm:mi}, and in \cref{sec:org} we discuss the organization of the rest of the paper.

First we briefly sketch the proof of \cref{thm:mi}.\footnote{We credit Fermi Ma~\cite{Ma23} for suggesting a simplification of the proof which he has allowed us to incorporate, as discussed in \cref{sec:post}.} For simplicity, in this sketch we allow the circuit (as opposed to just the oracle) to depend on the state $\ket\psi$ being constructed. Call a state of the form $C \cdot 2^{-n/2} \sum_{x \in \cube n} \pm \ket x$ where $C$ is a Clifford unitary a ``Clifford times phase state". Irani, Natarajan, Nirkhe, Rao and Yuen~\cite{INN+22} proved that every state has fidelity $\Omega(1)$ with some Clifford times phase state (more generally, this holds for $C$ from any 2-design) and observed that Clifford times phase states can be efficiently constructed with one query. Thus all that remains is to decrease the approximation error.

We recursively define $\ket{\phi_k}$ for $k \ge 0$ as a Clifford times phase state that has fidelity $\Omega(1)$ with $\Paren{\ket\psi - \sum_{j=0}^{k-1} c_j \ket{\phi_j}} / \Norm{\ket\psi - \sum_{j=0}^{k-1} c_j \ket{\phi_j}}$, for appropriately chosen coefficients $c_0, c_1, \dotsc$ tending to zero. We show that $\sum_{j=0}^{k-1} c_j \ket{\phi_j}$ is a good approximation of $\ket\psi$ for sufficiently large $k$. Furthermore, using Linear Combinations of Unitaries (LCU)~\cite{BCK15,CW12} we can construct this approximation of $\ket\psi$ with constant success probability. Finally we increase the success probability either by parallel repetition (in the one-query version of the theorem), with parallel queries merged into a single query, or by a hybrid of parallel repetition and amplitude amplification (in the four-query version).
\subsection{Comparison to previous state synthesis algorithms} \label{sec:ssa}

There is a trivial state synthesis algorithm using one query, where that query returns the description of a circuit over a universal gate set that constructs an exponentially close approximation of the target state $\ket\psi$. This construction can be made clean by using a second query to uncompute the first query after constructing $\ket\psi$. However this algorithm requires an exponential number of qubits, because there exist states that require exponentially large circuits to construct~\cite[Section 4.5]{NC10}.

The following algorithm improves on the trivial algorithm by running in polynomial time, but at the expense of requiring a super-constant number of queries:

\begin{thm}[\cite{Aar16,GR02,KM01,Zal98}] \label{thm:aar}
	There is a uniform sequence $(C_n)_n$ of $\poly(n)$-size quantum circuits, each making $O(n)$ queries to a classical oracle, such that for every $n$-qubit state $\ket\psi$ there exists a classical oracle $f$ such that $C_n^f$ cleanly constructs $\ket\psi$ to within exponentially small error.
\end{thm}

The following algorithm also improves on the trivial algorithm, by running in polynomial rather than exponential \emph{space}, \emph{without} an increase in the number of queries:

\begin{thm}[{Irani et al.~\cite[Theorems 1.3 and 1.4]{INN+22}}] \label{thm:inn}
	There is a nonuniform sequence $(C_n)_n$ of $\poly(n)$-qubit quantum circuits, each making one (resp.\ two) queries to a classical oracle, such that for every $n$-qubit state $\ket\psi$ there exists a classical oracle $f$ such that $C_n^f$ non-cleanly (resp.\ cleanly) constructs $\ket\psi$ to within polynomially (resp.\ exponentially) small error.
\end{thm}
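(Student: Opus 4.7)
The plan is to build on a structural observation about Clifford-phase approximations of arbitrary quantum states and combine it with a single-query preparation gadget.

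\textbf{Approximation step.} Define a \emph{Clifford phase state} on $m$ qubits as $\ket{\phi_{C,s}} \coloneqq C \cdot 2^{-m/2}\sum_{x\in\cube m}(-1)^{s(x)}\ket x$, where $C$ is a Clifford and $s\colon\cube m\to\bits$. The first structural claim to prove is that every $n$-qubit $\ket\psi$ has fidelity $\Omega(1)$ with some $\ket{\phi_{C,s}}$; one proves this via a $2$-design argument showing that for a uniformly random Clifford $C$ the state $C^\dagger\ket\psi$ has near-uniform amplitude magnitudes, so a suitably chosen sign function $s$ recovers a constant fraction of the amplitude. To sharpen the error from $\Omega(1)$ down to $1/\poly(n)$ (resp.\ $2^{-\poly(n)}$), I would apply the base case iteratively to the residual $\ket\psi - \sum_{i<j}\alpha_i\ket{\phi_{C_i,s_i}}$, obtaining a linear combination of $T=\poly(n)$ Clifford-phase states whose residual norm decays geometrically in $T$.

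\textbf{Preparation with one query.} Let the oracle $f\colon\cube{\poly(n)}\to\bits$ multiplex, via a tag register, the stabilizer tableaus of $C_1,\dotsc,C_T$, the truth tables of $s_1,\dotsc,s_T$, and the quantized LCU coefficients $\alpha_j$. A single application of the standard query unitary $\ket{x,y}\mapsto\ket{x,y\oplus f(x)}$ on registers of size $\poly(n)$ can, in one call in superposition over tags, both (i) phase-kick the $(-1)^{s_j(\cdot)}$ onto Hadamard-initialized data registers and (ii) load the descriptions of $C_j$ and $\alpha_j$ into ancillas. A state-independent, universal controlled-Clifford circuit then applies the loaded $C_j$ to the data register, and an LCU step driven by the loaded $\alpha_j$ sums the branches. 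On success (probability $\Omega(1)$) the output is an approximation of $\ket\psi$ with the targeted error; on failure the ancillas are arbitrary, yielding the non-clean, one-query, $1/\poly(n)$-error bound.

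\textbf{Cleaning up with two queries.} For the clean exponentially small bound, run the procedure with $T$ chosen so that the successful branch has error $2^{-\poly(n)}$, and use the second query both to uncompute the ancilla register holding the Clifford-phase data (by running the loading step in reverse) and, after interleaving with a reflection, to boost the success probability to $1-2^{-\poly(n)}$ via a single round of oblivious amplitude amplification. Both of these tasks can be packaged into one additional application of $O_f$ by the same multiplexing trick as in the one-query version.

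\textbf{Main obstacle.} The crux is Step 2: fitting all the classical data---$\poly(n)$-bit Clifford tableaus, $2^n$-entry phase truth tables, and LCU coefficients---into a single application of $O_f$, which returns only one bit of output per computational input. This is handled by exploiting superposition (phase-kickback writes a full sign pattern in one call) and by multiplexing tags across a large input register so that a single unitary call delivers all the distinct pieces of data simultaneously with only polynomial qubit overhead. Making this bookkeeping go through cleanly, and ensuring the universal controlled-Clifford circuit does not blow up either the qubit count or the inner-product guarantees from the structural step, is the delicate part of the argument.
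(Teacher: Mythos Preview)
This theorem is a \emph{citation} of Irani, Natarajan, Nirkhe, Rao and Yuen~\cite{INN+22}; the present paper does not prove it and contains no proof to compare against. What you have sketched is essentially the approach the paper uses for its \emph{own} main theorem (\cref{thm:mi}), which strengthens \cref{thm:inn} by making the non-query part polynomial-\emph{size} and uniform rather than merely polynomial-\emph{space} and nonuniform.

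That said, your sketch has two genuine gaps. First, in the one-query case: after the LCU step the success amplitude is only $\Theta(1)$, so the reduced state on the output register is at constant trace distance from $\ket\psi$, not $1/\poly(n)$. You need a mechanism to drive the failure probability down; the paper does this by running $s=\poly(n)$ parallel copies and merging their queries into a single query via \cref{eq:par-quer}, then outputting the register of the first successful copy. Your proposal omits any such boosting. Second, in the two-query case: ``a single round of oblivious amplitude amplification'' cannot take a constant success amplitude to $1-2^{-\poly(n)}$. One round roughly triples the angle; reaching exponential closeness to $1$ from a generic constant requires either a carefully tuned initial amplitude followed by a \emph{fixed constant} number of rounds (this is how the paper gets ten queries in \cref{thm:tqa} and four in \cref{thm:fqa}) or $\poly(n)$ rounds. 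Packaging ``one round plus uncomputation'' into a single extra query does not suffice, and indeed the paper does not achieve a clean two-query algorithm---its best clean bound is four queries.
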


However \cref{thm:inn} does not give an upper bound on the circuit size required to implement the non-query operations, besides the trivial exponential upper bound, and these circuits are nonuniform. Furthermore in the one-query version of \cref{thm:inn}, the approximation error is inverse polynomial rather than inverse exponential.

\begin{figure}
	\centering
	\begin{tabular}{c|c|c|c|c|c|c}
		Algorithm & Queries & Size & Space & Error & Uniform & Clean \\ \hline \hline
		\mr{Trivial} & 1 & \mr{exp} & \mr{exp} & \mr{1/exp} & \mr{yes} & no \\ \cline{2-2} \cline{7-7}
		& 2 & & & & & yes \\ \hline
		\cref{thm:aar} & poly & poly & poly & 1/exp & yes & yes \\ \hline
		\mr{\cref{thm:inn}} & 1 & \mr{exp} & \mr{poly} & 1/poly & \mr{no} & no \\ \cline{2-2} \cline{5-5} \cline{7-7}
		& 2 & & & 1/exp & & yes \\ \hline
		\mr{\makecell{\cref{thm:mi} \\ (this paper)}} & 1 & \mr{poly} & \mr{poly} & \mr {1/exp} & \mr{yes} & no \\ \cline{2-2} \cline{7-7}
		& 4 & & & & & yes
	\end{tabular}
	\caption{Comparison of state synthesis algorithms.}
	\label{fig:comp}
\end{figure}

\cref{fig:comp} compares these algorithms, of which only ours runs in polynomial time using a constant number of queries. Our result answers questions posed by Aaronson~\cite[Question 3.3.6]{Aar16} and Irani et al.~\cite[Section 7]{INN+22}, who collectively asked whether there exists a polynomial-time one-query state synthesis algorithm with exponentially small error.

\subsection{$\cc{statePSPACE} = \cc{stateQIP}(6)$} \label{sec:iss}

Rosenthal and Yuen~\cite{RY21} introduced a notion of interactive proofs for constructing a state $\ket\psi$, where a polynomial-time quantum verifier interacts with an unbounded-complexity but untrusted prover. At the end of the interaction the verifier accepts or rejects, and when accepting the verifier also outputs a state. The \emph{completeness} condition is that there should exist a prover such that the verifier accepts with probability 1. The \emph{soundness} condition is that for every prover such that the verifier accepts with non-negligible probability, the verifier's output state conditioned on accepting should be an approximation of $\kb\psi$.

Rosenthal and Yuen~\cite{RY21} defined $\cc{stateQIP}$ as the class of state sequences that can be constructed in this way, in analogy with the class $\cc{QIP}$ of decision problems with similar quantum interactive protocols. They also defined $\cc{statePSPACE}$ as a quantum state analogue of $\cc{PSPACE}$. (Formal definitions are given in \cref{sec:scc}.) Then Rosenthal and Yuen proved the inclusion $\cc{statePSPACE} \subseteq \cc{stateQIP}$, and Metger and Yuen~\cite{MY23} proved the converse inclusion $\cc{stateQIP} \subseteq \cc{statePSPACE}$. This establishes the equality $\cc{stateQIP} = \cc{statePSPACE}$, a quantum state analogue of $\cc{QIP} = \cc{PSPACE}$~\cite{jain2011qip}, which is itself a quantum analogue of $\cc{IP} = \cc{PSPACE}$~\cite{lund1992algebraic,shamir1992ip}.

Rosenthal and Yuen's proof that $\cc{statePSPACE} \subseteq \cc{stateQIP}$ goes roughly as follows. Let $\ket\psi$ denote the $n$-qubit state that the verifier would like to construct, and let $f$ be the oracle associated with constructing $\ket\psi$ in \cref{thm:aar}. Tomography of states in $\cc{statePSPACE}$ can be done in $\cc{PSPACE}$ since $\cc{PSPACE} = \cc{BQPSPACE}$~\cite{watrous03complexity}, and inspection of the proof of \cref{thm:aar} reveals that $f$ can be computed in $\cc{PSPACE}$ given query access to the description of $\ket\psi$. Therefore $f$ can be computed in $\PSPACE$, which suggests the following candidate protocol for constructing $\ket\psi$: simulate the algorithm from \cref{thm:aar}, with queries to $f$ answered by running the $\IP = \PSPACE$ protocol in superposition.

However, controlled on an input string $x$ to the $\IP = \PSPACE$ protocol for $f$, there is a garbage state associated with $x$ at the end of the $\IP = \PSPACE$ protocol. The prover is required to help the verifier uncompute this garbage state, so that the verifier's output register is not entangled with the rest of the system. The main challenge is to ensure that the prover uncomputes this garbage state honestly, which the verifier achieves using an intricate sequence of swap tests. Finally the soundness of the protocol is improved by repeating the above procedure polynomially many times, accepting if and only if every instance accepts, and then outputting the output state of a random instance.

Rosenthal and Yuen~\cite{RY21} posed the question of whether there exists a $\cc{statePSPACE} \subseteq \cc{stateQIP}$ protocol with a constant number of rounds of interaction. Since the $\PSPACE \subseteq \cc{QIP}$ protocol can be parallelized to three total messages~\cite{watrous03pspace}, the main obstacle is that the algorithm from \cref{thm:aar} makes a super-constant number of queries. A second, more subtle obstacle is that Rosenthal and Yuen's proof of correctness of the above soundness amplification procedure breaks down if the instances are run in parallel.

Using the one-query version of \cref{thm:mi} we prove the following, where $\cc{stateQIP}(6)$ is defined similarly to $\cc{stateQIP}$ but for protocols with six total messages:

\begin{thm} \label{thm:spq}
	$\cc{statePSPACE} \subseteq \cc{stateQIP}(6)$.
\end{thm}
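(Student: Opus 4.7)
My plan is to instantiate the one-query version of Theorem 1.1 inside an interactive proof, with the single oracle query answered coherently by the prover via the $\QIP(3) = \PSPACE$ protocol. Given a target state sequence $\ket{\psi_n} \in \cc{statePSPACE}$, let $f$ denote the classical oracle associated with $\ket{\psi_n}$ by Theorem 1.1, so that some uniform polynomial-size one-query circuit $C_n^f$ constructs $\ket{\psi_n}$ to exponentially small error. As already observed by Rosenthal and Yuen in their proof of $\cc{statePSPACE} \subseteq \cc{stateQIP}$, such an $f$ can be computed in $\PSPACE$: tomography of states in $\cc{statePSPACE}$ lies in $\cc{BQPSPACE} = \PSPACE$, and inspection of the proof of Theorem 1.1 shows that $f$ can be extracted in $\PSPACE$ from the description of $\ket{\psi_n}$. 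Since $\PSPACE = \IP \subseteq \QIP(3)$ with perfect completeness and exponentially small soundness error in parallelizable form, the prover can implement the single query coherently using three verifier-prover messages.

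Concretely, the protocol has two three-message phases. In the first, the verifier runs the part of $C_n$ preceding the query locally, sends the query register together with auxiliary registers to the prover, and executes three messages of the $\QIP(3)$ protocol for $f$ in superposition over the query input $x$; this leaves the verifier with an answer register XOR'd with $f(x)$, together with transcript registers that are generically entangled with $x$. In the second phase, the verifier applies the remainder of $C_n$ locally to produce a tentative output state, and then uses three more messages to run the adjoint of the first phase so that the prover helps uncompute the transcript registers. Six messages suffice in total, matching $\cc{stateQIP}(6)$.

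The main obstacle is soundness: a malicious prover can deviate during either the forward or reverse phase, and the principal danger is that the transcript registers remain entangled with the verifier's output register after the alleged uncomputation, corrupting the output state even when the verifier accepts. To address this I would adopt a simplified form of Rosenthal and Yuen's swap-test-based verification. Because here there is only a \emph{single} query to certify, the elaborate parallel-repetition amplification step of their argument becomes unnecessary, sidestepping the parallel-repetition obstacle that they explicitly identified as blocking a constant-round protocol. Specifically, the verifier, with a suitable constant probability, substitutes a test execution that performs a swap test between the pre- and post-protocol contents of the transcript registers (or equivalently a Pauli twirl / cut-and-choose style consistency check between a real and mirrored execution), accepting only if the test passes. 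Soundness of the underlying $\QIP(3)$ protocol together with the behavior of the swap test then implies, via a hybrid argument, that any prover inducing non-negligible acceptance probability produces an output register close to $\ket{\psi_n}$.

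Completeness with an honest prover is immediate from the completeness of the $\PSPACE \subseteq \QIP(3)$ protocol and the exponentially small error guarantee of $C_n^f$. I expect the soundness analysis to be the most delicate step, since one must carefully account for the non-clean nature of the one-query state synthesis algorithm (the verifier's ancillae end only approximately unentangled with the output), and combine the resulting error with the swap-test soundness bound without blowing up the parameters.
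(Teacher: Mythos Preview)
Your proposal has a genuine gap in the soundness argument. You assert that ``because here there is only a single query to certify, the elaborate parallel-repetition amplification step of their argument becomes unnecessary,'' but this conflates two distinct obstacles in Rosenthal and Yuen's original proof: (a) the $O(n)$ adaptive queries of Aaronson's algorithm, which force sequential rounds, and (b) the sequential repetition of the entire protocol needed to drive the soundness parameter $\delta$ down to $1/\poly(n)$. Substituting the one-query algorithm removes obstacle (a), but obstacle (b) persists. The definition of $\stateqip(6)$ requires that for \emph{every} inverse polynomial $\delta$ there exist a protocol whose output is $\delta$-close conditioned on acceptance; a single swap test or cut-and-choose check performed with ``a suitable constant probability'' yields only constant closeness, and you give no mechanism to amplify this within six messages.

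The paper resolves this with an idea absent from your sketch. The verifier prepares $t = \Theta(1/(\eps\delta^2))$ \emph{parallel} copies of the pre-query state $\ket\phi$, runs the $\QIP(3)$ protocol for the parallelized function $F(x_1,\dots,x_t) = (f(x_1),\dots,f(x_t))$ once in superposition (four messages), applies the phase to a single uniformly random copy $k$, and then uses just two further messages to have the prover disentangle the transcript. The soundness check is not a swap test at all but the projective measurement $(\phi, I-\phi)$ on each of the $t-1$ untouched copies. The analysis hinges on the inequality $\sum_k \bigotimes_{j\neq k} \phi_{\reg A_j} \otimes (I-\phi)_{\reg A_k} \preceq I$, which yields $\tr{\tilde\tau (I - CD\phi D C^\dagger)_{\reg O}} \le 1/(\eps t)$ and hence the required $\delta$-closeness. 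This parallel-copy trick is precisely the parallelizable replacement for Rosenthal--Yuen's sequential amplification that you would need to supply.

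A secondary issue: your message accounting is optimistic. Running the $\QIP(3)$ protocol for $f$ in superposition costs four messages (the verifier must first send the query register), and ``running the adjoint of the first phase'' is naively another four, for eight total. The paper avoids this by \emph{not} reversing the $\QIP(3)$ interaction; the two final messages merely hand the garbage registers to the prover and receive them back, with correctness of the cleanup enforced by the projective test on the other $t-1$ copies rather than by any property of the reversed protocol.
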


\cref{thm:spq} mostly follows by substituting \cref{thm:mi} for \cref{thm:aar} in Rosenthal and Yuen's~\cite{RY21} proof that $\cc{statePSPACE} \subseteq \cc{stateQIP}$. However, we present a self-contained proof of \cref{thm:spq} for three reasons. First, it is necessary to prove that a soundness amplification procedure similar to the one used by Rosenthal and Yuen can be parallelized. Second, we can substitute a single, easily defined projective measurement for the sequence of swap tests in Rosenthal and Yuen's proof, using the fact that the queries in our one-query state synthesis algorithm are (trivially) non-adaptive. This significantly simplifies the description of the verifier and the proof of soundness. Third, in their proof of the converse inclusion $\cc{stateQIP} \subseteq \cc{statePSPACE}$, Metger and Yuen~\cite{MY23} used definitions of $\cc{statePSPACE}$ and $\cc{stateQIP}$ slightly different than those of Rosenthal and Yuen. We adopt Metger and Yuen's definitions in our proof of \cref{thm:spq}, implying the equality $\cc{statePSPACE} = \cc{stateQIP} = \cc{stateQIP}(6)$.

\subsection{Barrier to \qaczf lower bounds for constructing explicit states} \label{sec:bar}

In classical circuit complexity it is notoriously difficult to prove that an explicit Boolean function is hard for a given circuit class. The same holds for quantum circuit complexity, since quantum circuits can simulate Boolean circuits. However this does not immediately imply that it should be difficult to prove quantum circuit lower bounds for \emph{quantum} tasks with no classical analogue, such as constructing a quantum state. And in fact, Jia and Wolf~\cite{JW23} found explicit states that require exponential circuit size to \emph{exactly} construct.

Nevertheless, Aaronson~\cite{Aar16} observed a barrier to finding explicit states that cannot be \emph{approximately} constructed by $\cc{BQP/poly}$ circuits (i.e.\ nonuniform polynomial-size quantum circuits) to within exponentially small error. Specifically, let $\ket{\psi_n}$ be an $n$-qubit state for all $n$ and let $f_n$ be the oracle associated with approximately constructing $\ket{\psi_n}$ in \cref{thm:aar}. If $(f_n)_n$ can be computed in $\cc{BQP/poly}$, then plugging these circuits for $(f_n)_n$ into the algorithm from \cref{thm:aar} yields a sequence of $\cc{BQP/poly}$ circuits for approximately constructing $(\ket{\psi_n})_n$. Conversely, if there are no $\cc{BQP/poly}$ circuits for approximately constructing $(\ket{\psi_n})_n$ then there are no $\cc{BQP/poly}$ circuits for computing $(f_n)_n$. This would be a breakthrough result, since finding an explicit function that is not in $\cc{BQP/poly}$ (or even $\cc{P/poly}$) is a longstanding open problem.

However this still leaves open the possibility of finding explicit states that cannot be approximately constructed by $\mc C$ circuits, for some nonuniform quantum circuit class $\mc C$ that (as far as we know) is weaker than $\cc{BQP/poly}$. One such class is \qaczf, a quantum analogue of \acz introduced by Green, Homer, Moore and Pollett~\cite{Gre+02} which we define in \cref{sec:prelim}. Analogously to \acz, one motivation for proving lower bounds against \qaczf is that it is contained in \qnco (i.e.\ log-depth circuits with one- and two-qubit gates), and another motivation is that \qaczf is one of the weakest quantum circuit classes that is natural to define. The ``next weakest" class \qacz~\cite{Gre+02} is like \qaczf except without ``fanout gates" that make copies of a classical bit, and the even weaker class $\cc{QNC^0}$ is easy to prove lower bounds against by light cone arguments.

The non-query operations from \cref{thm:mi} can be efficiently implemented in \qaczf, so we can rule out this possibility by reasoning similar to that in Aaronson's barrier:

\begin{obs} \label{obs:bar}
	\qaczf lower bounds for cleanly constructing explicit states (to within exponentially small error) would imply \qaczf lower bounds for computing explicit Boolean functions.
\end{obs}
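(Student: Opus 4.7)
The plan is to run the same contrapositive argument that Aaronson used to derive his \cc{BQP/poly} barrier from \cref{thm:aar}, but starting from the clean four-query version of \cref{thm:mi} and replacing \cc{BQP/poly} by \qaczf throughout. Fix an explicit sequence $(\ket{\psi_n})_n$ of $n$-qubit states, and let $(f_n)_n$ denote the sequence of classical oracles supplied by \cref{thm:mi}, so that the uniform circuit $C_n$ of size $\poly(n)$ makes four queries to $f_n$ and cleanly constructs an exponentially close approximation of $\ket{\psi_n}$. Because $(\ket{\psi_n})_n$ is explicit and the oracle in \cref{thm:mi} is determined by $\ket{\psi_n}$ via a fixed rule, the sequence $(f_n)_n$ is an explicit sequence of boolean functions.

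Next I would appeal to the remark immediately preceding the observation, which asserts that the non-query part of $C_n$ can be implemented in \qaczf. Suppose for contradiction that $(f_n)_n$ is computed by a \qaczf circuit family $(D_n)_n$. Then replace each of the four query gates in $C_n$ by a copy of $D_n$: since \qaczf has constant depth, composing a constant-depth circuit with $O(1)$ constant-depth sub-circuits again yields a constant-depth polynomial-size circuit with unbounded fanout, i.e.\ a \qaczf circuit. The resulting oracle-free circuit $C_n'$ cleanly constructs $\ket{\psi_n}$ to within exponentially small error. Taking the contrapositive yields exactly the statement of \cref{obs:bar}: a \qaczf lower bound for cleanly constructing $(\ket{\psi_n})_n$ forces a \qaczf lower bound for the explicit boolean sequence $(f_n)_n$.

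The only real step that requires care is the substitution, and the main obstacle is the need to use the \emph{clean} version of \cref{thm:mi}. If we started from a non-clean synthesis algorithm, the residual garbage on the ancillae would in general be entangled with the output, and substituting the oracle by $D_n$ would not obviously preserve the output state; using the clean version ensures that the oracle only needs to be invoked as a reversible unitary on computational basis states, so the substitution $f_n \mapsto D_n$ goes through modularly. Everything else (uniformity, constant query count absorbed into constant depth, and exponentially small error) is immediate from \cref{thm:mi}.
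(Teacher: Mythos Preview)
Your approach is essentially the same as the paper's: both prove the contrapositive using the clean four-query version of \cref{thm:mi}, argue that the oracle sequence $(f_n)_n$ is explicit (the paper phrases this as $(f_n)_n \in \cc{EXP}$, using that the description of $\ket{\psi_n}$ is computable in exponential time and that $f_n$ is then computable in polynomial space given that description), and observe that substituting a \qaczf circuit for the oracle yields a \qaczf circuit constructing $\ket{\psi_n}$.

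Your final paragraph's rationale for why the clean version is needed is off, however. Whether the synthesis algorithm is clean has nothing to do with whether the substitution $f_n \mapsto D_n$ goes through: in both the clean and non-clean versions the oracle is invoked as the same reversible unitary $U_{f_n}$ (or $V_{f_n}$), and the potential obstruction to substitution is whether $D_n$ itself leaves input-dependent garbage or has error---a concern orthogonal to the cleanliness of the overall synthesis. The actual reason to use the clean version is simply that the observation is stated for \emph{clean} construction lower bounds, so the contradiction must produce a clean \qaczf construction of $\ket{\psi_n}$; the one-query non-clean version would only yield a non-clean construction, which would not contradict such a lower bound.
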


We state \cref{obs:bar} more formally in \cref{sec:qacfz}. It is known that $\cc{TC^0} \subseteq \cc{QAC_f^0}$~\cite{HS05,TT16}, where \tcz denotes the class of functions computable by non-uniform polynomial-size Boolean circuits with NOT gates and unbounded-fanin AND, OR, and MAJORITY gates. It is an open problem to prove superpolynomial-size $\cc{TC^0}$ lower bounds for an explicit Boolean function, so \cref{obs:bar} implies a barrier to proving superpolynomial-size \qaczf lower bounds for approximately constructing explicit states.

\begin{rmk*}
	Another consequence of the fact that the non-query operations in \cref{thm:mi} can be implemented in \qaczf is that, by simulating the queries with CNF or DNF formulas, exponentially large \qaczf circuits can approximately construct any state. This was previously proved by the author~\cite{R21b}, in fact for exact constructions.
\end{rmk*}

\subsection{Approximately constructing arbitrary states} \label{sec:cub}

Every $n$-qubit pure state can be cleanly, exactly constructed with $O(2^n)$ one- and two-qubit gates~\cite{Gui+23,Sun+21,YZ23,ZLY22}. This upper bound is tight, because by a dimension-counting argument a Haar random state almost surely requires $\Omega(2^n)$ one- and two-qubit gates to construct \emph{exactly}. However we show that circuits of size $o(2^n)$ can \emph{approximately} construct any $n$-qubit state, moreover with gates from an appropriate finite gate set:

\begin{restatable}{thm}{rubs} \label{thm:ubs}
	There exists a finite gate set $\mc G$ such that for all $n \in \N, \eps \ge \exp(-\poly(n))$ and $n$-qubit states $\ket\psi$, there exists a circuit $C$ consisting of $O(2^n \log(1/\eps) / n)$ gates from $\mc G$ such that $\Norm{C\ket\zs - \ket\psi \ket\zs} \le \eps$.
\end{restatable}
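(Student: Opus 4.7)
The plan is to apply the clean (four-query) version of \cref{thm:mi}, parameterized to achieve error~$\eps$, and then replace each classical oracle query by an explicit reversible subcircuit of Lupanov-optimal size. Inspection of the proof sketch of \cref{thm:mi} shows that the algorithm approximates $\ket\psi$ by a linear combination $\sum_{j=0}^{k-1} c_j\ket{\phi_j}$ of Clifford-times-phase states in which each additional term reduces the residual by a constant factor, so $k = O(\log(1/\eps))$ terms suffice to drive the error below~$\eps$. The non-query operations (controlled Cliffords, LCU, amplitude amplification, and the merging of parallel queries into one) can then be implemented with $\poly(n,\log(1/\eps))$ gates over the Clifford+$T$ gate set, and the sign patterns of the $\ket{\phi_j}$ are packaged into a single Boolean oracle $f\colon \bits^m \to \bits$ queried on inputs $(j,x)$ with $m = n + \lceil\log k\rceil = n + O(\log\log(1/\eps))$.

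The second step is to realize $f$ as a concrete reversible subcircuit. By the reversible analogue of Lupanov's theorem, every Boolean function on $m$ variables is computed by a reversible circuit of size $O(2^m/m)$ over $\{\text{Toffoli},\text{NOT}\}$ with $\poly(m)$ ancillae returned to $\ket\zs$; this follows from the standard truth-table decomposition together with the copy-and-uncompute trick. For our $m$ this gives $2^m/m = \Theta(2^n\log(1/\eps)/n)$, since $\log\log(1/\eps) \ll n$ whenever $\eps \ge \exp(-\poly(n))$. Summing the at most four such subcircuits with the $\poly(n,\log(1/\eps))$ non-query gates, and noting that the latter is absorbed by the dominant term in our regime, yields a total of $O(2^n\log(1/\eps)/n)$ gates from a single fixed finite set $\mc G$ (say Clifford+$T$ augmented with Toffoli). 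Because \cref{thm:mi} is clean, all ancillae (both those of the state-synthesis circuit and those of the Lupanov subcircuit) end in $\ket\zs$, so the resulting $C$ satisfies $\Norm{C\ket\zs - \ket\psi\ket\zs} \le \eps$.

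The main obstacle I anticipate is pinning down the reversible Lupanov bound over a finite gate set with cleanly uncomputed ancillae: the precise statement I need (worst-case size $O(2^m/m)$, constant-factor overhead for reversibility, all ancillae returned to $\ket\zs$) is essentially folklore but should either be cited cleanly or written out, since a naive reversibilization can introduce polylogarithmic slack that would spoil the $1/n$ saving. A subsidiary book-keeping task is to explicitly record the $\eps$-parameterized strengthening of \cref{thm:mi}---which the iterative sum-of-Clifford-times-phase construction in the proof sketch supplies transparently by taking $k = O(\log(1/\eps))$---and to verify that the $\poly(n,\log(1/\eps))$ non-query overhead really is dominated by $O(2^n\log(1/\eps)/n)$ throughout the allowed range $\eps \ge \exp(-\poly(n))$.
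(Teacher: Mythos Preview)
Your overall strategy---instantiate a clean state-synthesis algorithm with error parameter $\eps$, then replace each oracle call by a Lupanov-sized reversible circuit---is exactly the paper's. The gap is in the choice of clean algorithm and the oracle input-size accounting that follows from it.

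You invoke the four-query version of \cref{thm:mi} and assert that its oracle takes inputs $(j,x)$ of length $m = n + O(\log\log(1/\eps))$. That input size is correct for the \emph{basic} oracle of \cref{lem:help}, but the four-query algorithm (\cref{thm:fqa}) boosts the success probability via a hybrid of amplitude amplification \emph{and parallel repetition}: it runs $s = \Theta(\log(1/\eps))$ copies of $A^f$ in parallel and merges their queries via \cref{eq:par-quer}. Each of the four merged queries therefore acts on $\Theta(s \cdot m)$ input bits, not $m$. If you instead unmerge and implement each basic query separately, you have $\Theta(\log(1/\eps))$ Lupanov subcircuits rather than four, yielding $O(2^n \log^2(1/\eps)/n)$ gates---off by a $\log(1/\eps)$ factor from the claimed (and, by \cref{thm:lbs}, optimal) bound.

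The fix is exactly what the paper does: use the ten-query algorithm (\cref{thm:tqa}) instead. That algorithm uses pure amplitude amplification with no parallel repetition, so each of its ten queries is to the basic oracle with $n + \log\log(1/\eps) + O(1)$ input bits, and the paper notes explicitly (just before \cref{thm:tqa}) that this is the reason to prefer it here. One further point of care: the non-query gates in \cref{lem:help} include arbitrary one-qubit rotations (the tensor-product gate $L$ and the gate $G$), so you need the Solovay--Kitaev theorem to bring them into your finite gate set; the paper invokes it explicitly, and your proposal should too.
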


We prove \cref{thm:ubs} using Lupanov's~\cite{Lup58} $O(2^m/m)$-size Boolean circuit for an arbitrary function $f: \cube m \to \bits$, applied to the oracle from the clean version of \cref{thm:mi}. The portion of the circuit corresponding to the non-query operations is converted to a circuit over $\mc G$ using the Solovay--Kitaev theorem~\cite{BG21,DN06}. We require $\eps \ge \exp(-\poly(n))$ for convenience, but our proof technique implies a similar statement for smaller $\eps$ as well.

The circuit from \cref{thm:ubs} uses exponentially many ancillae. Some ancillae are necessary, because Nielsen and Chuang~\cite[Section 4.5.4]{NC10} proved by a counting argument that without ancillae, for every finite gate set $\mc G$ there exist states that require $\Omega\Paren{2^n \log(1/\eps) / \log n}$ gates from $\mc G$ to construct to within error $\eps$. We also prove an analogue of Nielsen and Chuang's lower bound for non-clean constructions \emph{with} ancillae, by a similar counting argument:

\begin{restatable}{thm}{rlbs} \label{thm:lbs}
	Let $\mc G$ be a finite gate set. Then for all $n \in \N$ and $1/4 \ge \eps \ge \exp(-\poly(n))$, there exists an $n$-qubit state $\ket\psi$ such that circuits $C$ over $\mc G$ require $\Omega\Paren{2^n \log(1/\eps) / n}$ gates in order for the reduced state $\rho$ on the first $n$ qubits of $C \ket\zs$ to satisfy $\td\Paren{\rho, \kb\psi} \le \eps$.
\end{restatable}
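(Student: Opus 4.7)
The plan is a Haar-measure counting argument in the spirit of Nielsen--Chuang~\cite[Section 4.5.4]{NC10}, adapted to allow ancillae. For each circuit $C$ over $\mc G$ of size at most $s$, let $\rho_C$ denote the reduced state on the first $n$ qubits of $C\ket\zs$ and let
\[
	S_C = \bigbrace{\ket\psi \in \C^{2^n} : \td(\rho_C, \kb\psi) \le \eps}.
\]
I would upper bound the Haar measure of $\bigcup_C S_C$ on $\mathbb{CP}^{2^n-1}$ and show that it is strictly less than $1$ once $s$ is below the claimed threshold, which produces a pure state $\ket\psi$ outside $\bigcup_C S_C$.

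The measure of each $S_C$ is controlled by the triangle inequality: any $\ket\psi, \ket{\psi'} \in S_C$ satisfy $\td(\kb\psi, \kb{\psi'}) \le 2\eps$, equivalently $|\ip\psi{\psi'}|^2 \ge 1 - 4\eps^2$. A standard integration over $\mathbb{CP}^{2^n - 1}$ then yields $\mu(S_C) \le O(\eps)^{2^{n+1}-2}$. For the number of circuits, the key observation is that a circuit with $s$ gates of bounded arity touches at most $O(s)$ qubits, so without loss of generality it has $O(s)$ qubits regardless of how many ancillae are nominally available. The number of choices per gate is thus $O(|\mc G| \cdot s^{O(1)})$, so the total number of circuits of size at most $s$ is $2^{O(s \log s)}$. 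A union bound gives
\[
	\mu\Paren{\bigcup_C S_C} \le 2^{O(s\log s)} \cdot O(\eps)^{2^{n+1}-2},
\]
which is less than $1$ as long as $s\log s \le c \cdot 2^n \log(1/\eps)$ for a small enough constant $c > 0$.

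The final step converts this into the stated bound by case analysis on $s$. If $s \le 2^n \cdot \poly(n)$, then $\log s = O(n)$ and $s \log s = \Omega(2^n \log(1/\eps))$ rearranges to $s = \Omega(2^n \log(1/\eps)/n)$. If instead $s \ge 2^n \cdot \poly(n)$, then the hypothesis $\eps \ge \exp(-\poly(n))$, i.e.\ $\log(1/\eps) \le \poly(n)$, already forces $s \ge 2^n \cdot \poly(n) \ge \Omega(2^n \log(1/\eps)/n)$ directly. Either way, the claimed lower bound holds.

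The main subtlety is the transition from a trace-distance condition between a \emph{mixed} state $\rho_C$ and a pure state $\kb\psi$ to a Fubini--Study ball on pure states, and this is dispatched in one line by the triangle inequality. The second place to be careful is that a naive count of circuits blows up with the ancilla count; the ``$O(s)$ qubits suffice'' reduction is what keeps the count at $2^{O(s \log s)}$, and the resulting $\log s$ factor in the denominator is then absorbed into $n$ using $\log(1/\eps) \le \poly(n)$, producing the stated $1/n$ rather than the $1/\log n$ scaling seen in the no-ancilla lower bound of Nielsen--Chuang.
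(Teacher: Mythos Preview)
Your proposal is correct and follows essentially the same argument as the paper: both use the triangle inequality to reduce the mixed-state ball $S_C$ to a pure-state ball of radius $2\eps$, bound its measure by $(O(\eps))^{2^{n+1}-2}$, count circuits as $2^{O(s\log s)}$ via the ``at most $O(s)$ qubits'' observation, and then use $\log(1/\eps)\le\poly(n)$ to convert $\log s$ into $O(n)$. The only cosmetic differences are that the paper works on the real sphere $S_{2^{n+1}-1}$ with an explicit volume integration rather than on $\mathbb{CP}^{2^n-1}$, and handles the $\log s\to O(n)$ step directly rather than via your two-case split.
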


To properly compare \cref{thm:ubs,thm:lbs} it is necessary to convert the error bound in \cref{thm:ubs} from 2-norm error to trace distance error. Identifying a pure state $\ket\phi$ with the density matrix $\kb\phi$, the trace distance between two pure states is at most the 2-norm distance between those states (see \cref{eq:tdps}), so the conclusion of \cref{thm:ubs} implies that the trace distance between $\ket\psi \ket\zs$ and $C \ket\zs$ is at most $\eps$. Therefore the trace distance between $\ket\psi$ and the reduced state on the first $n$ qubits of $C \ket\zs$ is at most $\eps$, so the lower bound from \cref{thm:lbs} matches the upper bound from \cref{thm:ubs}.

It is an open problem whether the $\eps \ge 2^{-O(n)}$ case of \cref{thm:lbs} generalizes to circuits consisting of arbitrary one- and two-qubit gates. (The $\eps \le 2^{-\omega(n)}$ case cannot admit such a generalization, by the previously mentioned $O(2^n)$ upper bounds for exact constructions with arbitrary one- and two-qubit gates.) A slightly weaker lower bound of $2^n / \poly(n)$ holds for such circuits, because by the Solovay--Kitaev theorem~\cite{BG21,DN06} circuits consisting of $2^n n^{-\omega(1)}$ one- and two-qubit gates can be simulated to within exponentially small error by circuits consisting of $2^n n^{-\omega(1)}$ gates from a universal gate set, and therefore by \cref{thm:lbs} cannot construct arbitrary $n$-qubit states to within error $\eps$.\footnote{A tighter analysis can be obtained using Harrow, Recht and Chuang's~\cite{HRC02} version of the Solovay--Kitaev theorem, which says that for certain finite gate sets, any unitary on a fixed number of qubits can be approximated to within error $\eps$ in the operator 2-norm by $O(\log(1/\eps))$ (rather than $\poly \log(1/\eps)$) gates.}

\subsection{Organization} \label{sec:org}

\cref{sec:prelim} is the preliminaries. In \cref{sec:post} we prove a weaker variant of the one-query version of \cref{thm:mi}, where the circuit postselects on a certain measurement outcome that occurs with constant probability. By reducing to this result in different ways, in \cref{sec:msa} we prove the one- and four-query versions of \cref{thm:mi}. In \cref{sec:scc} we define $\cc{statePSPACE}$ and $\cc{stateQIP}(6)$ and introduce other related background, in preparation for the proof in \cref{sec:psq} that $\cc{statePSPACE} \subseteq \cc{stateQIP}(6)$ (i.e.\ \cref{thm:spq}). In \cref{sec:qacfz} we state and prove \cref{obs:bar} more formally, and in \cref{sec:states} we prove \cref{thm:ubs,thm:lbs}.

\section{Preliminaries} \label{sec:prelim}
Logarithms in this paper are base 2. We write $(x_n)_n$ to denote the infinite sequence $(x_1, x_2, \dotsc)$ for some class of objects $x_n$.
\paragraph*{Space-bounded computation}

All Turing machines in this paper have a read-only input tape, read-write work tapes, and a write-only output tape. Let $\cube*$ denote the set of finite strings over $\bits$, and for $x \in \cube*$ let $|x|$ denote the length of $x$. For $s: \N \to \N$ a Turing machine $M$ uses space $s$ if for all $x \in \cube*$, at most $s(|x|)$ cells are used on the work tapes in the computation of $M(x)$. If $M$ uses space $s$ and halts then $M$ uses time $O(2^s)$, so $|M(x)| \le O\Paren{2^{s(|x|)}}$ for all $x$.

\paragraph*{Description of a pure state}

For $\ket\psi = \sum_{x \in \cube n} \alpha_x \ket x$ and $\eps>0$, we define an \emph{$\eps$-precision description of $\ket\psi$} to be a tuple $\Paren{\tilde\alpha_x}_{x \in \cube n}$ of complex numbers specified exactly in binary such that $\Mag{\tilde\alpha_x - \alpha_x} \le \eps$ for all $x$. We will often leave $\eps$ implicit and simply refer to ``the description of $\ket\psi$", by which we mean an $\exp(-p(n))$-precision description of $\ket\psi$ where $p$ is a polynomial that may be chosen to be as large as desired; in this case $\poly(n)$ bits of precision are needed to specify $\tilde\alpha_x$.

\paragraph*{Quantum information theory}

A \emph{register} $\reg R$ is a named finite-dimensional complex Hilbert space. If $\reg A, \reg B, \reg C$ are registers, for example, then the concatenation $\reg{ABC}$ denotes the tensor product of the associated Hilbert spaces. For a linear transformation $L$ and register $\reg R$, we write $L_{\reg R}$ to indicate that $L$ acts on $\reg R$, and similarly we write $\rho_{\reg R}$ to indicate that a state $\rho$ is in the register $\reg R$. We write $\tr\cdot$ to denote trace, $\Tr_{\reg R}(\cdot)$ to denote the partial trace over a register $\reg R$, and $\trg n \cdot$ to denote the partial trace over all but the first $n$ qubits. We write $I_n$ to denote the $n$-qubit identity transformation, or $I$ when the number of qubits is implicit. We write $\ket+ = \frac {\ket0 + \ket1} {\sqrt 2}, \ket- = \frac {\ket0 - \ket1} {\sqrt2}$ to denote the Hadamard basis states, and $\crl U = \kb0 \otimes I + \kb1 \otimes U$ to denote controlled-$U$. For a (not necessarily normalized) vector $\ket\psi$ we write $\psi = \kb\psi$. We write $\norm\cdot$ to denote the vector 2-norm.

Let $\norm{M}_1 = \tr{|M|}$ denote the trace norm of a matrix $M$, and let $\td(\rho, \sigma) = \frac12 \norm{\rho - \sigma}_1$ denote the trace distance between mixed states $\rho$ and $\sigma$. We use the fact that
\begin{equation} \label{eq:tdc}
	\td\Paren{\Phi(\rho), \Phi(\sigma)} \le \td(\rho, \sigma)
\end{equation}
for all channels $\Phi$ and states $\rho, \sigma$. We also use the following special case of the Fuchs-van de Graaf inequality: if $\rho$ is a mixed state and $\ket\psi$ is a pure state then
\begin{equation} \label{eq:fvdg}
	\td(\rho, \psi) \le \sqrt{1 - \tr{\rho \psi}} = \sqrt{\tr{\rho(I - \psi)}}.
\end{equation}
(See e.g.\ Nielsen and Chuang~\cite[Chapter 9]{NC10} for proofs of \cref{eq:tdc,eq:fvdg}.) In particular, if $\ket\psi$ and $\ket\phi$ are pure states then
\begin{equation} \label{eq:tdps}
	\td\Paren{\psi, \phi}
	\le \sqrt{1 - \Mag{\ip \psi \phi}^2}
	= \sqrt{\Paren{1 + \Mag{\ip \psi \phi}} \Paren{1 - \Mag{\ip \psi \phi}}}
	\le \sqrt{2\Paren{1 - \mrm{Re}\Paren{\ip \psi\phi}}} \\
	= \Norm{\ket\psi - \ket\phi}.
\end{equation}

\paragraph*{Quantum query models}

By a \emph{quantum circuit making $k$ queries to an $n$-qubit quantum oracle and its inverse}, we mean a circuit of the form $C = C_k Q_k C_{k-1} Q_{k-1} \dotsb C_0$ where each $C_j$ is a unitary and each $Q_j$ is a placeholder for either a ``forward" or ``backward" query. For an $n$-qubit unitary $A$, by $C^A$ we mean the unitary defined by substituting $A$ and $\adj A$ respectively for the forward and backward queries in $C$. Claims about the quantum circuit complexity of $C$ are in reference to the circuit $C_k C_{k-1} \dotsb C_0$ defined by removing the queries from $C$.

The following result of the author~\cite{R21b} says that if the task of constructing a state $\ket\psi$ reduces to that of constructing a state $\ket\phi$, then the task of \emph{approximately} constructing $\ket\psi$ reduces to that of \emph{approximately} constructing $\ket\phi$:

\begin{lem}[{special case\protect\footnote{Specifically, the case where their $J$ equals our $\ket\psi$, their $A$ equals our $\kb0 \otimes I_n + \kb1 \otimes \ketbra \phi {0^n}$, their $U$ equals our $\crl U$, their $B$ equals our $\kb0 \otimes I_n + \kb1 \otimes V \kb{0^n}$, and their $V$ equals our $\crl V$.} of \cite[Lemma 3.3]{R21b}}] \label{lem:err}
	Let $C$ be an $m$-qubit quantum circuit making $k$ queries to an $(n+1)$-qubit quantum oracle and its inverse, and let $\ket\psi$ be an $m$-qubit state. Assume there exists an $n$-qubit state $\ket\phi$ such that for all $n$-qubit unitaries $U$ satisfying $U \ket{0^n} = \ket\phi$, it holds that $C^{\crl U} \ket{0^m} = \ket\psi$. Then for all $n$-qubit unitaries $V$ it holds that $\Norm{C^{\crl V} \ket{0^m} - \ket\psi} \le \sqrt 2 \cdot k \cdot \norm{\ket\phi - V \ket{0^n}}$.
\end{lem}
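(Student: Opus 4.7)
The plan is a BBBV-style hybrid interpolation between $C^{\crl V}$ and $C^{\crl U}$, where $U$ is an $n$-qubit unitary satisfying $U\ket{0^n} = \ket\phi$ (so that the hypothesis gives $C^{\crl U}\ket{0^a} = \ket\psi$) chosen to be as close to $V$ in operator norm as possible. The freedom to pick $U$ is precisely the point of demanding the hypothesis hold uniformly over such unitaries, and exploiting it is what converts the exact guarantee $C^{\crl U}\ket{0^a} = \ket\psi$ into a robust one.

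For the choice of $U$, I would let $R$ be the $n$-qubit unitary supported on the two-dimensional subspace spanned by $\ket{0^n}$ and $V^\dagger\ket\phi$, acting on that plane as a rotation sending $\ket{0^n} \mapsto V^\dagger\ket\phi$ and acting as the identity on the orthogonal complement, with the residual phase inside the plane chosen to minimise $\norm{R-I}_{op}$. Setting $U := V R$ gives $U\ket{0^n} = V V^\dagger\ket\phi = \ket\phi$. A short $2\times 2$ calculation using the identity $(R-I)^\dagger(R-I) = 2I - R - R^\dagger$ (so that singular values of $R-I$ are determined by real parts of eigenvalues of $R$), together with the elementary inequality $|\sin\gamma|\sqrt{1-r^2} \le 1 - r\cos\gamma$ (which reduces to $(r-\cos\gamma)^2 \ge 0$, where $re^{i\gamma} := \ip{0^n}{V^\dagger\phi}$ and $r = |\ip{0^n}{V^\dagger\phi}|$), yields
\[
  \norm{V - U}_{op} \;=\; \norm{R - I}_{op} \;\le\; \sqrt 2\,\norm{\ket\phi - V\ket{0^n}}.
\]

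For the hybrid argument, I would write $C = C_k Q_k \cdots Q_1 C_0$ and let $H_j$ be the operator obtained by replacing $Q_1,\dots,Q_j$ with $\crl V$ (or $(\crl V)^\dagger$, according to whether $Q_i$ is a forward or backward query) and $Q_{j+1},\dots,Q_k$ with the corresponding $\crl U$-queries, so that $H_0 = C^{\crl U}$ and $H_k = C^{\crl V}$. Since inserting either $\crl V - \crl U$ or its adjoint between unitary factors has operator norm at most $\norm{\crl V - \crl U}_{op} = \norm{V - U}_{op}$, telescoping gives
\[
  \norm{C^{\crl V}\ket{0^a} - \ket\psi}
  \;\le\; \sum_{j=1}^{k} \norm{H_j - H_{j-1}}_{op}
  \;\le\; k\cdot \norm{V-U}_{op}
  \;\le\; \sqrt 2\, k\, \norm{\ket\phi - V\ket{0^n}}.
\]

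The main obstacle is isolating the right $U$: the naive attempt to set $U$ equal to $V$ on $(\ket{0^n})^\perp$ and $\ket{0^n}\mapsto\ket\phi$ generally fails to be unitary, which forces the two-dimensional ``rotate and match phases'' construction above; the factor of $\sqrt 2$ is exactly the worst-case cost of the phase freedom. Once the operator-norm estimate on $V-U$ is in hand, the remainder is the standard hybrid argument, insensitive to whether individual queries are forward or backward.
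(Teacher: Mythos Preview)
Your argument is correct. The paper itself does not supply a proof; it invokes the statement as a special case of Lemma~3.3 of~\cite{R21b}, and the footnote records that the cited lemma is formulated not in terms of a unitary $U$ with $U\ket{0^n}=\ket\phi$ but in terms of the partial isometries $A=\kb0\otimes I_n+\kb1\otimes\ketbra{\phi}{0^n}$ and $B=\kb0\otimes I_n+\kb1\otimes V\kb{0^n}$. That phrasing suggests the cited proof exploits the uniformity-in-$U$ hypothesis to replace $\crl U$ by the non-unitary $A$ throughout (since the assumption says the circuit is insensitive to how $U$ acts off $\ket{0^n}$), and then bounds the deviation directly by $\norm{A-B}_{op}=\norm{\ket\phi-V\ket{0^n}}$, with the $\sqrt2$ arising from an isometry-extension or dilation step.

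Your route is genuinely different and more elementary: you stay entirely at the level of unitaries, using the uniformity hypothesis only once---to select the single $U=VR$ with $R$ a plane rotation---and then run a standard BBBV hybrid. The $\sqrt2$ then emerges transparently from the $2\times2$ eigenvalue calculation (your inequality $|\sin\gamma|\sqrt{1-r^2}\le 1-r\cos\gamma$, equivalently $(r-\cos\gamma)^2\ge0$, is exactly its content; the choice $\theta=2\gamma$ for the residual phase already attains the bound). The cited lemma buys generality---it covers oracle replacements that need not be controlled unitaries---but for the present statement your construction is self-contained and avoids any outside reference.
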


Queries to a classical oracle (i.e.\ a Boolean function) can be modeled in either of two standard ways. In the first, a function $f: \cube n \mapsto \cube m$ is encoded as the oracle $U_f$ defined by $U_f \ket{x, y} = \ket{x, y \oplus f(x)}$. In the second, which is only applicable when $m=1$, the function $f$ is instead encoded as the oracle $V_f$ defined by $V_f \ket{x} = (-1)^{f(x)} \ket{x}$. These models are equivalent, because $V_f = (I_n \otimes \bra-) U_f (I_n \otimes \ket-)$, and if $g(x,y) = \bigoplus_{j=1}^m f(x)_j y_j$ (where the subscript $j$ indicates the $j$'th bit of an $m$-bit string) then $U_f = (I_n \otimes H^{\otimes m}) V_g (I_n \otimes H^{\otimes m})$ where $H$ denotes the Hadamard gate~\cite{BV97,NC10}. We write $C^f$ to abbreviate $C^{U_f}$ or $C^{V_f}$; since $U_f$ and $V_f$ are Hermitian we do not need to distinguish between forward and backward queries to a classical oracle.

We use the fact that parallel queries to classical oracles can be merged into a single query to a classical oracle, i.e.\
\begin{align} \label{eq:par-quer}
	&V_{f_1} \otimes \dotsb \otimes V_{f_k} = V_F
	&\text{for}&
	& F\Paren{x^{(1)}, \dotsc, x^{(k)}} = \bigoplus_{j=1}^k f_j\Paren{x^{(j)}}
\end{align}
for all functions $f_1, \dotsc, f_k$. More generally, a collection of parallel queries of the form $\bigotimes_j U_{f_j} \otimes \bigotimes_k V_{g_k}$ can be merged into a single query to a classical oracle, using the above equivalence between the query models.

\paragraph*{\qaczf circuits}

A \qaczf circuit~\cite{Gre+02} is a constant-depth quantum circuit consisting of arbitrary one-qubit gates, as well as \emph{generalized Toffoli gates} of arbitrary arity defined by
\begin{equation*}
	\ket{b,x} \mapsto \ket{b \oplus \prod_{j=1}^n x_j, x} \quad \text{for} \quad b \in \bits, x = (x_1, \dotsc, x_n) \in \cube n,
\end{equation*}
and \emph{fanout gates} of arbitrary arity defined by
\begin{equation*}
	\ket{b,x} \mapsto \ket{b, x \oplus b^n} \quad \text{for} \quad b \in \bits, x \in \cube n.
\end{equation*}

The following results of the author~\cite{R21b} are easy to prove:

\begin{lem}[{\cite[Lemma 4.3]{R21b}}] \label{lem:qaczf-swap}
	There is a uniform family of $O\Paren{m n \log n}$-qubit \qaczf circuits $(C_{n,m})_{n,m}$, where $C_{n,m}$ takes as input a $(\log n)$-qubit register $\reg K$ and $m$-qubit registers $\reg A_0, \dotsc, \reg A_{n-1}, \reg B$ (and ancillae) and swaps $\reg A_k$ and $\reg B$ controlled on the classical state $\ket{k}_{\reg K}$.
\end{lem}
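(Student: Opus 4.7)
The plan is to build $C_{n,m}$ in three stages: (i) decode $\ket k_{\reg K}$ into a one-hot encoding on an ancilla register $\reg Q = (q_0, \dotsc, q_{n-1})$ with $q_j = [k = j]$; (ii) in parallel apply $n$ controlled swaps, with $q_j$ controlling the swap of $\reg A_j$ and $\reg B$; (iii) uncompute stage (i) by inverting it. Since $q_k = 1$ and $q_j = 0$ for $j \ne k$ after stage (i), applying the $n$ controlled swaps in any order yields the same unitary, namely a swap of $\reg A_k$ with $\reg B$.

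For stage (i), I would first apply $\log n$ fanout gates in parallel to create $n$ copies of each qubit of $\reg K$, then in parallel apply $n$ generalized Toffoli gates of arity $\log n$ targeting $q_0, \dotsc, q_{n-1}$: for each $j$, the $j$-th Toffoli's controls are the appropriate $\log n$ copies of $\reg K$'s qubits, conjugated by $X$ gates at positions where $j$'s bit is $0$, so that it fires exactly when $\reg K = \ket j$. For stage (ii), I would realize the swap bit by bit. For each bit index $i \in \{1, \dotsc, m\}$, in parallel across $i$, apply the three-step sequence
\begin{equation*}
    \reg B[i] \gets \reg B[i] \oplus \bigoplus_{j} (q_j \wedge \reg A_j[i]); \quad
    \reg A_j[i] \gets \reg A_j[i] \oplus (q_j \wedge \reg B[i]) \text{ for all } j \text{ in parallel}; \quad
    \text{then repeat the first step}.
\end{equation*}
A direct check on computational basis states (using that $q_j = 0$ for $j \ne k$ kills the corresponding ANDs) verifies that this swaps $\reg A_k[i]$ with $\reg B[i]$ and leaves the other $\reg A_j[i]$ fixed.

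Each of these three sub-steps runs in constant depth in \qaczf. The middle sub-step is a layer of $nm$ parallel CCNOTs, where the $n$ CCNOTs sharing $\reg B[i]$ as control at fixed $i$, and the $m$ CCNOTs sharing $q_j$ as control at fixed $j$, are parallelized by fanning $\reg B[i]$ out to $n$ copies and $q_j$ out to $m$ copies beforehand and un-fanning after. The outer two sub-steps compute the ANDs $q_j \wedge \reg A_j[i]$ in parallel into fresh ancillas via CCNOT, XOR the $n$ resulting ancillas for each $i$ into $\reg B[i]$ using the standard fanout--Hadamard implementation of the parity gate, and then uncompute the AND ancillas. The main obstacle is precisely the sharing of $\reg B$ across all $n$ controlled swaps, which blocks a naive parallelization; the three-step XOR--AND--XOR decomposition sidesteps this by reducing everything to parities and to Toffolis with a single shared control, both native to \qaczf. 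Counting ancillas, stage (i) uses $O(n \log n)$ and stage (ii) uses $O(mn)$, totaling $O(mn + n \log n) = O(mn \log n)$ qubits.
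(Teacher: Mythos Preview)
The paper does not give its own proof of this lemma; it merely cites \cite[Lemma 4.3]{R21b} and remarks that the result is ``easy to prove.'' Your construction is correct and is essentially the natural argument one would expect: decode $\reg K$ into a one-hot indicator via fanout and generalized Toffolis, then implement the controlled swap bitwise using the three-XOR decomposition of SWAP, with parity gates handling the shared target $\reg B[i]$. Your ancilla count $O(mn + n\log n)$ in fact beats the stated $O(mn\log n)$ bound.
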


\begin{lem}[{special case of \cite[Lemma 4.4]{R21b}}] \label{lem:cqacz}
	If $U$ is an $n$-qubit \qaczf circuit then there exists an $O(n)$-qubit \qaczf circuit $C$ such that $C(I_{n+1} \otimes \ket\zs) = \crl U \otimes \ket\zs$.
\end{lem}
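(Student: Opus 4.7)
The plan is to construct $C$ by walking through each gate of $U$ and replacing it with a constant-depth \qaczf subcircuit implementing the controlled version of that gate, using fresh ancillae. Since $U$ has constant depth and gates in a single time step must act on disjoint qubits, $U$ contains at most $O(n)$ gates in total. At the start of $C$ we apply a single fanout gate that copies the control qubit $c$ into $O(n)$ ancillae $c_1, \ldots, c_m$, one per gate of $U$, and at the end we uncompute these copies with a second fanout; this allows $c$ to be used in parallel by many subcircuits without any increase in depth.

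Next we handle each of the three \qaczf gate types separately. For a single-qubit gate $g$, use the standard decomposition $g = e^{i\alpha} A X B X C$ with $ABC = I$ to express $\crl g$ as a constant-depth circuit of single-qubit gates and CNOTs (which are binary generalized Toffolis), using the designated copy $c_j$ as the control. For a generalized Toffoli $\ket{b, x_1, \ldots, x_k} \mapsto \ket{b \oplus \prod_i x_i, x_1, \ldots, x_k}$, the controlled version is the generalized Toffoli with control set $\{c_j, x_1, \ldots, x_k\}$, which is itself a single \qaczf gate of arity one larger. For a fanout $\ket{b, x_1, \ldots, x_k} \mapsto \ket{b, x_1 \oplus b, \ldots, x_k \oplus b}$, the controlled version acts as $x_i \mapsto x_i \oplus c_j b$; I would implement it by allocating a fresh ancilla $a$ in $\ket 0$, applying a binary Toffoli $a \gets a \oplus c_j b$, then applying the original fanout gate from $a$ to $x_1, \ldots, x_k$, and finally applying the same Toffoli to reset $a$. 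This subcircuit has constant depth.

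Composing the subcircuits gives a $C$ of depth $O(d)$ where $d$ is the depth of $U$, hence constant. The qubit count is $n+1$ for the input and control, $O(n)$ for the copies of $c$, and at most one fresh ancilla per fanout of $U$, totaling $O(n)$ qubits; every ancilla is explicitly returned to $\ket\zs$, yielding the clean identity $C(I_{n+1} \otimes \ket\zs) = \crl U \otimes \ket\zs$. The main obstacle is the controlled fanout: since its arity $k$ can be as large as $n$, the construction must remain constant depth uniformly in $k$, which is exactly what the AND-into-ancilla trick accomplishes by funneling the joint condition $c_j b$ through a single qubit before broadcasting it. The remaining cases are routine.
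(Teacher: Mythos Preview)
The paper does not prove this lemma; it is quoted as a special case of \cite[Lemma 4.4]{R21b} and used as a black box. Your argument is correct and is essentially the standard proof: fan out the control bit once at the start, replace each gate of $U$ by its controlled version gate-by-gate using a dedicated copy of the control, and uncompute the fanout at the end. The three gate-type cases are handled properly; in particular your observation that a controlled generalized Toffoli is just a generalized Toffoli of arity one larger, and your AND-into-ancilla trick for controlled fanout, are exactly what is needed to keep each replacement constant-depth regardless of the gate's arity. The qubit and depth accounting is right.

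One small point worth making explicit in the single-qubit case: the global-phase correction $\mathrm{diag}(1,e^{i\alpha})$ in the $e^{i\alpha}AXBXC$ decomposition is applied to the \emph{control} copy $c_j$ rather than to the target. This is fine because the fanned-out copies are in the GHZ-like state $\alpha\ket{0,0^m}+\beta\ket{1,1^m}$, so a diagonal phase on any one copy has the same effect as on the original control and does not obstruct the final uncomputation of the fanout. You implicitly rely on this; stating it would close the only place a reader might pause.
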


The following lemma says that $\cc{QAC_f^0} \subseteq \cc{QNC^1}$:

\begin{lem}[{folklore, or special case of \cite[Lemma A.1]{R21b}}] \label{lem:qsim}
	For all $n$-qubit \qaczf circuits $U$, there exists an $O(n)$-qubit, $O(\log n)$-depth circuit $C$ consisting of $O(n)$ one- and two-qubit gates such that $C(I_n \otimes \ket\zs) = U \otimes \ket\zs$.
\end{lem}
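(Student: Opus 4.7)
The plan is to simulate each layer of the given $\mathsf{QAC_f^0}$ circuit $U$ by an $O(\log n)$-depth subcircuit of one- and two-qubit gates, using $O(n)$ fresh ancillae per layer which are cleaned up before the next layer begins (and thus can be reused). Since $U$ has constant depth, this yields a final circuit $C$ of total depth $O(\log n)$. Arbitrary one-qubit gates pass through unchanged, so the real task is to efficiently and cleanly simulate generalized Toffoli and fanout gates of arbitrary arity.

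For a fanout gate with control qubit $b$ and $k$ target qubits, I would first build a GHZ-like state by applying a balanced binary tree of CNOTs from $b$ onto $k-1$ fresh $\ket 0$ ancillae, in depth $\lceil \log_2 k \rceil$; on a basis input $\ket b$ this produces $\ket b^{\otimes k}$ across $b$ and the ancillae. Then, in one parallel layer of CNOTs, use each of these $k$ ``copies'' as the control of exactly one of the $k$ targets. Finally, reverse the CNOT tree to restore the ancillae to $\ket 0$; because the tree depends only on $b$, which is unchanged by the parallel layer, the uncomputation leaves the ancillae cleanly in $\ket\zs$ while the targets are each XORed with $b$, as required.

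For a generalized Toffoli gate with $k$ controls $x_1,\dotsc,x_k$ and output bit $b$, I would use a balanced binary tree of ordinary $3$-qubit Toffoli gates on $k-1$ fresh $\ket 0$ ancillae to compute $\prod_{j=1}^k x_j$ into a single ancilla in depth $O(\log k)$, then apply a CNOT from that ancilla into $b$, and finally reverse the tree to uncompute the ancillae. Each $3$-qubit Toffoli decomposes into a constant number of one- and two-qubit gates by the standard textbook construction, so this subcircuit has $O(k)$ gates and depth $O(\log k)$.

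Within a single layer of $U$, all gates act on pairwise disjoint qubit sets, so their arities sum to at most $n$; the subcircuits above can therefore be run in parallel, giving $O(n)$ gates and depth $O(\log n)$ per layer. Constant depth of $U$ then gives $O(n)$ total one- and two-qubit gates, depth $O(\log n)$, and $O(n)$ ancillae that are returned to $\ket\zs$, establishing the claim. The main subtlety is clean uncomputation: in both the fanout and generalized-Toffoli simulations the tree that populates the ancillae depends only on data (namely $b$, or the controls $x_j$) that is not disturbed by the ``output'' step sandwiched between the tree and its inverse, which is what allows the reversed tree to zero the ancillae exactly.
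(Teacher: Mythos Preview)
Your proposal is correct and is precisely the standard folklore argument: decompose each unbounded fanout (resp.\ generalized Toffoli) into a balanced CNOT (resp.\ Toffoli) tree with clean uncomputation, and use that the constant depth of $U$ bounds the total size and depth. The paper does not give its own proof of this lemma, citing it instead as folklore or as a special case of \cite[Lemma~A.1]{R21b}, so there is nothing further to compare.
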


\section{One-query state synthesis with postselection} \label{sec:post}
In this section we prove the following lemma, which will be used in our proof of \cref{thm:mi}. This lemma implies an efficient one-query state synthesis algorithm given the ability to postselect on a measurement outcome that occurs with approximately constant probability:

\begin{lem} \label{lem:help}
	There is a real number $\gamma \approx 0.18$ such that the following holds. Let $\eps: \N \to (0,1/2)$ be a function such that $\eps(n) \ge \exp(-\poly(n))$ and $\eps(n)$ is computable in $\poly(n)$ time for all $n$, and let $t(n) = \ceil{\log\log(1/\eps(n))} + 7$. Then there is a uniform sequence of $\poly(n)$-qubit \qaczf circuits $(A_n)_n$, each making one query to a classical oracle, such that for every $n$-qubit state $\ket\psi$ there exists a classical oracle $f = f_{\ket\psi}$, a $(t(n)+n)$-qubit state $\ket\tau$ such that $\Paren{\bra{0^{t(n)}} \otimes I_n} \ket\tau = 0$, and a string $z \in \cube{\poly(n)}$ such that
	\begin{equation} \label{eq:help}
		\Norm{A_n^f \ket\zs - \Paren{\gamma \ket{0^{t(n)}} \ket\psi + \sqrt{1 - \gamma^2} \ket\tau} \ket{z}} \le \eps(n).
	\end{equation}
	Furthermore there is an algorithm that takes as input the description of an $n$-qubit state $\ket\psi$ and a string $x$, runs in $\poly(n)$ space, and outputs $f_{\ket\psi} (x)$.
\end{lem}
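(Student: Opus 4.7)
The plan is to follow the sketch from the introduction: approximate $\ket\psi$ by a short linear combination of Clifford-times-phase (CP) states and realize this combination via Linear Combinations of Unitaries (LCU). I would begin by invoking Irani et al.~\cite{INN+22}, which gives a universal constant $\alpha = \Omega(1)$ such that every $n$-qubit pure state has fidelity at least $\alpha$ with some CP state $C \cdot H^{\otimes n} V_s \ket\zs$ for an $n$-qubit Clifford $C$ and a function $s\colon \cube n \to \bits$. Since this holds with $C$ drawn from any 2-design, I would fix one whose elements admit uniform constant-depth fanout implementations, as in~\cite{HS05,TT16}. By a greedy residual rule, define $(C_j, s_j, c_j)$ for $j = 0, \dotsc, k-1$ inductively: starting from $\ket{r_0} = \ket\psi$, choose $\ket{\phi_j} := C_j H^{\otimes n} V_{s_j} \ket\zs$ from the 2-design with $\Mag{\ip{\phi_j}{r_j}} \ge \alpha \Norm{\ket{r_j}}$, absorb the phase so $c_j := \ip{\phi_j}{r_j} \ge 0$, and set $\ket{r_{j+1}} := \ket{r_j} - c_j \ket{\phi_j}$. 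Orthogonality of $\ket{\phi_j}$ to $\ket{r_{j+1}}$ yields $\Norm{\ket{r_{j+1}}}^2 \le (1 - \alpha^2) \Norm{\ket{r_j}}^2$, so $k := 2^{t(n)}$ iterations suffice to force $\Norm{\ket{r_k}} \le \eps/2$ given $t(n) = \lceil \log \log(1/\eps) \rceil + 7$.

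Next I would assemble the LCU circuit. It uses an index register $\reg J$ of $t(n)$ qubits, a target register $\reg T$ of $n$ qubits, and $\poly(n)$ ancillae. First apply a unitary $W$ on $\reg J$ preparing $\sum_{j < k} \sqrt{c_j / \Norm c_1} \ket j$ (with angles precomputed classically from the description of $\ket\psi$); then apply $U_j := C_j H^{\otimes n} V_{s_j}$ on $\reg T$ controlled on $\ket j_{\reg J}$; finally apply $\adj W$ on $\reg J$. A standard LCU calculation shows the $\ket{0^{t(n)}}_{\reg J}$ component of the resulting state is $\Norm c_1^{-1} \sum_j c_j \ket{\phi_j}$ on $\reg T$, which is $\eps/2$-close in 2-norm to $\Norm c_1^{-1} \ket\psi$. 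Setting $\gamma := 1/\Norm c_1$ and using the geometric estimate $c_j \le \alpha(1-\alpha^2)^{j/2}$ to obtain $\Norm c_1 \le \alpha/(1 - \sqrt{1 - \alpha^2})$, one reaches the universal constant $\gamma \approx 0.18$ for the $\alpha$ furnished by~\cite{INN+22}, and the remaining ancillae end in a fixed classical string $\ket z$, verifying \cref{eq:help}.

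It remains to (a) collapse the $k$ controlled phase queries into a single classical query and (b) realize everything in \qaczf with a \pspace-computable oracle. For (a), the queries $V_{s_j}$ on $\reg T$ controlled on $\ket j_{\reg J}$ fuse, via a controlled analogue of \cref{eq:par-quer}, into a single query to the oracle $f(j,x) := s_j(x)$. For (b), $W$ acts on $t(n) = O(\log n)$ qubits and decomposes into multiply-controlled rotations (each a generalized Toffoli conjugated by single-qubit gates); the unconditional $H^{\otimes n}$ is trivial; and the controlled Cliffords $C_j$ are dispatched using \cref{lem:cqacz} together with fanout to copy $\ket j_{\reg J}$ across $\poly(n)$ wires, provided the 2-design admits the constant-depth fanout compilation of~\cite{HS05,TT16}. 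The oracle $f$ is \pspace-computable by inductively maintaining the description of $\ket{r_j}$ (using $\cc{PSPACE} = \cc{BQPSPACE}$~\cite{watrous03complexity}), searching the 2-design for the CP state best matching $\ket{r_j}/\Norm{\ket{r_j}}$, and reading off the appropriate bit of $s_j$. The main obstacle I anticipate is part (b): arranging the distinct controlled Cliffords in uniform constant depth with only $\poly(n)$ ancillae, which forces the 2-design to be chosen so that its elements cleanly inherit the controls from $\reg J$ without blowing up depth.
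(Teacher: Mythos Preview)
Your proposal has a genuine gap: the lemma requires a \emph{single} circuit $A_n$ and a \emph{single} constant $\gamma$ that work for every $\ket\psi$, with only the oracle $f$ allowed to depend on $\ket\psi$. But both your preparation unitary $W$ (whose angles you say are ``precomputed classically from the description of $\ket\psi$'') and your controlled Cliffords $C_j$ are $\ket\psi$-dependent, and your $\gamma = 1/\Norm c_1$ varies with $\ket\psi$ as well (for instance, if $\ket\psi$ is already a Clifford-times-phase state then $c_0=1$, $c_j=0$ for $j>0$, and $\gamma=1$). So as written you have constructed a \emph{nonuniform} family of circuits, not the uniform sequence $(A_n)_n$ the lemma demands.

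The paper fixes exactly these two points. First, instead of the data-dependent coefficients $c_j$ it uses the fixed geometric weights $\alpha\beta^j$ with $\beta=\sqrt{1-\alpha^2}$; this forces $\gamma=(1-\beta)/\alpha$ to be universal and, crucially, makes the index-register preparation a \emph{tensor product} $L$ of one-qubit gates (since $\sum_j \sqrt{\beta^j}\ket j$ factors over the bits of $j$), hence trivially \qaczf and independent of $\ket\psi$. The price is that one must redo the error analysis with the ``wrong'' coefficients: the paper shows by a convexity argument that $\Norm{\ket{\eta_k}}\le\beta^k$ still holds when $\ket{\eta_k}=\ket\psi-\alpha\sum_{j<k}\beta^j\ket{\phi_j}$. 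Second, the Cliffords $C_j$ are not hard-wired; instead their canonical-form descriptions (the Aaronson--Gottesman H-C-P-\dots\ decomposition) are \emph{returned by the oracle} in the same query, and a fixed \qaczf ``interpreter'' applies the queried $C_j$ controlled on $\ket j$. This is where the extra classical string $\ket z$ in \cref{eq:help} comes from---it holds the concatenated Clifford descriptions---and it is what you are missing when you try to dispatch the $C_j$ directly via \cref{lem:cqacz}.
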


The proof is organized as follows. In \cref{sec:halg} we describe $A_n$ and $f$, and in \cref{sec:hpc} we prove that \cref{eq:help} holds. In \cref{sec:qps} we prove that $f$ can be computed in $\poly(n)$ space; actually we prove this for a slightly different oracle $f^\prime$ due to a subtlety involving floating-point arithmetic, but we show that \cref{eq:help} also holds with $f^\prime$ in place of $f$ (and with slightly different values of $\ket\tau$ and $z$).
\subsection{The algorithm} \label{sec:halg}

Our algorithm uses \emph{Clifford unitaries}, which are products of Hadamard, phase, and CNOT gates, i.e.\ products of the gates
\begin{align*}
	&H = \frac1{\sqrt 2} \begin{pmatrix} 1 & 1 \\ 1 & -1 \end{pmatrix},&
	&S = \begin{pmatrix} 1 & \\ & i \end{pmatrix},&
	&\mathit{CNOT} = \begin{pmatrix} 1 & & & \\ & 1 & & \\ & & & 1 \\ & & 1 & \end{pmatrix}.
\end{align*}
Let
\begin{equation*}
	\alpha = 0.35, \qquad
	\beta = \sqrt{1-\alpha^2} \approx 0.94, \qquad
	\gamma = (1-\beta)/\alpha \approx 0.18.
\end{equation*}
For a complex number $c$, let $\sr(c) = 1$ if the real part of $c$ is nonnegative, and let $\sr(c) = -1$ otherwise. For a vector $\ket\eta \in \Paren{\C^2}^{\otimes n}$ and a Clifford unitary $C$, let
\begin{equation*}
	\ket{p_{\eta, C}} = C \cdot 2^{-n/2} \sum_{\mathclap{x \in \cube n}} \sr(\bra\eta C \ket x) \ket x.
\end{equation*}
The following is implicit in Irani et al.~\cite{INN+22}, as explained in \cref{app:a5}:

\begin{restatable}[{\cite{INN+22}}]{lem}{raf} \label{lem:a5}
	For all states $\ket\eta$ there exists a Clifford unitary $C$ such that $\mrm{Re}(\ip \eta {p_{\eta, C}}) \ge \alpha$.
\end{restatable}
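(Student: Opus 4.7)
The plan is a probabilistic argument: I will show that when $C$ is sampled uniformly from the Clifford group, $\E_C[\mrm{Re}(\ip\eta{p_{\eta,C}})]$ strictly exceeds $\alpha$, so at least one Clifford realizes the claimed bound.

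First I would simplify the target quantity. Setting $c_x = \bra\eta C \ket x$ and using the identity $\sr(c_x)\,\mrm{Re}(c_x) = |\mrm{Re}(c_x)|$ (which is the definition of $\sr$), the defining formula for $\ket{p_{\eta,C}}$ gives
\[
\mrm{Re}\bigl(\ip\eta{p_{\eta,C}}\bigr) \;=\; 2^{-n/2} \sum_{x \in \cube n} |Y_x|, \qquad Y_x \coloneqq \mrm{Re}(\bra x C^\dagger \ket\eta).
\]
Right-multiplying $C$ by the Clifford $X^{\otimes x}$ permutes the computational basis states, so for uniform $C$ each $|Y_x|$ is identically distributed to $|Y_0|$; hence $\E_C[\mrm{Re}(\ip\eta{p_{\eta,C}})] = 2^{n/2}\,\E_C[|Y_0|]$, and it suffices to show $\E_C[|Y_0|] \ge \alpha \cdot 2^{-n/2}$.

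Next I would lower bound $\E_C[|Y_0|]$ by two applications of Cauchy--Schwarz, yielding the moment inequality $\E[|Y|] \ge (\E[Y^2])^{3/2}/\sqrt{\E[Y^4]}$. This reduces the problem to computing the second and fourth moments of $Y_0 = (\phi_0 + \overline{\phi_0})/2$, where $\phi_0 = \bra 0 C^\dagger \ket\eta$. The Clifford group is a unitary 3-design, so all expectations of monomials $\phi_0^a \overline{\phi_0}^b$ with $\max(a,b) \le 3$ equal their Haar counterparts; by $U(1)$-invariance of Haar these vanish unless $a=b$, and the surviving moments are $\E_C[|\phi_0|^2] = 2^{-n}$ and $\E_C[|\phi_0|^4] = 2/(2^n(2^n+1))$. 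Expanding immediately gives $\E_C[Y_0^2] = 1/(2\cdot 2^n)$.

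The main obstacle will be the fourth moment, because the expansion of $(\phi_0+\overline{\phi_0})^4$ produces the ``chiral'' terms $\phi_0^4$ and $\overline{\phi_0}^4$, of degree $4$ in $C^*$ and $C$ respectively, beyond the reach of the 3-design property---indeed the Clifford group is famously not a 4-design. I would sidestep them with the crude Cauchy--Schwarz bound $|\E_C[\phi_0^4]| \le \E_C[|\phi_0|^4] = 2/(2^n(2^n+1))$ (and similarly for $\overline{\phi_0}^4$), which yields $\E_C[Y_0^4] \le 1/(2^n(2^n+1))$. Combining,
\[
\E_C[|Y_0|] \;\ge\; \frac{(1/(2\cdot 2^n))^{3/2}}{\sqrt{1/(2^n(2^n+1))}} \;\ge\; \frac{1}{2\sqrt 2 \cdot 2^{n/2}},
\]
so $\E_C[\mrm{Re}(\ip\eta{p_{\eta,C}})] \ge 1/(2\sqrt 2) \approx 0.3536 > 0.35 = \alpha$, and some Clifford achieves the bound. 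The margin is razor-thin, presumably explaining the paper's specific choice $\alpha = 0.35$.
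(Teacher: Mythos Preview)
Your approach is correct and genuinely different from the paper's. The paper does not compute the expectation directly; instead it invokes an anti-concentration statement from \cite{INN+22} (their Eq.~(A.22)), which asserts that $\Pr_C\bigl[\|\mrm{Re}(C^\dagger\ket\eta)\|_1 \ge \sqrt{\gamma\,2^n}\bigr] > 0$ for $\gamma$ slightly below $1/8$, and then observes that $\mrm{Re}(\ip\eta{p_{\eta,C}}) = 2^{-n/2}\|\mrm{Re}(C^\dagger\ket\eta)\|_1$. (The paper further corrects what it believes is a typo in \cite{INN+22}, arriving at exactly the same threshold $\sqrt{1/8} = 1/(2\sqrt2)$ that you obtain.) Your direct first-moment route via the H\"older inequality $\E|Y| \ge (\E Y^2)^{3/2}/(\E Y^4)^{1/2}$ is cleaner and fully self-contained; the paper's route has the advantage of reusing a citable lemma but is correspondingly more indirect.

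One technical point deserves tightening. You assert that the $3$-design property forces $\E_C[\phi_0^a\overline{\phi_0}^b]$ to match Haar whenever $\max(a,b)\le 3$, and then use $U(1)$-invariance of Haar to kill the $a\neq b$ terms. Strictly speaking, the $t$-design property only controls \emph{balanced} monomials (degree $(s,s)$ with $s\le t$); it says nothing about degree $(2,0)$ or $(3,1)$. The conclusion you want is still true for the Clifford group, but for a different reason: the Clifford group contains the global phase $e^{i\pi/4}I$ (for instance $(SH)^3 = e^{i\pi/4}I$ on one qubit), so by invariance $\E_C[\phi_0^a\overline{\phi_0}^b] = e^{i(b-a)\pi/4}\E_C[\phi_0^a\overline{\phi_0}^b]$, which forces the expectation to vanish whenever $a\not\equiv b\pmod 8$. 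This handles $(2,0)$, $(3,1)$, and in fact even $(4,0)$---so your ``crude'' bound on $\E_C[\phi_0^4]$ is unnecessary and you could sharpen $\E_C[Y_0^4]$ to $\tfrac{3}{4\cdot 2^n(2^n+1)}$, yielding $\E_C[\mrm{Re}(\ip\eta{p_{\eta,C}})]\ge 1/\sqrt{6}\approx 0.408$. Either way the lemma follows, and indeed only the $2$-design property is actually used.
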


\begin{rmk*}
	In \cref{app:var} we prove an analogue of \cref{lem:a5} for a class of states other than $\ket{p_{\eta, C}}$, which can be used to give an alternate proof of a statement similar to \cref{lem:help}. The idea to use $\ket{p_{\eta, C}}$ was suggested to us by Fermi Ma~\cite{Ma23} after we sketched the argument in \cref{app:var} to him.
\end{rmk*}

Fix $n, \ket\psi, t = t(n)$ as in \cref{lem:help}. For $k \ge 0$, given states $\ket{\phi_0}, \dotsc, \ket{\phi_{k-1}}$, let $\ket{\eta_k} = \ket\psi - \alpha \sum_{j=0}^{k-1} \beta^j \ket{\phi_j}$, and (using \cref{lem:a5}) let $C_k$ be a Clifford unitary such that the state $\ket{\phi_k} = \ket{p_{\eta_k, C_k}}$ satisfies $\mrm{Re} (\ip {\eta_k} {\phi_k}) \ge \alpha \norm{\ket{\eta_k}}$.

Let $T = 2^t$ and $\ket\sigma = \sqrt{\frac{1 - \beta}{1 - \beta^T}} \cdot \sum_{j=0}^{T-1} \sqrt{\beta^j} \ket{j}$. Observe that
\begin{equation*}
	\ket\sigma = \sqrt{\frac{1-\beta}{1-\beta^T}}
	\Paren{\ket0 + \beta^{2^{t-2}} \ket1} \otimes
	\Paren{\ket0 + \beta^{2^{t-1}} \ket1} \otimes \dotsb
	\Paren{\ket0 + \beta \ket1} \otimes
	\Paren{\ket0 + \beta^{1/2} \ket1},
\end{equation*}
so there is a tensor product $L$ of $t$ one-qubit gates such that $L\ket{0^t} = \ket{\sigma}$.

The circuit $A_n$  is described in \cref{alg:oss}, where $\reg A$ is a $t$-qubit register and $\reg B$ is an $n$-qubit register. Although the algorithm is phrased in terms of multiple queries, these can be merged into a single query using \cref{eq:par-quer} and the surrounding discussion. Aaronson and Gottesman~\cite[Theorem 8]{AG04} proved that every Clifford unitary can be written as a round of Hadamard gates, then a round of CNOT gates, then a round of phase gates, and so on in the sequence H-C-P-C-P-C-H-P-C-P-C with no ancillae (a ``round" may consist of any number of layers of the given gate type). On \cref{line:o7,line:o8}, by the description of a Clifford unitary we mean the concatenation of the descriptions of the rounds comprising that unitary, defined as follows:
\begin{itemize}
	\item Since $H^2 = I$ a round of Hadamard gates equals $\bigotimes_{j=1}^n H^{x_j}$ for some string $x = (x_1, \dotsc, x_n) \in \cube n$. Call $x$ the description of this round.
	\item Similarly since $S^4 = I$, a round of phase gates can be described by a string in $\{0,1,2,3\}^n$.
	\item A round of CNOT gates acts on the standard basis as $\ket x \mapsto \ket{Mx}$ for some $M \in \mrm{GL}_n(\ft)$, because this holds for a single CNOT gate and $\mrm{GL}_n(\ft)$ is closed under multiplication. Call the pair $(M, M^{-1})$ the description of this round.
\end{itemize}

\begin{algorithm}
	\caption{Circuit and oracle for \cref{lem:help}}
	\label{alg:oss}
	\begin{algorithmic}[1]
		\State Construct $\ket\sigma_{\reg A} \ket{+^n}_{\reg B}$. \Comment Using $L$.
		\Ctrl{the classical state $\ket{j}_{\reg A} \ket{x}_{\reg B}$,}
		\State apply a phase of $\sr(\bra{\eta_j} C_j \ket x)$ by querying the oracle. \hlabel{line:o3}
		\EndCtrl
		\State Query descriptions of $C_0, \dotsc, C_{T-1}$. \Comment Merge with the \cref{line:o3} query using \cref{eq:par-quer}. \hlabel{line:o7}
		\Ctrl{the classical state $\ket{j}_{\reg A}$,}
		\State Apply $(C_j)_{\reg B}$ using the queried description of $C_j$. \hlabel{line:o8}
		\EndCtrl \hlabel{line:o9}
		\State Apply $\adj L_{\reg A}$.
	\end{algorithmic}
\end{algorithm}

We now describe the \qaczf implementation of \cref{line:o8} in greater detail. Given an index $j$ and descriptions of Clifford unitaries $C_0, \dotsc, C_{T-1}$, the description of $C_j$ can be computed using \cref{lem:qaczf-swap}. A polynomial-size \qaczf circuit can then implement $C_j$ by successively implementing the rounds comprising $C_j$. Rounds of Hadamard and phase gates can be implemented trivially. To implement a round of CNOT gates acting as $\ket x \mapsto \ket{Mx}$, first compute $y = Mx$, and then uncompute $x = M^{-1} y$ controlled on $y$, using that parity is in \qaczf~\cite{Gre+02}. Since $\eps(n) \ge \exp(-\poly(n))$ it holds that $t \le O(\log n)$ and $T \le \poly(n)$, so $A_n$ requires $\poly(n)$ qubits.

\begin{rmk*}
	Aaronson and Gottesman~\cite[Section 6]{AG04} used similar reasoning to prove that Clifford unitaries can be implemented in \qnco. The purpose of querying the description of $C_j$ in \cref{line:o7}, rather than applying it nonuniformly in \cref{line:o8}, is to make the circuit (unlike the oracle) independent of $\ket\psi$. The purpose of querying \emph{all} of $C_0, \dotsc, C_{T-1}$, rather than just $C_j$, is so that the register holding the description of $C_j$ is unentangled with the rest of the system.
\end{rmk*}

\subsection{Proof of correctness} \label{sec:hpc}

The string $z$ referred to in the lemma is the concatenation of the descriptions of $C_0, \dotsc, C_{T-1}$ along with some number of zeros. Let $\ket\varphi$ denote the final state in $\reg{AB}$, let $\ket\theta = \bra{0^t}_{\reg A} \ket\varphi$, and let
\begin{equation*}
	\ket\tau
	= \frac{\Paren{I - \kb{0^t}}_{\reg A} \ket\varphi} {\Norm{\Paren{I - \kb{0^t}}_{\reg A} \ket\varphi}}
	= \frac{\Paren{I - \kb{0^t}}_{\reg A} \ket\varphi} {\sqrt{1 - \norm{\ket\theta}^2}}.
\end{equation*}
Then
\begin{align*}
	&\Norm{A_n^f \ket\zs - \Paren{\gamma \ket{0^t} \ket\psi + \sqrt{1 - \gamma^2} \ket\tau} \ket{z}}^2 \\
	&= \Norm{\ket\varphi - \Paren{\gamma \ket{0^t} \ket\psi + \sqrt{1 - \gamma^2} \ket\tau}}^2 \\
	&= \Norm{\ket{0^t} \Paren{\ket\theta - \gamma \ket\psi} + \Paren{\sqrt{1 - \norm{\ket\theta}^2} - \sqrt{1 - \gamma^2}} \ket\tau}^2 \\
	&= \norm{\ket\theta - \gamma \ket\psi}^2 + \Paren{\sqrt{1 - \norm{\ket\theta}^2} - \sqrt{1 - \gamma^2}}^2 \\
	&= \norm{\ket\theta - \gamma \ket\psi}^2 + \Paren{\frac{\Paren{\norm{\ket\theta} + \gamma} \Paren{\norm{\ket\theta} - \gamma}} {\sqrt{1 - \norm{\ket\theta}^2} + \sqrt{1 - \gamma^2}}}^2 \\
	&\le \norm{\ket\theta - \gamma \ket\psi}^2 + \frac{(1 + \gamma)^2} {1 - \gamma^2} \Paren{\norm{\ket\theta} - \gamma}^2 &\text{because $\norm{\ket\theta} \le 1$} \\
	&\le \Paren{1 + \frac{(1 + \gamma)^2} {1-\gamma^2}} \norm{\ket\theta - \gamma \ket\psi}^2 &\text{by the triangle inequality} \\
	&\le 2.45 \Norm{\ket\theta - \gamma \ket\psi}^2,
\end{align*}
so
\begin{equation*}
	\Norm{A_n^f \ket\zs - \Paren{\gamma \ket{0^t} \ket\psi + \sqrt{1 - \gamma^2} \ket\tau} \ket{z}}
	\le 1.57 \Norm{\ket\theta - \gamma \ket\psi}.
\end{equation*}

Inspection of \cref{alg:oss} reveals that
\begin{equation*}
	\ket\varphi
	= \adj L_{\reg A} \Paren{\sum_{j<T}
		\Paren{j_{\reg A} \otimes C_j \sum_{\mathclap{x \in \cube n}} \sr\Paren{\bra{\eta_j} C_j \ket{x}} x_{\reg B}}}
	\ket\sigma_{\reg A} \ket{+^n}_{\reg B},
\end{equation*}
so
\begin{align*}
	\ket\theta
	&= \bra\sigma_{\reg A}
	\Paren{\sum_{j<T} \Paren{j_{\reg A} \otimes C_j \cdot 2^{-n/2} \sum_{\mathclap{x \in \cube n}} \sr\Paren{\bra{\eta_j} C_j \ket{x}} \ket{x}_{\reg B}}}
	\ket\sigma_{\reg A} \\
	&= \sum_{j<T} |\ip{j}\sigma|^2 \ket{p_{\eta_j, C_j}}_{\reg B}
	= \frac{1-\beta}{1-\beta^T} \sum_{j<T} \beta^j \ket{\phi_j}
	= \frac{1-\beta}{1-\beta^T} \cdot \frac{\ket\psi - \ket{\eta_T}} \alpha
	= \frac{\gamma \Paren{\ket\psi - \ket{\eta_T}}} {1-\beta^T},
\end{align*}
and therefore by the triangle inequality
\begin{align*}
	\Norm{\ket\theta - \gamma \ket\psi}
	&= \frac \gamma {1-\beta^T} \Norm{\Paren{\ket\psi - \ket{\eta_T}} - \Paren{1 - \beta^T} \ket\psi}
	= \frac \gamma {1-\beta^T} \Norm{\beta^T \ket\psi - \ket{\eta_T}} \\
	&\le \frac \gamma {1-\beta} \Paren{\beta^T + \Norm{\ket{\eta_T}}}
	\le 2.86 \Paren{\beta^T + \Norm{\ket{\eta_T}}}.
\end{align*}

We prove by induction on $k$ that $\norm{\ket{\eta_k}} \le \beta^k$ for all $k$. The base case $k=0$ holds because $\ket{\eta_0} = \ket\psi$. If the claim holds for $k$, then
\begin{align*}
	\norm{\ket{\eta_{k+1}}}^2
	&= \Norm{\ket{\eta_k} - \alpha \beta^k \ket{\phi_k}}^2
	= \Norm{\ket{\eta_k}}^2 - 2 \alpha \beta^k \mrm{Re}(\ip{\eta_k}{\phi_k}) + \alpha^2 \beta^{2k} \\
	&\le \Norm{\ket{\eta_k}}^2 - 2 \alpha^2 \beta^k \Norm{\ket{\eta_k}} + \alpha^2 \beta^{2k},
\end{align*}
where the inequality is by the definition of $\ket{\phi_k}$. This bound is convex as a function of $\norm{\ket{\eta_k}}$, so it achieves its maximum over $0 \le \norm{\ket{\eta_k}} \le \beta^k$ at either $\norm{\ket{\eta_k}} = 0$ or $\norm{\ket{\eta_k}} = \beta^k$. In both cases it follows straightforwardly that $\norm{\ket{\eta_{k+1}}} \le \beta^{k+1}$, using in the $\norm{\ket{\eta_k}} = 0$ case the fact that $\alpha < \beta$, and using in the $\norm{\ket{\eta_k}} = \beta^k$ case the fact that $1-\alpha^2 = \beta^2$.

Finally, writing $\eps = \eps(n)$ it holds that
\begin{equation*}
	\beta^T
	= \beta^{2^t}
	= \beta^{2^{\ceil{\log \log(1/\eps)} + 7}}
	\le \beta^{128 \log(1/\eps)}
	= \eps^{128 \log(1/\beta)}
	\le \eps^{8.36}
	\le \eps \cdot (1/2)^{7.36}
	\le 0.01 \eps,
\end{equation*}
so
\begin{equation*}
	\Norm{A_n^f \ket\zs - \Paren{\gamma \ket{0^t} \ket\psi + \sqrt{1 - \gamma^2} \ket\tau} \ket{z}}
	\le 1.57 \cdot 2.86 \cdot 2 \beta^T
	\le \eps.
\end{equation*}

\subsection{Computing $f$ in $\poly(n)$ space} \label{sec:qps}
Recall from \cref{alg:oss} that the oracle $f$ encodes, for each $0 \le j < T$, a description of the Clifford unitary $C_j$ and the values $\sr(\bra{\eta_j} C_j \ket{x})$ for $x \in \cube n$. The problem is that $\sr(\bra{\eta_j} C_j \ket{x})$ depends discontinuously on $\bra{\eta_j} C_j \ket{x}$, and $\bra{\eta_j} C_j \ket{x}$ can only be computed approximately due to (exponentially small) error in the description of $\ket{\eta_j}$ and in floating-point arithmetic. Therefore we will use a slightly different oracle $f^\prime$. Let $\delta = 0.01 \cdot \beta^{2T} \ge \exp(-\poly(n))$; we will use $\delta$ to define bounds on the floating-point error in certain calculations.
\paragraph*{The new oracle $f^\prime$}
For a vector $\ket\eta \in \Paren{\C^2}^{\otimes n}$ and a Clifford unitary $C$, let
\begin{equation*}
	\ket{p_{\eta, C}^\prime} = C \cdot 2^{-n/2} \sum_{\mathclap{x \in \cube n}} \sr\Paren{\wt{\bra\eta C \ket x}} \ket x,
\end{equation*}
where $\wt{\bra\eta C \ket x}$ is a value computable in $\poly(n)$ space (given descriptions of $\ket\eta$ and $C$) such that $\Mag{\wt{\bra\eta C \ket x} - \bra\eta C \ket x} \le 2^{-n/2} \delta$. For example we may compute $\wt{\bra\eta C \ket x}$ by a ``sum over histories" argument, i.e.\ write $C = R_1 \dotsb R_{11}$ as the product of the rounds $R_i$ comprising the description of $C$, and use that
\begin{equation*}
	\bra{\eta} C \ket{x} = \sum_{\mathclap{y_0, \dotsc, y_{10} \in \cube n}} \ip{\eta}{y_0} \cdot  \prod_{i=1}^{10} \bra{y_{i-1}} R_i \ket{y_i} \cdot \bra{y_{10}} R_{11} \ket{x}.
\end{equation*}

For a state $\ket\eta$ and Clifford unitary $C$, if $|\mrm{Re}(\bra\eta C \ket x)| > 2^{-n/2} \delta$ then $\sr\Paren{\wt{\bra\eta C \ket x}} = \sr\Paren{\bra\eta C \ket x}$, so by the triangle inequality
\begin{align*}
	\Mag{\mrm{Re} \Paren{\ip \eta {p_{\eta, C}^\prime}} - \mrm{Re} \Paren{\ip \eta {p_{\eta, C}}}}
	&= \Mag{2^{-n/2} \sum_{\mathclap{x \in \cube n}} \Paren{\sr\Paren{\wt{\bra\eta C \ket x}} - \sr\Paren{\bra\eta C \ket x}} \mrm{Re}(\bra\eta C \ket x)} \\
	&\le 2^{-n/2} \sum_{\mathclap{x \in \cube n}} 2 \cdot 2^{-n/2} \delta
	= 2\delta.
\end{align*}
Therefore by \cref{lem:a5}, for all states $\ket\eta$ there exists a Clifford unitary $C$ such that $\mrm{Re}\Paren{\ip \eta {p^\prime_{\eta, C}}} \ge \alpha - 2 \delta$.

For $k \ge 0$, given states $\ket{\phi_0^\prime}, \dotsc, \ket{\phi_{k-1}^\prime}$, let $\ket{\eta_k^\prime} = \ket\psi - \alpha \sum_{j=0}^{k-1} \beta^j \ket{\phi_j^\prime}$, and let $C_k^\prime$ be a Clifford unitary such that the state $\ket{\phi_k^\prime} = \ket{p_{\eta_k^\prime, C_k^\prime}^\prime}$ satisfies $\mrm{Re} \Paren{\ip {\eta_k^\prime} {\phi_k^\prime}} \ge \Paren{\alpha - 2 \delta} \Norm{\ket{\eta_k^\prime}} - \delta$. Let $f^\prime$ be the oracle that encodes, for each $0 \le j < T$, the description of $C_j^\prime$ and $\sr\Paren{\wt{\bra{\eta_j^\prime} C_j^\prime \ket{x}}}$ for $x \in \cube n$.

\paragraph*{Computing $f^\prime$ in $\poly(n)$ space}
Given the description of $\ket{\eta^\prime_k}$, a valid Clifford unitary $C_k^\prime$ can be found in $\poly(n)$ space by performing a brute-force search for a Clifford unitary $C$ such that $\mrm{Re}\Paren{\ip {\eta^\prime_k} {p^\prime_{\eta^\prime_k, C}}} \ge \Paren{\alpha - 2 \delta} \Norm{\ket{\eta_k^\prime}}$. (The ``extra" $\delta$ term in the definition of $C_k^\prime$ allows for floating-point error in the calculation of $\mrm{Re}\Paren{\ip {\eta^\prime_k} {p^\prime_{\eta^\prime_k, C}}} - \Paren{\alpha - 2 \delta} \Norm{\ket{\eta_k^\prime}}$ during this search.) The description of the vector $\ket{\eta_{k+1}^\prime} = \ket{\eta_k^\prime} - \alpha \beta^k \ket{\phi_k^\prime}$ can be subsequently computed in $\poly(n)$ space. Since $\ket{\eta_0^\prime} = \ket\psi$, it follows by induction that descriptions of $C_k^\prime$ and $\ket{\eta_{k+1}^\prime}$ for $k \ge 0$ can be computed in $(k+1) \poly(n)$ space, by answering queries to individual bits of the description of $\ket{\eta_k^\prime}$ recursively. Since $T \le \poly(n)$, descriptions of $C_k^\prime$ and $\ket{\eta_k^\prime}$ for $0 \le k < T$ can be computed in $\poly(n)$ space. Finally, the value $\sr\Paren{\wt{\bra{\eta_k^\prime} C_k^\prime \ket x}}$ can by definition be computed in $\poly(n)$ space given descriptions of $C_k^\prime$ and $\ket{\eta_k^\prime}$.

\paragraph*{Constructing $\ket\psi$ using $f^\prime$}
We now show that \cref{eq:help} still holds with $f^\prime$ substituted for $f$. By reasoning similar to that in \cref{sec:hpc}, there exist a state $\ket{\tau^\prime}$ and a string $z^\prime$ such that $\Paren{\bra{0^t} \otimes I} \ket{\tau^\prime} = 0$ and
\begin{equation*}
	\Norm{A_n^{f^\prime} \ket\zs - \Paren{\gamma \ket{0^t} \ket\psi + \sqrt{1 - \gamma^2} \ket{\tau^\prime}} \ket{z^\prime}}
	\le 1.57 \cdot 2.86 \Paren{\beta^T + \Norm{\ket{\eta_T^\prime}}}.
\end{equation*}

We prove by induction on $k$ that $\Norm{\ket{\eta_k^\prime}}^2 \le \beta^{2k} + 0.1 \beta^{2T} \sum_{j = 0}^{k-1} \beta^j$ for all $k$. The case $k=0$ holds trivially. If the claim holds for $k$, then (similarly to in \cref{sec:hpc})
\begin{align*}
	\Norm{\ket{\eta_{k+1}^\prime}}^2
	&\le \Norm{\ket{\eta_k^\prime}}^2 - 2 \alpha \beta^k \Paren{\Paren{\alpha - 2 \delta} \Norm{\ket{\eta_k^\prime}} - \delta} + \alpha^2 \beta^{2k} \\
	&= \Norm{\ket{\eta_k^\prime}}^2 - 2 \alpha^2 \beta^k \Norm{\ket{\eta_k^\prime}} + \alpha^2 \beta^{2k} + \beta^k \cdot 2 \alpha \Paren{2 \Norm{\ket{\eta_k^\prime}} + 1} \delta.
\end{align*}
By the triangle inequality
\begin{equation*}
	\Norm{\ket{\eta_k^\prime}}
	= \Norm{\ket\psi  - \alpha \sum_{j=0}^{k-1} \beta^j \ket{\phi_j^\prime}}
	\le 1 + \alpha \sum_{j=0}^{k-1} \beta^j
	\le 1 + \frac\alpha{1-\beta}
	= 1 + \frac1\gamma,
\end{equation*}
so recalling that $\delta = 0.01 \beta^{2T}$ it holds that
\begin{equation*}
	2\alpha \Paren{2 \Norm{\ket{\eta_k^\prime}} + 1} \delta
	\le 2\alpha \Paren{2 \Paren{1 + \frac1\gamma} + 1} \cdot 0.01 \beta^{2T}
	< 0.1 \beta^{2T},
\end{equation*}
and therefore
\begin{equation*}
	\Norm{\ket{\eta_{k+1}^\prime}}^2
	\le \Norm{\ket{\eta_k^\prime}}^2 - 2 \alpha^2 \beta^k \Norm{\ket{\eta_k^\prime}} + \alpha^2 \beta^{2k} + \beta^k \cdot 0.1 \beta^{2T}.
\end{equation*}
The rest of the inductive argument follows by reasoning similar to that in \cref{sec:hpc}.

Therefore $\Norm{\ket{\eta_T^\prime}} \le \sqrt{\beta^{2T} + 0.1 \beta^{2T} / (1-\beta)} < 1.7 \beta^T$, and the rest of the proof is similar to that in \cref{sec:hpc}.

\section{State synthesis algorithms: removing the postselection} \label{sec:msa}
In this section we formally state and prove both the non-clean, one-query version and the clean, four-query version of \cref{thm:mi}. We also give a clean, ten-query state synthesis algorithm that has two advantages compared to the four-query algorithm. First, the ten-query algorithm is simpler. Second, the oracle in the ten-query algorithm requires fewer input bits, which will be relevant when we prove circuit upper bounds for approximately constructing arbitrary states (i.e.\ \cref{thm:ubs}).

All of these algorithms invoke the algorithm from \cref{lem:help}. Let $\gamma$ be the constant from \cref{lem:help}. Given an $n$-qubit state $\ket\psi$ and parameter $\eps$, when we say ``define $A, f, \ket\tau, z, t$ as in \cref{lem:help} with respect to $\ket\psi$ and error tolerance $\eps$", we mean that $A$ is the circuit $A_n$ from \cref{lem:help} and all other variables have the same meaning as in \cref{lem:help}. We will write $\eps = \eps(n)$ and $t = t(n)$ when $n$ is implicit. In the four- and ten-query algorithms not all of the queries will be to precisely the same function, but this can easily be addressed as discussed in the paragraph after \cref{que:ssp}. Although our results will be stated in terms of \qaczf circuits, similar results hold for circuits consisting of one- and two-qubit gates by \cref{lem:qsim}.
\subsection{The one-query algorithm} \label{sec:oqa}

We prove the following by using parallel repetition to boost the success probability from \cref{lem:help}:

\begin{thm} \label{thm:main}
	Let $\eps$ be a function such that $\eps(n) \ge \exp(-\poly(n))$ and $\eps(n)$ is computable in $\poly(n)$ time for all $n$. Then there is a uniform sequence of $\poly(n)$-qubit \qaczf circuits $(C_n)_n$, each making one query to a classical oracle, such that for every $n$-qubit state $\ket\psi$ there exists a classical oracle $f = f_{\ket\psi}$ such that the reduced state $\rho$ on the first $n$ qubits of $C_n^f \ket\zs$ satisfies $\td(\rho, \psi) \le \eps(n)$. Furthermore there is an algorithm that takes as input the description of an $n$-qubit state $\ket\psi$ and a string $x$, runs in $\poly(n)$ space, and outputs $f_{\ket\psi} (x)$.
\end{thm}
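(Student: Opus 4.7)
The plan is to boost the $\gamma^2$-success probability of \cref{lem:help} by parallel repetition. Set $k = \lceil \log(2/\eps(n)) / \log(1/(1-\gamma^2)) \rceil = O(\log(1/\eps(n))) \le \poly(n)$. The circuit $C_n$ first runs $k$ parallel copies of the algorithm $A_n$ from \cref{lem:help} with inner error tolerance $\eps(n)/(2k)$, acting on disjoint registers $\reg{A}_i \reg{B}_i$ for $i=1,\dotsc,k$, where $\reg{A}_i$ holds the first $t = t(n)$ qubits and $\reg{B}_i$ the next $n$ qubits of the $i$-th copy's output (the trailing classical $\ket{z}$ register of each copy is carried along inertly). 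By \eqref{eq:par-quer}, the $k$ parallel queries merge into a single query to a classical oracle $f = f_{\ket\psi}$, and the resulting state on $(\reg{A}_i \reg{B}_i)_{i=1}^k$ is within $\eps(n)/2$ in $2$-norm of
\begin{equation*}
\ket\Phi = \bigotimes_{i=1}^k \Paren{\gamma \ket{0^t}_{\reg{A}_i} \ket\psi_{\reg{B}_i} + \sqrt{1-\gamma^2} \ket{\tau_i}_{\reg{A}_i \reg{B}_i}},
\end{equation*}
where each $\ket{\tau_i}$ satisfies $(\bra{0^t} \otimes I_n)\ket{\tau_i} = 0$.

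Next, $C_n$ computes into a fresh ancilla register $\reg{J}$ the smallest index $j \in [k]$ for which $\reg{A}_j$ holds $\ket{0^t}$, writing a distinguished value $\bot$ if no such $j$ exists. This is a priority-encoder computation on the flags $b_i = [\reg{A}_i = \ket{0^t}]$; each $b_i$ is a generalized Toffoli on the negated bits of $\reg{A}_i$, and priority encoding lies in $\cc{AC^0} \subseteq \cc{QAC_f^0}$ (realized reversibly with uncomputed garbage in the usual way). Controlled on $\ket{j}_{\reg{J}}$ for $j \ne \bot$, $C_n$ then swaps $\reg{B}_j$ with a fresh $n$-qubit output register $\reg{O}$ initialized to $\ket{0^n}$, using \cref{lem:qaczf-swap}; if $\reg{J} = \bot$ the circuit does nothing. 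The register $\reg{O}$ is designated as the first $n$ qubits of $C_n$'s output.

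For correctness, expand $\ket\Phi$ as a sum over subsets $S \subseteq [k]$ indicating which $\reg{A}_i$ equal $\ket{0^t}$. The $S = \emptyset$ branch has squared amplitude $(1-\gamma^2)^k \le \eps(n)/2$ by our choice of $k$ and leaves $\reg{O} = \ket{0^n}$; every $S \ne \emptyset$ branch places $\ket\psi$ in $\reg{B}_{\min S}$ and hence in $\reg{O}$ after the swap. Because $\reg{J}$ distinguishes these two branch families, tracing out everything but $\reg{O}$ on the ideal state $\ket\Phi$ yields $(1-(1-\gamma^2)^k)\kb\psi + (1-\gamma^2)^k \kb{0^n}$, at trace distance $\le \eps(n)/2$ from $\kb\psi$. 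Combining this with the trace-distance error $\le \eps(n)/2$ inherited from \cref{lem:help} --- using \eqref{eq:tdps} to convert $2$-norm error to trace distance and contractivity \eqref{eq:tdc} for the subsequent unitary steps --- gives $\td(\rho, \psi) \le \eps(n)$. The $\poly(n)$-space computability of $f$ is immediate from \cref{lem:help} together with the XOR structure of \eqref{eq:par-quer}.

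The main subtlety I expect is the error bookkeeping: verifying that the $2$-norm error from the $k$ parallel copies of $A_n$ accumulates only additively (to $\eps(n)/2$) over tensor products, remains valid as a trace-distance bound through the priority-encoder and controlled-swap steps, and can then cleanly be summed with the $(1-\gamma^2)^k$ residual from the $S = \emptyset$ branch. Since $\eps(n) \ge \exp(-\poly(n))$, both $k$ and $\eps(n)/(2k)$ lie in the regime allowed by \cref{lem:help}, so the resulting $C_n$ is a uniform $\poly(n)$-qubit $\cc{QAC_f^0}$ circuit.
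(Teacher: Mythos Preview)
Your proposal is correct and matches the paper's proof essentially step for step: both run $\poly(n)$ parallel copies of the \cref{lem:help} circuit with inner tolerance $\eps/(2\cdot\text{copies})$, merge the queries via \cref{eq:par-quer}, select the first copy whose $t$-qubit flag register is $0^t$, and finish with the same two-term triangle-inequality error budget. The only cosmetic difference is that you compute the ideal reduced state on $\reg O$ explicitly as a binary mixture and read off the trace distance $(1-\gamma^2)^k$, whereas the paper bounds it via \cref{eq:fvdg} as $\exp(-\gamma^2 s/2)$, leading to a slightly different (but equivalent up to constants) choice of repetition count.
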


\begin{proof}
	Let $s = \ceil*{2 \ln(2/\eps) / \gamma^2} \le \poly(n)$, and define $A, f, \ket\tau, z, t$ as in \cref{lem:help} with respect to $\ket\psi$ and error tolerance $\eps / (2 s) \ge \exp(-\poly(n))$. The algorithm is presented in \cref{alg:oqss}, where $\reg A_k$ is a $t$-qubit register, $\reg B_k$ is an $n$-qubit register, and $\reg C_k$ is a $|z|$-qubit register for all $k \in [s]$. \cref{alg:oqss} is phrased in terms of multiple parallel queries, but these can be merged into a single query using \cref{eq:par-quer}. The \qaczf implementation of \cref{oqline:5} uses \cref{lem:qaczf-swap}.
	
	\begin{algorithm}
		\caption{One-query state synthesis}
		\label{alg:oqss}
		\begin{algorithmic}[1]
			\For{$k \in [s]$ in parallel}
			\State Apply $A^f$ in $\reg A_k \reg B_k \reg C_k$. \Comment Merge queries using \cref{eq:par-quer}.
			\EndFor \hlabel{oqline:3}
			\Ctrl{the classical state $\ket{x_1}_{\reg A_1} \dotsb \ket{x_s}_{\reg A_s}$} \hlabel{oqline:4}
			\If{there exists $k$ such that $x_k = 0^t$} \Return $\reg B_k$ for the smallest such $k$. \hlabel{oqline:5}
			\Else{} \Return an arbitrary $n$-qubit state. \hlabel{oqline:6}
			\EndIf \hlabel{oqline:7}
			\EndCtrl \hlabel{oqline:8}
		\end{algorithmic}
	\end{algorithm}
	
	Let
	\begin{equation*}
		\ket{\tilde \varphi} = \bigotimes_{k=1}^s \Paren{\gamma \ket{0^t}_{\reg A_k} \ket\psi_{\reg B_k} + \sqrt{1 - \gamma^2} \ket\tau_{\reg A_k \reg B_k}} \ket{z}_{\reg C_k},
	\end{equation*}
	and let $\tilde \rho$ denote the $n$-qubit output state produced by running \cref{oqline:4,oqline:5,oqline:6,oqline:7,oqline:8} on $\ket{\tilde \varphi}$. If the $\reg A_k$ registers of $\ket{\tilde \varphi}$ are measured in the standard basis, then the probability that none of the measurement outcomes are $0^t$ is $\Paren{1 - \gamma^2}^s$, so by \cref{eq:fvdg}
	\begin{equation*}
		\td\Paren{\psi, \tilde\rho} \le \Paren{1 - \gamma^2}^{s/2} \le \exp\Paren{-\gamma^2 s / 2} \le \eps/2.
	\end{equation*}
	Let $\ket\varphi$ denote the state of the system after \cref{oqline:3}. Then by \cref{eq:tdc,eq:tdps} and the triangle inequality
	\begin{equation*}
		\td\Paren{\tilde\rho, \rho} \le
		\td\Paren{\tilde\varphi, \varphi} \le
		\Norm{\ket{\tilde\varphi} - \ket\varphi} \le
		s \cdot \eps/(2s) =
		\eps/2,
	\end{equation*}
	so by the triangle inequality
	\begin{equation*}
		\td\Paren{\psi, \rho}
		\le \td\Paren{\psi, \tilde\rho} + \td\Paren{\tilde\rho, \rho}
		\le \eps/2 + \eps/2
		= \eps. \qedhere
	\end{equation*}
\end{proof}

\subsection{The ten-query algorithm} \label{sec:eqa}

We prove the following by using amplitude amplification to boost the success probability from \cref{lem:help}:

\begin{thm} \label{thm:tqa}
	Let $\eps$ be a function such that $\eps(n) \ge \exp(-\poly(n))$ and $\eps(n)$ is computable in $\poly(n)$ time for all $n$. Then there is a uniform sequence of $\poly(n)$-qubit \qaczf circuits $(C_n)_n$, each making ten queries to a classical oracle, such that for every $n$-qubit state $\ket\psi$ there exists a classical oracle $f = f_{\ket\psi}$ such that $\Norm{C_n^f \ket\zs - \ket\psi \ket\zs} \le \eps(n)$. Furthermore there is an algorithm that takes as input the description of an $n$-qubit state $\ket\psi$ and a string $x$, runs in $\poly(n)$ space, and outputs $f_{\ket\psi} (x)$.
\end{thm}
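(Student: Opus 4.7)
The plan is to run amplitude amplification (AA) on the one-query circuit from \cref{lem:help} to boost the success amplitude from $\gamma \approx 0.18$ up to essentially $1$, and then to uncompute the classical auxiliary register $\ket z$ with a single additional query so that the construction is clean.

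First I would invoke \cref{lem:help} with error tolerance $\eps^\prime = \eps(n)/c$ for a constant $c$ to be fixed by the error analysis, obtaining a $\poly(n)$-size one-query \qaczf circuit $A$ and an oracle $f$ such that $A \ket\zs$ is within $\eps^\prime$ of $\Paren{\gamma \ket{0^t}\ket\psi + \sqrt{1 - \gamma^2} \ket\tau}\ket z$. Since $\gamma$ is slightly too large for four iterations of AA to rotate exactly onto the good subspace (one would have $9 \arcsin \gamma > \pi/2$), I would adjoin one ancilla qubit and apply a fixed one-qubit rotation on it so that, with ``good'' redefined to mean that both the $t$-qubit flag is $\ket{0^t}$ and the new ancilla is $\ket 0$, the good-subspace amplitude becomes exactly $\gamma^\prime = \sin(\pi/18)$; this achieves $(2k+1) \arcsin \gamma^\prime = \pi/2$ for $k = 4$. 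Call the modified one-query circuit $A^\prime$. I would then apply four iterations of the AA operator $Q = -A^\prime R_0 (A^\prime)^\dagger R_{\text{good}}$, using $2k = 8$ additional queries. In the ideal case the state after these $2k+1 = 9$ queries is exactly $\ket 0 \ket{0^t}\ket\psi\ket z$. Finally I would uncompute $\ket z$ with one more query to $f$ --- this works because one checks from the ideal description that the two-dimensional invariant subspace of $Q$ is spanned by $\ket{0^t}\ket\psi \otimes \ket z$ and $\ket\tau \otimes \ket z$, so $\ket z$ survives as a disentangled tensor factor, while the query's input register stays at $\ket\zs$ throughout AA. The grand total is ten queries.

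The main obstacle will be the error analysis for AA, because \cref{lem:help} only controls the action of $A$ on $\ket\zs$ rather than on general inputs, so there is no natural ``ideal'' unitary $A_{\text{ideal}}$ whose operator-norm distance from $A$ we can propagate through the iterated AA circuit in the usual way. Instead I would work directly inside the two-dimensional invariant subspace of $Q$ spanned by $\Pi_{\text{good}} A^\prime \ket\zs$ and $(I - \Pi_{\text{good}}) A^\prime \ket\zs$: letting $p = \Norm{\Pi_{\text{good}} A^\prime \ket\zs}^2$, the guarantee of \cref{lem:help} gives $\Mag{\sqrt p - \gamma^\prime} \le \eps^\prime$ (since the good-subspace projector is a contraction), hence $\Mag{(2k+1) \arcsin \sqrt p - \pi/2} = O(\eps^\prime)$. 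Consequently after the four AA iterations the bad-subspace residual has amplitude $\cos((2k+1) \arcsin \sqrt p) = O(\eps^\prime)$, and the good-subspace component $\Pi_{\text{good}} A^\prime \ket\zs / \sqrt p$ is within $O(\eps^\prime / \gamma^\prime) = O(\eps^\prime)$ of the ideal $\ket 0 \ket{0^t} \ket\psi \ket z$, by a triangle inequality combining the $\eps^\prime$ vector error from \cref{lem:help} with an $O(\eps^\prime)$ normalization drift. The final uncomputation is an exact unitary and so preserves this bound; choosing $c$ large enough gives overall error $\le \eps(n)$.

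The remaining claims are routine. The added one-qubit rotation, the reflections $R_0$ and $R_{\text{good}}$ (each a generalized-Toffoli-controlled phase, hence in \qaczf), and the final uncomputation query all have $\poly(n)$-size constant-depth \qaczf implementations, so $C_n$ remains a uniform $\poly(n)$-qubit \qaczf circuit making ten queries to $f$. The oracle $f$ is the one produced by \cref{lem:help} at tolerance $\eps^\prime = \eps(n)/c$, which by the last sentence of \cref{lem:help} is computable in $\poly(n)$ space given the description of $\ket\psi$.
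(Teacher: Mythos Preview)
Your proposal is correct and follows essentially the same route as the paper: adjoin one ancilla with a fixed rotation to lower the success amplitude to $\sin(\pi/18)$, run four Grover iterations (nine queries total), and spend a tenth query to uncompute $\ket z$. The only difference is in the error analysis for the obstacle you flag: the paper dispatches it in one line by invoking \cref{lem:err} (with $k=9$ and error tolerance $\eps/(9\sqrt2)$), which is exactly the black-box tool for propagating an error in $A\ket\zs$ through a circuit that calls $A$ and $A^\dagger$; your direct argument inside the two-dimensional invariant subspace achieves the same bound and is equally valid.
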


\begin{proof}
	Define $A, f, \ket\tau, z, t$ as in \cref{lem:help} with respect to $\ket\psi$ and error tolerance $\eps/(9 \sqrt 2)$. Since $\sin(\pi/18) < 0.174 < 0.18 < \gamma$, there exists a one-qubit gate $G$ such that
	\begin{equation*}
		G \ket0 = \frac{\sin(\pi/18)}\gamma \ket0 + \sqrt{1 - \Paren{\frac{\sin(\pi/18)} \gamma}^2} \ket1.
	\end{equation*}
	Let $\ket{\theta} = \Paren{G \otimes A^f} \ket\zs$. The algorithm is described in \cref{alg:eqss}.\footnote{Inspection of the proof of \cref{lem:help} reveals that the last query (i.e.\ uncomputing $z$) can be computed in $\poly(n)$ space, as required by the theorem. The idea to use $G$ to artificially decrease the initial ``success" amplitude was suggested to us by Wiebe~\cite{Wie21a}.}
	
	\begin{algorithm}
		\caption{Ten-query state synthesis}
		\label{alg:eqss}
		\begin{algorithmic}[1]
			\State Construct $\Paren{\Paren{2\theta - I} \cdot \Paren{\Paren{I - 2\kb{0^{1+t}}} \otimes I}}^4 \ket{\theta}$.
			\State Use one more query to uncompute $z$.
		\end{algorithmic}
	\end{algorithm}
	
	By \cref{lem:err} it suffices to prove that if we substitute the state
	\begin{equation*}
		\ket{\tilde\theta} = G\ket0 \otimes \Paren{\gamma \ket{0^t} \ket\psi + \sqrt{1 - \gamma^2} \ket\tau} \ket{z}
	\end{equation*}
	for each occurrence of $\ket\theta$ in \cref{alg:eqss}, then the output state is exactly $\ket\psi \ket\zs$. Since $\Paren{\bra{0^{1+t}} \otimes I} \ket{\tilde \theta} = \sin(\pi/18) \ket\psi \ket{z}$, we may write
	\begin{equation*}
		\ket{\tilde\theta} = \Paren{\sin(\pi/18) \ket{0^{1+t}} \ket\psi + \cos(\pi/18) \ket\varphi} \ket{z}
	\end{equation*}
	for some state $\ket\varphi$ such that $\Paren{\bra{0^{1+t}} \otimes I} \ket\varphi = 0$. By well-known arguments (cf.\ the proof of correctness of Grover's algorithm~\cite{NC10}) it follows that if $\ket{\tilde\theta}$ is substituted for $\ket\theta$, then the output state is
	\begin{equation*}
		\Paren{\sin(9 \cdot \pi/18) \ket{0^{1+t}} \ket\psi + \cos(9 \cdot \pi/18) \ket\varphi} \ket\zs
		= \ket\psi \ket\zs. \qedhere
	\end{equation*}
\end{proof}

\subsection{The four-query algorithm} \label{sec:fqa}

The following statement is identical to \cref{thm:tqa} except with ``four" instead of ``ten", and is proved by a combination of the ideas from \cref{sec:oqa,sec:eqa}:

\begin{thm} \label{thm:fqa}
	Let $\eps$ be a function such that $\eps(n) \ge \exp(-\poly(n))$ and $\eps(n)$ is computable in $\poly(n)$ time for all $n$. Then there is a uniform sequence of $\poly(n)$-qubit \qaczf circuits $(C_n)_n$, each making four queries to a classical oracle, such that for every $n$-qubit state $\ket\psi$ there exists a classical oracle $f = f_{\ket\psi}$ such that $\Norm{C_n^f \ket\zs - \ket\psi \ket\zs} \le \eps(n)$. Furthermore there is an algorithm that takes as input the description of an $n$-qubit state $\ket\psi$ and a string $x$, runs in $\poly(n)$ space, and outputs $f_{\ket\psi} (x)$.
\end{thm}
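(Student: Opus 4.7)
The proof of \cref{thm:fqa} naturally combines the parallel-repetition trick of \cref{thm:main} with the amplitude-amplification framework of \cref{thm:tqa}. Since one iteration of amplitude amplification boosts an initial ``good" amplitude $\sin\theta_0$ to $\sin(3\theta_0)$, it suffices to arrange the initial amplitude to equal $\sin(\pi/6) = 1/2$; a single iteration of AA then lands exactly on the good subspace. The single-copy success amplitude $\gamma \approx 0.18$ from \cref{lem:help} is below $1/2$, but running $A^f$ in parallel on $s$ independent register blocks---merged into a single query via \cref{eq:par-quer}---boosts the amplitude of the ``at least one copy succeeded" subspace to $\sqrt{1 - (1-\gamma^2)^s}$, which exceeds $1/2$ for some constant $s$; a one-qubit calibration gate $G$ on a fresh qubit then brings this amplitude to exactly $\sin(\pi/6)$.

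Concretely, I would fix an error tolerance $\eps' \le \eps / O(1)$ (with the $O(1)$ absorbing the four-query factor of \cref{lem:err}), define $A, f, \ket\tau, z, t$ as in \cref{lem:help} with error tolerance $\eps'$, and build a subroutine $B$ that on input $\ket\zs$: (i) applies $A^f$ in parallel on register blocks $\reg A_k \reg B_k \reg C_k$ for $k \in [s]$, merged into one query to $f$; (ii) computes a flag $\reg F$ indicating ``at least one $\reg A_k = \ket{0^t}$" and, via \cref{lem:qaczf-swap}, moves the first-successful $\reg B_k$ into a designated output register $\reg B^*$; and (iii) applies the calibration gate $G$ so that $\ket\theta = B\ket\zs$ has amplitude exactly $\sin(\pi/6)$ on the ``good" subspace (specified by $\reg F$ and the $G$-qubit). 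I would then apply one iteration of amplitude amplification: a query-free reflection about the good subspace, followed by $B(2\kb\zs - I)B^\dagger$ (1 forward and 1 inverse query to $f$, for 2 queries total), bringing the state onto the good subspace with amplitude $1$. One final query would uncompute the residual contents of the non-$\reg B^*$ instance registers, giving a total of $1 + 2 + 1 = 4$ queries, and the resulting circuit is $\poly(n)$-size \qaczf by the same arguments as in \cref{sec:oqa,sec:eqa}.

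The main obstacle will be verifying that the final uncomputation query genuinely disentangles the non-$\reg B^*$ instance registers, producing a clean $\ket\psi_{\reg B^*}\ket\zs_{\text{anc}}$ rather than $\ket\psi_{\reg B^*}$ tensored with entangled garbage. After the AA iteration the ``good" state is a coherent superposition over ``which copy $k^*$ succeeded first" and ``which other copies happened to succeed", so naively applying $(A^f)^{-1}$ in parallel does not restore $\ket\zs$ on the $k^*$-th copy, whose $\reg B_{k^*}$ was swapped out. The natural fix is to absorb a compensating operation into the extraction step inside $B$---for instance, also swapping the $\reg A_{k^*}$ contents into a fresh register, so that the final $(A^f)^{-1}$ query inverts $B$-up-to-the-swap on every copy---and then verify correctness by the now-standard idealized/actual split used in \cref{sec:post} and the proofs of \cref{thm:main,thm:tqa}: substitute the ideal post-query state $\gamma \ket{0^t}\ket\psi + \sqrt{1 - \gamma^2}\ket\tau$ for each occurrence of $A^f \ket\zs$, verify algebraically that the idealized circuit outputs exactly $\ket\psi \ket\zs$, and apply \cref{lem:err} to transfer the conclusion to the actual circuit with error scaling linearly in the query count.
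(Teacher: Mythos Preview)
Your plan differs from the paper's, and the final uncomputation step has a genuine gap. After your single amplitude-amplification iteration the (idealized) good state is $\ket\psi_{\reg B^*}$ tensored with a superposition over $k^*$ of
\[
\ket{k^*}_{\reg K}\otimes\bigotimes_{j<k^*}\ket\tau_{\reg A_j\reg B_j}\ket z_{\reg C_j}\otimes\ket{0^t}_{\reg A_{k^*}}\ket\zs_{\reg B_{k^*}}\ket z_{\reg C_{k^*}}\otimes\bigotimes_{j>k^*}(A^f\ket\zs)_{\reg A_j\reg B_j\reg C_j}.
\]
A single merged $(A^f)^\dagger$ resets the $j>k^*$ copies, but the $j<k^*$ copies sit in the \emph{failure} state $\ket\tau\ket z$, and $(A^f)^\dagger$ does not send that to $\ket\zs$. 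Your proposed fix---also swapping out $\reg A_{k^*}$---touches only the $j=k^*$ copy and does nothing for these failed copies. To uncompute them you would need a one-query preparation of $\ket\tau$ (whose inverse you could apply), and none is supplied; the only preparation of $\ket\tau$ available costs three queries, pushing your total to $1+2+3=6$.

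The paper's construction reverses the roles: amplitude amplification is applied to the failure branch, not the success branch, and there is \emph{no} outer AA. Since the failure amplitude $\delta=\sqrt{1-\gamma^2}\approx 0.98$ already exceeds $\sin(\pi/6)=1/2$, one AA iteration (with a calibration qubit) yields a three-query unitary $U^f$ with $U^f\ket\zs=\ket0\ket\tau\ket z$ exactly in the idealized analysis. The algorithm then runs $s=\poly(n)$ parallel copies of $A^f$ (one merged query), swaps the first successful $\reg B_k$ to the output, and in one parallel block applies $(U^f)^\dagger$ to copies $j<k$ and $(A^f)^\dagger$ to copies $j>k$ (three merged queries, aligned to the three sequential query layers of $U^f$), for $1+3=4$ in total. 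The no-success branch is not amplified away but left as residual error of norm $\delta^s$, absorbed by taking $s$ large; the index register $\reg K$ is uncomputed query-free because its amplitudes $\gamma\delta^k$ form a product state over the bits of $k$.
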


\begin{proof}
	Let $\delta = \sqrt{1 - \gamma^2}$. Let $s$ be the smallest power of 2 that is at least $\log(4/\eps) / \log(1/\delta)$, and observe that $s \le 2 \log(4/\eps) / \log(1/\delta) \le \poly(n)$. Define $A, f, \ket\tau, z, t$ as in \cref{lem:help} with respect to $\ket\psi$ and error tolerance $\eps/(\sqrt 2 \cdot 8s)$. 
	
	First we show how to approximately construct $\ket\tau \ket{z}$ with three queries using amplitude amplification. Since $\sin(\pi/6) = 1/2 < 0.98 \approx \delta$, there exists a one-qubit gate $G$ such that
	\begin{equation*}
		G\ket0 = \frac {\sin(\pi/6)} \delta \ket0 + \sqrt {1 - \Paren{\frac {\sin(\pi/6)} \delta}^2} \ket1.
	\end{equation*}
	Let $\ket{\theta} = \Paren{G \otimes A^f} \ket\zs$ and
	\begin{equation*}
		U^f = \Paren{2\theta - I} \Paren{\Paren{I - 2\kb0 \otimes \Paren{I - \kb{0^t}}} \otimes I} \Paren{G \otimes A^f}.
	\end{equation*}
	Then $U^f$ can be efficiently implemented with three queries, and if $A^f$ \emph{exactly} constructs $\Paren{\gamma \ket{0^t} \ket\psi + \delta \ket\tau} \ket{z}$ then $U^f$ \emph{exactly} constructs $\ket0 \ket\tau \ket{z}$ by reasoning similar to that in \cref{sec:eqa}.
	
	Since
	\begin{equation*}
		\sum_{k=0}^{s-1} \delta^k \ket k
		= \Paren{\ket0 + \delta^{s/2} \ket1} \otimes \Paren{\ket0 + \delta^{s/4} \ket1} \otimes \dotsb \otimes \Paren{\ket0 + \delta \ket1},
	\end{equation*}
	there exists a tensor product $L$ of one-qubit gates such that
	\begin{equation*}
		L\ket{0^{\log s}} = \frac \gamma {\sqrt{1 - \delta^{2s}}} \sum_{k=0}^{s-1} \delta^k \ket k.
	\end{equation*}
	The algorithm is presented in \cref{alg:fqss}, where $\reg A_k$ is a $t$-qubit register, $\reg B_k$ is an $n$-qubit register, and $\reg C_k$ is a $|z|$-qubit register for all $0 \le k < s$; additionally $\reg K$ is a $(\log s)$-qubit register and $\reg O$ is an $n$-qubit register. The extra ancilla qubit in \cref{line:ab} accounts for the fact that $U^f$ acts on one more qubit than $A^f$ does. The \qaczf implementation of \cref{line:foo} uses \cref{lem:qaczf-swap}, and the \qaczf implementations of \cref{line:ab,line:bar} use \cref{lem:cqacz}.\footnote{Although not directly implied by \cref{lem:cqacz}, inspection of the proof of \cref{lem:cqacz} reveals that if $(B_n)_n$ is a \emph{uniform} sequence of polynomial-size \qaczf circuits then $(\crl B_n)_n$ can be implemented by a \emph{uniform} sequence of polynomial-size \qaczf circuits.}
	
	\begin{algorithm}
		\caption{Four-query state synthesis}
		\label{alg:fqss}
		\begin{algorithmic}[1]
			\For{$0 \le k < s$ in parallel}
				\State Apply $A^f$ in $\reg A_k \reg B_k \reg C_k$. \Comment Merge queries using \cref{eq:par-quer}.
			\EndFor
			\Ctrl{the classical state $\ket{x_0}_{\reg A_0} \dotsb \ket{x_{s-1}}_{\reg A_{s-1}}$}
				\If{there exists $k$ such that $x_k = 0^t$} $\reg K \gets$ the smallest such $k$.
				\EndIf
			\EndCtrl
			\Ctrl{the classical state $\ket k_{\reg K}$}
				\State Swap $\reg B_k$ and $\reg O$. \hlabel{line:foo}
				\State Uncompute $\ket{z}_{\reg C_k}$, controlled on $\ket{z}_{\reg C_j}$ for some $j \neq k$.
				\For{$0 \le j < s$ in parallel} \Comment Merge queries using \cref{eq:par-quer}.
					\If{$j < k$} apply $\adj{\Paren{U^f}}$ in $\reg A_j \reg B_j \reg C_j$ (with one extra ancilla qubit). \hlabel{line:ab}
					\ElsIf{$j > k$} apply $\adj{\Paren{A^f}}$ in $\reg A_j \reg B_j \reg C_j$. \hlabel{line:bar}
					\EndIf
				\EndFor
			\EndCtrl
			\State Apply $\adj L$ in $\reg K$.
		\end{algorithmic}
	\end{algorithm}
	
	Assume for now that $A^f$ \emph{exactly} constructs $\Paren{\gamma \ket{0^t} \ket\psi + \delta \ket\tau} \ket{z}$. Let $\ket{\Psi_\ell}$ denote the state of the system after line $\ell$, up to omitting registers in the all-zeros state for brevity. Then
	\begin{equation*}
		\ket{\Psi_3}
		= \bigotimes_{k=0}^{s-1} A^f \ket\zs_{\reg A_k \reg B_k \reg C_k}
		= \bigotimes_{k=0}^{s-1} \Paren{\gamma \ket{0^t}_{\reg A_k} \ket\psi_{\reg B_k} + \delta \ket \tau_{\reg A_k \reg B_k}} \ket{z}_{\reg C_k},
	\end{equation*}
	so
	\begin{equation*}
		\Norm{\ket{\Psi_7} -
			\sum_{k=0}^{s-1} \delta^k \gamma \ket{k}_{\reg K} \otimes
			\bigotimes_{j=0}^{k-1} \ket\tau_{\reg A_j \reg B_j} \ket{z}_{\reg C_j}
			\otimes \ket{0^t}_{\reg A_k} \ket\psi_{\reg B_k} \ket{z}_{\reg C_k}
			\otimes \bigotimes_{\mathclap{j=k+1}}^{s-1} A^f \ket\zs_{\reg A_j \reg B_j \reg C_j}}
		\le \delta^s,
	\end{equation*}
	so
	\begin{equation*}
		\Norm{\ket{\Psi_{16}} -
			\ket\psi_{\reg O} \otimes
			\sum_{k=0}^{s-1} \delta^k \gamma \ket{k}_{\reg K}
			\otimes \ket\zs_{\reg A_0 \reg B_0 \reg C_0 \dotsb \reg A_{s-1} \reg B_{s-1} \reg C_{s-1}}}
		\le \delta^s,
	\end{equation*}
	so
	\begin{equation*}
		\Norm{\ket{\Psi_{17}} - \sqrt{1 - \delta^{2s}} \ket\psi \ket\zs} \le \delta^s.
	\end{equation*}
	By the triangle inequality it follows that
	\begin{equation*}
		\Norm{\ket{\Psi_{17}} - \ket\psi \ket\zs}
		\le 1 - \sqrt{1 - \delta^{2s}} + \delta^s
		\le \delta^{2s} + \delta^s
		\le 2\delta^s.
	\end{equation*}
	
	Now remove the assumption that $A^f$ constructs $\Paren{\gamma \ket{0^t} \ket\psi + \delta \ket\tau} \ket{z}$ exactly. \cref{alg:fqss} makes $4s$ queries to $\crl{A^f}$ and its inverse, so by \cref{lem:err} it follows that the \emph{actual} output state $\ket{\Psi_{17}}$ satisfies
	\begin{equation*}
		\Norm{\ket{\Psi_{17}} - \ket\psi \ket\zs}
		\le 2\delta^s + \sqrt 2 \cdot 4s \cdot \eps/(\sqrt 2 \cdot 8s)
		\le \eps/2 + \eps/2
		= \eps. \qedhere
	\end{equation*}
\end{proof}

\section{State complexity classes} \label{sec:scc}
\label{sec:state_classes}
In this section we define various state complexity classes and establish some basic facts about them as preparation for the proof that $\cc{statePSPACE} \subseteq \cc{stateQIP}(6)$. Although for simplicity these classes are defined in terms of state sequences where the $n$'th state is on $n$ qubits, the definitions (and related results) generalize easily to sequences where the $n$'th state is on $\poly(n)$ qubits. Much of the language in this section is closely modeled on passages from Rosenthal and Yuen~\cite{RY21} and Metger and Yuen~\cite{MY23}.
\subsection{$\cc{polyL}$-explicit state sequences} \label{sec:esf}

Recall from \cref{sec:prelim} that we define an $\eps$-precision description of a pure state $\sum_{x \in \cube n} \alpha_x \ket x$ to be a tuple $\Paren{\tilde\alpha_x}_{x \in \cube n}$ of complex numbers specified exactly in binary such that $\Mag{\tilde\alpha_x - \alpha_x} \le \eps$ for all $x$. We define a similar notion for mixed states: an \emph{$\eps$-precision description} of a mixed state $\sum_{x,y \in \cube n} \rho_{x,y} \ketbra{x}{y}$ is a tuple $\Paren{\tilde \rho_{x,y}}_{x,y \in \cube n}$ of complex numbers specified exactly in binary such that $\Mag{\tilde\rho_{x,y} - \rho_{x,y}} \le \eps$ for all $x,y$.

\begin{dfn}[$\cc{polyL}$-explicit state sequences] \label{def:epsf}
	Let $\ket{\psi_n}$ be an $n$-qubit pure state for all $n$. We call the sequence $(\ket{\psi_n})_n$ \emph{$\cc{polyL}$-explicit} if for all functions of the form $\eps(n) = \exp(-\poly(n))$, there is an algorithm that on input $n$ outputs an $\eps(n)$-precision description of $\ket{\psi_n}$ using space $\poly(n)$ (i.e.\ space polylogarithmic in the output length).
	
	Similarly, let $\rho_n$ be an $n$-qubit mixed state for all $n$. We call the sequence $(\rho_n)_n$ \emph{$\cc{polyL}$-explicit} if for all functions of the form $\eps(n) = \exp(-\poly(n))$, there is an algorithm that on input $n$ outputs an $\eps(n)$-precision description of $\rho_n$ using space $\poly(n)$.
\end{dfn}

\begin{lem} \label{lem:exp}
	Let $(\rho_n)_n$ be a $\cc{polyL}$-explicit sequence of rank-$1$ mixed states. Then there is a $\cc{polyL}$-explicit sequence of pure states $(\ket{\psi_n})_n$ such that $\rho_n = \kb{\psi_n}$ for all $n$.
\end{lem}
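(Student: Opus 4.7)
The plan is the standard ``pivot'' reconstruction: if $\rho_n = \kb{\psi_n}$ with $\ket{\psi_n} = \sum_x \alpha_x \ket x$, then $(\rho_n)_{x, y} = \alpha_x \conj{\alpha_y}$, so fixing any pivot $y^*$ with $\alpha_{y^*} > 0$ recovers $\alpha_{y^*} = \sqrt{(\rho_n)_{y^*, y^*}}$ and $\alpha_x = (\rho_n)_{x, y^*}/\alpha_{y^*}$ for all $x$. Two things need to be arranged: (i) specifying $\ket{\psi_n}$ unambiguously, i.e., fixing its global phase; and (ii) bounding error propagation through the square root and the division.

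For (i), I would fix the pivot using a single \emph{a priori} precision. Let $\bar\rho_n$ denote the output of the hypothesized $\cc{polyL}$-explicit algorithm for $(\rho_n)_n$ run at precision $2^{-2n}$, let $y_n \in \cube n$ be the lex-first index maximizing $\bar\rho_{y,y}$, and define $\ket{\psi_n}$ to be the unique pure state with $\rho_n = \kb{\psi_n}$ and $\alpha_{y_n} > 0$. Since $\sum_y |\alpha_y|^2 = 1$, one has $\max_y (\rho_n)_{y, y} \ge 2^{-n}$, from which a short estimate gives $\alpha_{y_n} \ge 2^{-(n+2)/2}$. For each $\eps(n) = \exp(-\poly(n))$ the $\eps$-precision algorithm then (a) re-derives $y_n$ by recomputing $\bar\rho_n$ at the same fixed precision $2^{-2n}$ and iterating over $y \in \cube n$ in $\poly(n)$ space (tracking only the running best index and value), (b) queries $\tilde\rho_{x, y_n}$ at precision $\delta = \eps \cdot 2^{-n-O(1)}$ for each $x$, and (c) outputs $\tilde\alpha_{y_n} = \sqrt{\tilde\rho_{y_n, y_n}}$ (positive root) together with $\tilde\alpha_x = \tilde\rho_{x, y_n}/\tilde\alpha_{y_n}$ for each $x$. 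All arithmetic is standard $\poly(n)$-bit and runs in $\poly(n)$ space.

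The remaining step is error propagation for (ii). Writing $a = \alpha_{y_n}$ and $\tilde a = \tilde\alpha_{y_n}$, the identity $\tilde a - a = (\tilde a^2 - a^2)/(\tilde a + a)$ gives $|\tilde a - a| \le \delta/a \le O(\delta \cdot 2^{n/2})$ and in particular $\tilde a \ge a/2$; then using $|\rho_{x, y_n}| = |\alpha_x| \cdot a \le a$, a short triangle-inequality calculation yields $|\tilde\alpha_x - \alpha_x| \le O(\delta \cdot 2^n) \le \eps$ by the choice of $\delta$. The main conceptual obstacle is the pivot consistency issue: if two diagonals of $\rho_n$ are exactly or nearly tied at the maximum, the algorithm's choice of pivot could vary with the precision used, which would make $\ket{\psi_n}$ depend on $\eps$. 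Folding the pivot choice into the definition of $\ket{\psi_n}$ via the fixed precision $2^{-2n}$, as in (i), sidesteps this cleanly, since every higher-precision algorithm simply replays the same lex-first-maximizer computation on $\bar\rho_n$.
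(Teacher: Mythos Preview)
Your proposal is correct and follows essentially the same ``pivot reconstruction'' as the paper: choose an index $y$ with $\rho_{y,y}$ bounded below by $\Omega(2^{-n})$ using a fixed-precision scan, define $\ket{\psi_n} = \rho_n\ket y/\sqrt{(\rho_n)_{y,y}}$, and then output $\tilde\rho_{x,y}/\sqrt{\tilde\rho_{y,y}}$ at precision $\delta = \eps\cdot 2^{-\Theta(n)}$. The only cosmetic differences are that the paper picks $y$ as the lex-first index exceeding a threshold (rather than the lex-first maximizer) and uses the cruder bound $|\sqrt a - \sqrt b| \le \sqrt{|a-b|}$ in the error analysis, which forces $\delta = \Theta(\eps^2 2^{-2n})$ instead of your $\Theta(\eps\,2^{-n})$; neither difference matters.
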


\begin{proof}
	Fix $n$ and write $\rho = \rho_n = \sum_{x,y \in \cube n} \rho_{x,y} \ketbra{x}{y}$. Let $\Paren{\tilde\rho_{x,y}}_{x,y \in \cube n}$ be a $\Paren{\frac14 \cdot 2^{-n}}$-precision description of $\rho$ computable in $\poly(n)$ space. Since $\tr\rho = 1$ there exists a string $x$ such that $\rho_{x,x} \ge 2^{-n}$, implying that $\tilde\rho_{x,x} \ge \rho_{x,x} - \frac14 \cdot 2^{-n} \ge \frac34 \cdot 2^{-n}$. Let $y$ be the lexicographically first string such that $\tilde\rho_{y,y} \ge \frac34 \cdot 2^{-n}$ (which we have just shown to exist) and observe that $\rho_{y,y} \ge \tilde\rho_{y,y} - \frac14 \cdot 2^{-n} \ge \frac12 \cdot 2^{-n}$. Let
	\begin{equation*}
		\ket\psi
		= \ket{\psi_n}
		= \frac{\rho \ket y} {\sqrt{\rho_{y,y}}}
		= \sum_{x \in \cube n} \frac {\rho_{x,y}} {\sqrt{\rho_{y,y}}} \ket x.
	\end{equation*}
	Since $\rho$ is rank-1 it is easy to see that $\rho = \psi$.

	For $\eps = \exp(-\poly(n))$ an $\eps$-precision description of $\ket\psi$ can be computed in $\poly(n)$ space as follows. Let $\delta = \frac1{64} \cdot 2^{-2n} \eps^2 \ge \exp(-\poly(n))$ and let $(\sigma_{x,y^\prime})_{x,y^\prime \in \cube n}$ be a $\delta$-precision description of $\rho$ computable in $\poly(n)$ space. First compute $y$ (using that $\tilde\rho$ can be computed in $\poly(n)$ space), and then output $\Paren{\sigma_{x,y} / \sqrt{\sigma_{y,y}}}_{x \in \cube n}$.
	
	This algorithm is correct, because by the triangle inequality
	\begin{align*}
		\Mag{\frac{\sigma_{x,y}} {\sqrt{\sigma_{y,y}}} - \frac{\rho_{x,y}} {\sqrt{\rho_{y,y}}}}
		&= \Mag{\frac {\sigma_{x,y} \sqrt{\rho_{y,y}} - \sqrt{\sigma_{y,y}} \rho_{x,y}} {\sqrt{\sigma_{y,y} \rho_{y,y}}}}
		\le \frac{\sqrt{\rho_{y,y}} \cdot \Mag{\sigma_{x,y} - \rho_{x,y}} + \Mag{\rho_{x,y}} \cdot \Mag{\sqrt{\rho_{y,y}} - \sqrt{\sigma_{y,y}}}} {\sqrt{\Paren{\rho_{y,y} - \delta} \rho_{y,y}}} \\
		&\le \frac {\delta + \sqrt{\Mag{\rho_{y,y} - \sigma_{y,y}}}} {\sqrt{\Paren{\frac12 \cdot 2^{-n} - \delta} \cdot \frac12 \cdot 2^{-n}}}
		\le \frac {2 \sqrt\delta} {\sqrt{\frac18 \cdot 2^{-2n}}}
		\le \eps,
	\end{align*}
	where the second-to-last inequality uses that $\delta \le \frac14 \cdot 2^{-n}$.
\end{proof}

\subsection{The class $\cc{statePSPACE}$}

For convenience we use the universal gate set $\{ H, \mathit{CNOT}, T \}$~\cite{NC10} in the following definition, although our results hold for any universal gate set consisting of gates with algebraic entries.

\begin{dfn}[General quantum circuits and space-uniformity] \label{def:gqc}
	A \emph{general quantum circuit} is a circuit consisting of gates from the set $\{H, \mathit{CNOT}, T\}$ as well as non-unitary gates that (a) introduce new qubits initialized in the zero state, (b) trace them out, or (c) measure them in the standard basis. A general quantum circuit uses space $s$ if at most $s$ qubits are involved at any time step of the computation. The description of a general quantum circuit is the sequence of its gates (unitary or non-unitary) along with a specification of which qubits they act on.
	
	We call a sequence $(C_n)_n$ of general quantum circuits \emph{space-uniform} if $C_n$ uses space $\poly(n)$, and there is an algorithm that on input $n$ uses space $\poly(n)$ and outputs the (possibly exponentially long) description of $C_n$.
\end{dfn}

\begin{dfn}[$\statePSPACE$ and variants thereof]
	For $\delta: \N \to [0,\infty)$, let $\statePSPACE_\delta$ be the class of all sequences of mixed states $(\rho_n)_n$ such that each $\rho_n$ is a state on $n$ qubits, and there exists a space-uniform sequence of general quantum circuits $(C_n)_n$ such that for all sufficiently large $n$, the circuit $C_n$ takes no inputs and $C_n$ outputs a mixed state $\sigma_n$ such that $\td(\sigma_n, \rho_n) \le \delta(n)$. Let $\statePSPACE = \bigcap_p \statePSPACE_{1/p}$ and $\sPe = \bigcap_p \statePSPACE_{\exp(-p)}$ where $p$ ranges over all polynomials.
\end{dfn}

We abuse notation and write $(\ket{\psi_n})_n \in \statePSPACE_\delta$ if $(\kb{\psi_n})_n$ is in $\statePSPACE_\delta$. Also recall that the definitions of state complexity classes such as $\statePSPACE_\delta$ generalize easily to sequences where the $n$'th state is on $\poly(n)$ qubits. With this in mind we can state the following result, which in particular implies that $\sPe$ is closed under purification:

\begin{lem}[{\cite[part of Theorem 6.1]{MY23}\protect\footnote{As of this writing the $\eps$ term is omitted from \cite[Theorem 6.1]{MY23}, but inspection of their proof reveals that this omission is an error.}}] \label{lem:pur}
	Let $(\rho_n)_n \in \cc{statePSPACE}_\delta$ be a sequence of mixed states for some function $\delta$. Then there exists a sequence of pure states $(\ket{\psi_n})_n \in \bigcap_{\eps(n) = \exp(-\poly(n))} \cc{statePSPACE}_{2\sqrt\delta + \eps}$ such that $\ket{\psi_n}$ is a purification of $\rho_n$ for all $n$.
\end{lem}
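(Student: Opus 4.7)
The plan is to construct a canonical, eigenbasis-independent purification of $\rho_n$ and then approximate it inside $\statePSPACE$. Concretely, define
\begin{equation*}
	\ket{\psi_n} = \Paren{\sqrt{\rho_n} \otimes I}\ket{\Omega_n}, \qquad \ket{\Omega_n} = \sum_{k \in \cube n} \ket{k}_{\reg A} \ket{k}_{\reg B},
\end{equation*}
where $\sqrt{\rho_n}$ acts on $\reg A$ and $\ket{\Omega_n}$ is the (unnormalized) maximally entangled vector. A direct computation shows that $\ket{\psi_n}$ has unit norm and reduces to $\rho_n$ on $\reg A$, so it is a purification of $\rho_n$; crucially, it depends only on $\rho_n$ itself and not on any choice of eigenbasis, so the sequence $(\ket{\psi_n})_n$ is fixed once and for all, independently of the approximation parameter.

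For the construction, fix any $\eps(n) = \exp(-\poly(n))$ and an auxiliary $\eps'(n) = \exp(-\poly(n))$ to be chosen later. Let $(C_n)_n$ be the space-uniform general-quantum-circuit sequence witnessing $(\rho_n)_n \in \statePSPACE_\delta$, and let $\sigma_n$ denote its output, so that $\td(\sigma_n,\rho_n) \le \delta(n)$. By $\cc{PSPACE} = \cc{BQPSPACE}$~\cite{watrous03complexity}, a $\poly(n)$-space Turing machine can compute an $\eps'$-precision description of $\sigma_n$; after a standard cleanup step (projection onto Hermitian, PSD, trace-one matrices) this gives a bona fide density matrix $\tilde\sigma_n$ with $\td(\tilde\sigma_n, \rho_n) \le \delta + \eps'$. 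Classical eigendecomposition of Hermitian matrices can be carried out in $\poly(n)$ space, yielding $\tilde\sigma_n = \sum_i \tilde\lambda_i \kb{\tilde v_i}$ and hence a PSPACE-computable description of $\sqrt{\tilde\sigma_n}$. The canonical purification $\ket{\tilde\psi_n} = (\sqrt{\tilde\sigma_n} \otimes I)\ket{\Omega_n} = \sum_i \sqrt{\tilde\lambda_i} \ket{\tilde v_i}_{\reg A} \ket{\tilde v_i^*}_{\reg B}$ of $\tilde\sigma_n$ is then produced by a space-uniform general quantum circuit that (i) prepares $\sum_i \sqrt{\tilde\lambda_i} \ket{i}_{\reg A} \ket{i}_{\reg B}$ via standard state preparation on $\reg A$ followed by bitwise CNOTs into $\reg B$, and then (ii) applies the basis-change unitary $V : \ket{i} \mapsto \ket{\tilde v_i}$ on $\reg A$ and its complex conjugate $V^*$ on $\reg B$, with all of these steps approximated to $\exp(-\poly(n))$ accuracy via Solovay-Kitaev. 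Every gate description is derivable in $\poly(n)$ space from the diagonalization data, so the whole circuit is space-uniform.

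The error bound then falls out of the Powers-Stormer inequality $\Norm{\sqrt\rho - \sqrt\sigma}_F^2 \le \Norm{\rho - \sigma}_1$ combined with the identity $\Norm{(M \otimes I)\ket{\Omega_n}} = \Norm{M}_F$:
\begin{equation*}
	\Norm{\ket{\tilde\psi_n} - \ket{\psi_n}} = \Norm{\sqrt{\tilde\sigma_n} - \sqrt{\rho_n}}_F \le \sqrt{\Norm{\tilde\sigma_n - \rho_n}_1} \le \sqrt{2(\delta + \eps')} \le \sqrt 2 \sqrt\delta + \sqrt{2\eps'}.
\end{equation*}
Since $\sqrt 2 < 2$, choosing $\eps'$ and the Solovay-Kitaev precision sufficiently small absorbs both the gate-level approximation error and the gap $(2-\sqrt 2)\sqrt\delta$, yielding $\td(\tilde\psi_n,\psi_n) \le \Norm{\ket{\tilde\psi_n} - \ket{\psi_n}} \le 2\sqrt\delta + \eps$ via \cref{eq:tdps}, and hence $(\ket{\psi_n})_n \in \statePSPACE_{2\sqrt\delta+\eps}$ for every such $\eps$. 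The main obstacle is essentially bookkeeping: verifying that eigendecomposition, state preparation, basis change, and Solovay-Kitaev all fit into a single space-uniform circuit and that the cumulative approximation errors remain within the slack that the Powers-Stormer bound provides. The $\sqrt 2$ versus $2$ slack in Powers-Stormer is precisely what makes this absorption possible and yields the stated $2\sqrt\delta + \eps$ conclusion.
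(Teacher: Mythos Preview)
The paper does not prove this lemma; it is quoted from Metger and Yuen~\cite{MY23} (as the attribution and footnote indicate), so there is no in-paper argument to compare against.

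Your approach via the canonical purification $(\sqrt{\rho_n}\otimes I)\ket{\Omega_n}$ and the Powers--St{\o}rmer inequality is sound and gives the stated bound, with a couple of details worth tightening. First, an ``$\eps'$-precision description'' in the entrywise sense of \cref{sec:esf} only controls $\td(\tilde\sigma_n,\sigma_n)$ up to a factor of $O(4^n)$, so $\eps'$ must be chosen with a correspondingly larger polynomial in the exponent; this is harmless but should be said. Second, the claim that eigendecomposition of a $2^n\times 2^n$ Hermitian matrix can be carried out in $\poly(n)$ space is true but nontrivial and deserves a citation or a sentence of justification. Finally, your remark that the $\sqrt2$-versus-$2$ slack in Powers--St{\o}rmer is ``precisely what makes this absorption possible'' is not quite right: even without that slack you would get $2\sqrt{\delta+\eps'}\le 2\sqrt\delta + 2\sqrt{\eps'}$, and $\sqrt{\eps'}$ is still of the form $\exp(-\poly(n))$, so the absorption goes through regardless. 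None of this affects correctness.
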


We also use the following:

\begin{lem} \label{lem:msp}
	Every sequence of mixed states in $\sPe$ is $\cc{polyL}$-explicit.
\end{lem}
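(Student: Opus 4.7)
The plan is to reduce to the fact that space-uniform general quantum circuits can be simulated entrywise in classical polynomial space, essentially the $\cc{BQPSPACE} \subseteq \cc{PSPACE}$ direction of~\cite{watrous03complexity}. Fix $(\rho_n)_n \in \sPe$ and a target precision $\eps(n) = \exp(-q(n))$ for some polynomial $q$; the goal is to give an algorithm that on input $n$ outputs an $\eps$-precision description of $\rho_n$ in space $\poly(n)$. Choose a polynomial $p$ with $2\exp(-p(n)) \le \eps(n)/2$ and invoke the definition of $\sPe$ to obtain a space-uniform sequence of general quantum circuits $(C_n)_n$ whose output $\sigma_n$ satisfies $\td(\sigma_n, \rho_n) \le \exp(-p(n))$.

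Since the entrywise max-norm of a Hermitian matrix is dominated by its operator norm, which is in turn dominated by its trace norm,
\begin{equation*}
    \max_{x,y \in \cube n} \Mag{(\sigma_n)_{x,y} - (\rho_n)_{x,y}}
    \le \Norm{\sigma_n - \rho_n}_1
    = 2 \td(\sigma_n, \rho_n)
    \le 2 \exp(-p(n))
    \le \eps(n)/2.
\end{equation*}
Hence it suffices to compute, in space $\poly(n)$ and for each $x,y \in \cube n$, a complex number $\tilde\sigma_{x,y}$ satisfying $\Mag{\tilde\sigma_{x,y} - (\sigma_n)_{x,y}} \le \eps(n)/2$; the triangle inequality then yields the desired $\eps$-precision description $(\tilde\sigma_{x,y})_{x,y}$ of $\rho_n$.

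The classical simulation is straightforward in principle: since $(C_n)_n$ is space-uniform, the $t$-th gate of $C_n$ and the qubits on which it acts are computable in $\poly(n)$ space for every $t$ up to the (at most exponential) gate count. Representing each elementary gate---unitaries from $\{H, \mathit{CNOT}, T\}$ together with the non-unitary initialization, partial-trace, and measurement operations---as a constant-size linear map on density matrix entries, the number $(\sigma_n)_{x,y}$ expands as a ``sum over histories'' whose terms are products of a polynomial (in $n$, times the gate count) number of elementary matrix entries indexed by intermediate basis states of total length $\poly(n)$. Such a sum can be evaluated in $\poly(n)$ space by iterating through the intermediate indices one at a time and maintaining a running partial sum, exactly analogous to the $\poly(n)$-space computation of the oracle $f^\prime$ in \cref{sec:qps}.

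The one place that requires genuine care---and which I expect to be the main (though routine) obstacle---is floating-point bookkeeping: the accumulated rounding error across the at most exponentially many arithmetic operations must stay below $\eps(n)/2$. Since the entries of the gates in $\{H, \mathit{CNOT}, T\}$ are algebraic of low degree, standard stability estimates show that carrying $\poly(n)$ bits of precision throughout the computation suffices, by the same kind of analysis already used in \cref{sec:qps}. Combining the simulation with the trace-distance bound above gives the required $\eps$-precision description of $\rho_n$ in space $\poly(n)$, completing the proof. \qed
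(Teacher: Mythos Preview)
Your overall strategy matches the paper's exactly: the paper's proof is two sentences---it cites \cite[Lemma~6.2]{MY23} for the $\cc{statePSPACE}_0$ case and then invokes the triangle inequality, precisely your decomposition into ``compute entries of $\sigma_n$ in $\poly(n)$ space'' plus ``$\max_{x,y}|(\sigma_n)_{x,y}-(\rho_n)_{x,y}|\le 2\td(\sigma_n,\rho_n)$''.

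There is one technical slip in your unpacking of the $\cc{statePSPACE}_0$ step, though. A space-uniform circuit $C_n$ can have \emph{exponentially many} gates, so a history has exponentially many intermediate density-matrix indices, not ``total length $\poly(n)$''. Your proposed evaluation---iterate through the intermediate indices linearly while keeping a running sum, ``exactly analogous to \cref{sec:qps}''---works in \cref{sec:qps} only because there the number of rounds is a constant (eleven); here it would use space linear in the gate count. The standard fix, which is what Watrous's $\cc{BQPSPACE}\subseteq\cc{PSPACE}$ argument actually does, is a Savitch-style divide-and-conquer on the gate sequence (equivalently, repeated-squaring of the superoperator): to compute an entry of $\Phi_T\circ\cdots\circ\Phi_1$, recurse on the two halves and sum over a single midpoint index, for recursion depth $O(\log T)=\poly(n)$ and $\poly(n)$ space per level. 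Once you replace the linear iteration with this recursion, your sketch is complete and coincides with what the paper cites as a black box.
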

\begin{proof}
	Metger and Yuen~\cite[Lemma 6.2]{MY23} proved that every sequence of mixed states in $\cc{statePSPACE}_0$ is $\cc{polyL}$-explicit. The general case follows by the triangle inequality.
\end{proof}

\begin{rmk*}
	The high-level idea behind the proof of \cite[Lemma 6.2]{MY23} is that tomography of states in $\cc{statePSPACE}_0$ can be done in $\cc{BQPSPACE}$, and $\cc{BQPSPACE} = \cc{PSPACE}$~\cite{watrous03complexity}. The proof of $\cc{BQPSPACE} = \cc{PSPACE}$ relies on the assumption that the gates used in \cref{def:gqc} have algebraic entries, which is why we imposed this requirement.
\end{rmk*}

\subsection{Quantum interactive protocols} \label{sec:gqp}

Since in quantum computing the standard model of computation is the quantum circuit model (rather than quantum Turing machines), we model the verifier in a quantum interactive protocol as a sequence of \emph{verifier circuits}, one for each input length. A verifier circuit is itself a tuple of quantum circuits that correspond to the operations performed by the verifier in each round of the protocol. Below we describe this more formally.

The case where the verifier sends the first message is illustrated in \cref{fig:pg}. For a register $\reg A$ let $\linear(\reg A)$ denote the set of density matrices on $\reg A$. A \emph{$2r$-message quantum verifier circuit} $C = (C_j)_{j \in [r+1]}$ is a tuple of general quantum circuits, where $C_1: \linear(\reg W_0) \to \linear(\reg W_1 \reg M_1)$, and $C_j: \linear(\reg W_{j-1} \reg M_{2j-2}) \to \linear(\reg W_j \reg M_{2j-1})$ for $2 \le j \le r$, and $C_{r+1}: \linear(\reg W_r \reg M_{2r}) \to \linear(\reg Z \reg W_{r+1} \reg S)$. A \emph{quantum prover} $P$ for such a verifier circuit $C$ is a tuple of quantum channels $(P_j)_{j \in [r]}$ where $P_j: \linear (\reg Q_{j-1} \reg M_{2j-1}) \to \linear(\reg Q_j \reg M_{2j})$. We think of $\reg W_j$ (resp.\ $\reg Q_j$) as the verifier's (resp.\ prover's) private memory at a given time, and we think of $\reg M_j$ as the $j$'th message. At the end of the protocol, the verifier produces a one-qubit register $\reg Z$ indicating whether to accept or reject, and a register $\reg S$ containing an output state.

\begin{figure*}[t]
	\centering
	\resizebox{0.9\textwidth}{!}{
		\begin{tikzpicture}
			\tikzset{
				block/.style={rectangle,draw,thick,fill=gray!50,inner sep=0pt,minimum width=1.23cm,minimum height=1.25cm},
				point/.style={minimum width=1.25cm},
				every node/.style={scale=1.4,font=\large}
			}
			\coordinate (1) at (-2,3.5){};
			\coordinate (2) at (-2,0){};
			\node[block] (3) at (0,3.5){$C_1$};
			\node[block] (4) at (2.75,0){$P_1$};
			\node[block] (5) at (5.5,3.5){$C_2$};
			\node[block] (6) at (8.25,0){$P_2$};
			\node[point] (7) at (11,3.5){};
			\node[point] (8) at (11,0){};
			\node[point] (9) at (11.75,3.5){};
			\node[point] (10) at (11.75,0){};
			\node[block] (11) at (17.25,3.5){$C_r$};
			\node[block] (12) at (14.5,0){$P_{r-1}$};
			\node[block] (13) at (22.75,3.5){$C_{r+1}$};
			\node[block] (14) at (20,0){$P_r$};
			\coordinate (15) at (24.5,3.5){};
			\coordinate (16) at (24.5,0){};
			
			\draw[-latex] (1) -- (3) node[midway,above]{$\reg W_0$};
			\draw[-latex] (3) -- (5) node[midway,above]{$\reg W_1$};
			\draw[-latex] (5) -- (7) node[midway,above]{$\reg W_2$};
			\draw[-latex] (9) -- (11) node[midway,above]{$\reg W_{r-1}$};
			\draw[-latex] (11) -- (13) node[midway,above]{$\reg W_r$};			
			\draw[-latex] (2) -- (4) node[midway,below]{$\reg Q_0$};
			\draw[-latex] (4) -- (6) node[midway,below]{$\reg Q_1$};
			\draw[-latex] (6) -- (8) node[midway,below]{$\reg Q_2$};
			\draw[-latex] (10) -- (12) node[xshift=10pt,midway,below left]{$\reg Q_{r-2}$};
			\draw[-latex] (12) -- (14) node[midway,below]{$\reg Q_{r-1}$};
			\draw[-latex] (14) -- (16) node[midway,below] {$\reg Q_r$};
			\draw[-latex] ([yshift=-15pt]3.east) to[out=0,in=180] node[midway,right]{$\reg M_1$} ([yshift=15pt]4.west);
			\draw[-latex] ([yshift=15pt]4.east) to[out=0,in=180] node[midway,right]{$\reg M_2$} ([yshift=-15pt]5.west);
			\draw[-latex] ([yshift=-15pt]5.east) to [out=0,in=180] node[midway,right]{$\reg M_3$} ([yshift=15pt]6.west);
			\draw[-latex] ([yshift=15pt]6.east) to [out=0,in=180] node[midway,left]{$\reg M_4$} ([yshift=-15pt]7.west);
			\fill (10.75,1.5) circle (2pt) (11.25,1.5) circle (2pt) (11.75,1.5) circle (2pt);
			\draw[-latex] ([yshift=-15pt]9.east) to[out=0,in=180] node[midway,right]{$\reg M_{2r-3}$} ([yshift=15pt]12.west);
			\draw[-latex] ([yshift=15pt]12.east) to[out=0,in=180] node[midway,right]{$\reg M_{2r-2}$} ([yshift=-15pt]11.west);
			\draw[-latex] ([yshift=-15pt]11.east) to[out=0,in=180] node[midway,right]{$\reg M_{2r-1}$} ([yshift=15pt]14.west);
			\draw[-latex] ([yshift=15pt]14.east) to[out=0,in=180] node[midway,right]{$\reg M_{2r}$} ([yshift=-15pt]13.west);
			\draw[-latex] (13.east) -- (15) node[right]{$\reg W_{r+1}$};
			\draw[-latex] ([yshift=15pt]13.east) to[out=0,in=180] ([yshift=30pt]15) node[right]{$\reg Z$};
			\draw[-latex] ([yshift=-15pt]13.east) to[out=0,in=180] ([yshift=-30pt]15) node[right]{$\reg S$};
		\end{tikzpicture}
	}
	\caption{Generic quantum interactive protocol.}
	\label{fig:pg}
\end{figure*}

Let $x$ denote a string whose length is at most the number of qubits in $\reg W_0$. We write $C(x) \interact P$ to denote the interaction between the verifier circuit $C$ and the prover $P$ on input $x$, which means applying the channels $C_j$ and $P_j$ as pictured in \cref{fig:pg} to the initial state $\ket{x, \zs}_{\reg W_0} \ket\zs_{\reg Q_0}$. We say that $C(x) \interact P$ accepts (resp.\ rejects) if measuring $\reg Z$ in the standard basis yields the outcome $1$ (resp.\ $0$). If $C(x) \interact P$ accepts with nonzero probability, then by the \emph{output of $C(x) \interact P$ conditioned on accepting} we mean the reduced state in $\reg S$ conditioned on $C(x) \interact P$ accepting. In other words if $\rho$ denotes the output of $C_{r+1}$, then the output state conditioned on accepting is
\begin{equation*}
	\Tr_{\reg W_{r+1}} \Paren{\frac {\bra1_{\reg Z} \rho \ket1_{\reg Z}} {\tr{\bra1_{\reg Z} \rho \ket1_{\reg Z}}}}.
\end{equation*}

By dilating we can assume without loss of generality that the prover's channels are all unitary, i.e.\ $P_j(A) = U_j A \adj U_j$ for some unitary $U_j$, and similarly for the verifier. (This is the purpose of the registers $\reg Q_0, \reg Q_r, \reg W_{r+1}$.) We always assume that the prover is unitary, but only sometimes assume that the verifier is unitary.

We can model interactions in which the prover sends the first (nontrivial) message by requiring $\reg M_1$ to only convey the input string $x$ that was in $\reg W_0$. In this case there are only $2r-1$ (nontrivial) messages.

We say that a sequence of quantum verifier circuits $(V_n)_n$ is \emph{uniform} if the total number gates in all circuits in $V_n$ is $\poly(n)$, and the descriptions of the circuits in $V_n$ can be computed in $\poly(n)$ time as a function of $n$. For $m: \N \to \N$, an \emph{$m$-message quantum verifier} is a uniform sequence $(V_n)_n$ of quantum verifier circuits where $V_n$ defines a protocol with $m(n)$ messages. These $m(n)$ messages include messages sent by both the verifier and prover, and do not include the trivial first message sent by the verifier if $m(n)$ is odd.

\subsection{The class $\cc{QIP}(3)$}

The class $\cc{QIP}$ is the standard quantum analogue of the complexity class $\IP$. For our purposes we will only need to define the three-message version of $\cc{QIP}$, known as $\cc{QIP}(3)$. Below we abbreviate $V_{|x|}(x) \interact P$ by $V(x) \interact P$.

\begin{dfn}[$\cc{QIP}(3)$]
	For $\eps: \N \to [0,1]$, the class $\cc{QIP}_\eps (3)$ is the set of languages $L \subseteq \cube*$ for which there exists a three-message quantum verifier $V = (V_n)_n$ (with no output state) satisfying the following conditions:
	\begin{itemize}
		\item \emph{Completeness:} For all $x \in L$, there exists a quantum prover $P$ (called an \emph{honest prover}) such that $\pr{\text{$V(x) \interact P$ accepts}} = 1$.\footnote{The reader may wonder whether the definition of $\QIP(3)$ here is sensitive to the assumption of perfect completeness; it is known that if the verifier uses the universal gate set $\{H, \mathit{CNOT}, T\}$, then we can assume perfect completeness without loss of generality~\cite[Section 4.2]{vidick2016quantum}.}
		\item \emph{Soundness:} For all $x \notin L$ and all quantum provers $P$, it holds that $\pr{\text{$V(x) \interact P$ accepts}} \le \eps(|x|)$.
	\end{itemize}
	Here the probability is over the randomness of the interaction. Define $\cc{QIP}(3) = \bigcap_p \cc{QIP}_{2^{-p}} (3)$ where $p$ ranges over all polynomials.
\end{dfn}

\begin{thm}[Watrous~\cite{watrous03pspace}] \label{thm:watrous}
	$\PSPACE \subseteq \cc{QIP}(3)$.
\end{thm}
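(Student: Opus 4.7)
The plan is to follow Watrous's original strategy, which decomposes into two ingredients: (i) a polynomial-round classical interactive proof system for every $\PSPACE$ language, and (ii) a quantum parallelization that compresses any such protocol down to three messages. For (i) I would invoke Shamir's theorem $\IP = \PSPACE$, using for concreteness the sum-check-style public-coin protocol for the canonical $\PSPACE$-complete language TQBF: the verifier sends polynomially many rounds of uniformly random field elements, the prover replies with low-degree polynomials, and at the end the verifier performs a polynomial-time consistency check on the full transcript.

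The bulk of the work is step (ii). The three-message protocol I would design proceeds as follows. First, the prover sends a quantum state that, in the honest case, is a uniform superposition over all sequences $r = (r_1, \dotsc, r_k)$ of verifier challenges in the classical protocol, with each branch carrying the transcript and the honest prover's workspace up to round $k$. Second, the verifier sends a uniformly random challenge $r$. Third, the prover unitarily returns the branch of the superposition indexed by $r$ together with the remaining classical messages. The verifier then runs the classical verification procedure and, crucially, applies a carefully designed swap test comparing the prover's returned registers against what the first-message state dictates they should be.

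The hard step is soundness. In a naive version of the protocol (without the swap test) a cheating prover could ignore its first message and adaptively choose any branch after seeing $r$, trivially defeating Shamir's guarantee. Watrous's swap test forces, up to a polynomial loss, that any prover accepted with probability noticeably above the classical soundness threshold must in effect commit to a distribution on full transcripts before seeing $r$; measuring that commitment then yields a classical cheating strategy contradicting the soundness of the $\IP$ protocol. Completeness is perfect because the honest quantum prover can prepare the required uniform superposition using standard gates, and standard amplification --- combined with the observation, noted in the definition of $\cc{QIP}(3)$ above, that perfect completeness is without loss of generality over $\{H, \mathit{CNOT}, T\}$ --- drives the soundness error down to $2^{-\poly(n)}$. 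Since this paper uses the statement as a black box, I would simply cite Watrous rather than redo the quantitative analysis of the swap-test soundness loss.
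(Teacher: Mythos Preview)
The paper does not prove this theorem at all; it is stated purely as a citation to Watrous and then used as a black box (via \cref{cor:watrous}) in the $\statePSPACE \subseteq \stateqip(6)$ argument. Your proposal ultimately lands in the same place---you explicitly say you would cite Watrous rather than reprove the result---so your approach matches the paper's.

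One minor caution about the sketch you gave as background: Watrous's three-message compression does not work by having the verifier send a full challenge sequence $r$ and then running a swap test against a committed transcript. Rather, the verifier selects a uniformly random \emph{round index} and checks that the verifier's own unitary at that round transforms the prover-supplied snapshot correctly into the next snapshot; soundness comes from a hybrid/consistency argument over adjacent transcript states, not a swap test on full transcripts. This does not affect the correctness of your proposal, since both you and the paper treat the theorem as a citation, but if you ever need to unpack the proof you should adjust that description.
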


We remark that the converse inclusion $\cc{QIP}(3) \subseteq \cc{PSPACE}$ holds as well~\cite{jain2011qip}. It is straightforward to generalize \cref{thm:watrous} from decision problems to functions:

\begin{cor} \label{cor:watrous}
	Let $f: \cube* \to \cube*$ be a $\PSPACE$-computable function such that $|f(x)| \le \poly(|x|)$ for all $x$, and let $\eps$ be a function of the form $\eps(n) = \exp(-\poly(n))$. Then there exists a three-message quantum verifier $V = (V_n)_n$ satisfying the following conditions:
	\begin{itemize}
		\item \emph{Completeness:} For all $x \in \cube*$, there exists a quantum prover $P$ (called an \emph{honest prover}) such that $\pr{\text{$V(x) \interact P$ accepts and outputs $f(x)$}} = 1$.
		\item \emph{Soundness:} For all $x \in \cube*$ and all quantum provers $P$,
		\begin{equation*}
			\pr{\text{$V(x) \interact P$ accepts and outputs a string other than $f(x)$}} \le \eps(|x|).
		\end{equation*}
	\end{itemize}
\end{cor}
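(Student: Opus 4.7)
The plan is to reduce the function evaluation to a $\PSPACE$ decision problem and apply \cref{thm:watrous}. First I would, without loss of generality, pad $f$ so that $|f(x)| = p(|x|)$ for some polynomial $p$, and introduce the language
\begin{equation*}
	L' = \Brace{(x, g) \in \cube* \times \cube{p(|x|)} : g = f(x)}.
\end{equation*}
Since $f$ is $\PSPACE$-computable, so is $L'$, and therefore by \cref{thm:watrous} there is a three-message quantum interactive proof $V'$ for $L'$ with perfect completeness and soundness error $\eps$.

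To obtain the verifier $V$ for the corollary, I would prepend a claim to $V'$: at the start of the interaction the prover sends a guess $g \in \cube{p(|x|)}$ bundled into its first message; the remaining messages simulate $V'$ on input $(x, g)$; meanwhile the verifier copies $g$ into its output register $\reg S$ and accepts iff the simulated $V'$ accepts. The total message count stays at three because Watrous's protocol may be taken to be prover-first, so $g$ rides inside the first prover round. Completeness is immediate: the honest prover sends $g = f(x)$ and plays the honest $V'$-strategy on $(x, g) \in L'$, making $V'$ (hence $V$) accept with probability $1$. For soundness, the event that $V$ accepts and outputs some $g \ne f(x)$ coincides with the event that $V'$ accepts the input $(x, g) \notin L'$, which by soundness of $V'$ happens with probability at most $\eps(|x|)$.

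The only real concern here is the message-count bookkeeping---checking that the prover's claim $g$ can be transmitted inside the first prover round without blowing the protocol out to four messages. If reliance on a particular form of Watrous's protocol felt unsatisfying, I would instead use a bit-by-bit variant: for each $i \in [p(n)]$ and $b \in \bits$ let $L_{i,b}$ be the $\PSPACE$ language of strings $x$ whose $f(x)$ has $i$-th bit equal to $b$, have the prover send $g$ up front, run in parallel the three-message protocol for $L_{i, g_i}$ with soundness $\eps/p(n)$ for every $i$, and union-bound over the wrong bits. Parallel composition of three-message protocols is again three-message, so the classical string $g$ still rides cleanly inside the prover's first round.
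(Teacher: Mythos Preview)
Your proposal is correct and essentially identical to the paper's proof: the paper likewise defines the language $L = \{(x,f(x)) : x \in \cube*\}$, invokes \cref{thm:watrous} to get a $\cc{QIP}_\eps(3)$ verifier $V_L$, has the prover send a claimed value $y$ bundled with the first nontrivial message of the $V_L$ simulation, and argues soundness by conditioning on the measured value of $y$. The only cosmetic difference is that the paper explicitly has the verifier measure the prover's register to obtain $y$ before simulating $V_L$, which you leave implicit by treating $g$ as classical from the outset; your bit-by-bit backup plan is sound but unnecessary.
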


\begin{proof}
	The language $L = \{(x,f(x)): x \in \cube*\}$ is clearly in $\PSPACE$, so by \cref{thm:watrous} there exists a $\cc{QIP}_\eps (3)$ verifier $V_L$ for $L$. A verifier $V_f$ for $f$ can be described as follows. First $V_f$ sends the input string $x$ to the prover. Then $V_f$ receives a register $\reg M$ from the prover, measures $\reg M$ in the standard basis to obtain a string $y$, and simulates $V_L$ on input $(x,y)$. (Here the prover is expected to send both $y$ and the first nontrivial message from the simulation of $V_L$ in the same message, so that the total number of nontrivial messages is still three.) If $V_L$ accepts then $V_f$ accepts and outputs $y$, otherwise $V_f$ rejects.
	
	Completeness holds because an honest prover for $V_f$ can send $y = f(x)$ and then simulate an honest prover for $V_L$. Soundness holds because conditioned on any string $y \neq f(x)$ that the verifier measures in $\reg M$, the probability that $V_L$ accepts is at most $\eps(|x|)$ by the soundness of $V_L$.
\end{proof}

\subsection{The classes $\cc{stateQIP}(m)$ and $\cc{stateQIP}$}
\label{sec:stateqip}

\begin{dfn}[{$\stateqip(m)$ and $\stateqip$}]
	\label{def:stateQIP}
	Let $\eps,\delta : \N \to [0,\infty)$ and $m: \N \to \N$ be functions. The class $\stateqip_{\eps,\delta}(m)$ is the set of mixed state sequences $(\rho_n)_n$ (where $\rho_n$ is on $n$ qubits) for which there exists an $m$-message quantum verifier $(V_n)_n$ satisfying the following for all sufficiently large $n$:
	\begin{itemize}
		\item \emph{Completeness:} There exists a quantum prover $P$ (called an \emph{honest prover}) such that $\pr{\text{$V_n \interact P$ accepts}} = 1$.
		\item \emph{Soundness:} For all quantum provers $P$ such that $\pr{\text{$V_n \interact P$ accepts}} \ge \eps(n)$, it holds that $\td(\sigma, \rho_n) \le \delta(n)$ where $\sigma$ denotes the output of $V_n \interact P$ conditioned on accepting.
	\end{itemize}
	Here the probabilities are over the randomness of the interaction.
	
	Finally, define
	\begin{align*}
		&\stateqip(m) = \bigcap_{p,q} \stateqip_{\frac1p, \frac1q} (m),
		&\stateqip = \bigcup_{m^\prime} \stateqip\Paren{m^\prime}
	\end{align*}
	where $p,q,m^\prime$ range over all polynomials.
\end{dfn}

\begin{rmk*}
	Metger and Yuen~\cite{MY23} fixed $p$ to 2 in their definition of $\stateqip$, i.e.\ they considered the class $\stateqip^\prime = \bigcup_m \bigcap_q \stateqip_{\frac12, \frac1q} (m)$. However our definitions are equivalent because
	\begin{equation*}
		\statePSPACE \subseteq \stateqip(6) \subseteq \stateqip \subseteq \stateqip^\prime \subseteq \statePSPACE,
	\end{equation*}
	where the first inclusion is \cref{thm:spq}, the second and third inclusions are trivial, and the fourth inclusion was proved by Metger and Yuen~\cite{MY23}.
\end{rmk*}

\section{Proof that \texorpdfstring{$\statePSPACE \subseteq \cc{stateQIP}(6)$} {statePSPACE in stateQIP}} \label{sec:psq}
In this section we use the background from \cref{sec:scc} to prove \cref{thm:spq}, i.e.\ that $\cc{statePSPACE} \subseteq \cc{stateQIP}(6)$. Let $(\rho_n)_n \in \statePSPACE$ and let $\eps(n), \delta(n) = 1/\poly(n)$; below we prove that $(\rho_n)_n$ is in $\cc{stateQIP}_{\eps, \delta}(6)$ which establishes the theorem.
\subsection{The protocol} \label{sec:prot}

Since $(\rho_n)_n$ is in $\statePSPACE$ there exists a sequence $(\sigma_n)_n \in \statePSPACE_0$ such that $\td\Paren{\rho_n, \sigma_n} \le \delta(n)/2$. By \cref{lem:pur} there exists a sequence of pure states $(\ket{\psi_n})_n \in \sPe$ such that the reduced state on the first $n$ qubits of $\ket{\psi_n}$ equals $\sigma_n$. By \cref{lem:msp} the sequence $(\psi_n)_n$ is $\cc{polyL}$-explicit, so by \cref{lem:exp} the sequence $\Paren{\ket{\psi_n}}_n$ is $\cc{polyL}$-explicit up to global phases. Therefore by \cref{thm:main} there exists a uniform sequence of polynomial-size quantum circuits $(A_n)_n$, making one query to a $\PSPACE$-computable function $f$, such that the reduced state on the initial qubits of $A_n^f \ket\zs$ is within $2^{-n}$ trace distance of $\psi_n$, and furthermore $(A_n)_n$ does not depend on $(\rho_n)_n$. Henceforth we will fix $n$ and write $\rho = \rho_n, \eps = \eps(n)$ and so on for brevity.

Let $m = \poly(n)$ be the number of qubits on which $A$ acts. By the discussion in \cref{sec:prelim}, we can assume without loss of generality that $f$ has a single output bit and that the query in $A^f$ is of the form $D = \sum_{x \in \cube m} (-1)^{f(x)} \kb x$. Write $A^f \ket{0^m} = C D \ket \phi$ where $C$ is the portion of $A$ applied after the query, and $\ket\phi$ is the state constructed by the portion of $A$ applied before the query.

Let $t = \poly(n)$ be a parameter to be chosen later, and for $x_1, \dotsc, x_t \in \cube m$ let $F \Paren{x_1, \dotsc, x_t} = \Paren{f \Paren{x_1}, \dotsc, f \Paren{x_t}}$. Since $f$ is $\PSPACE$-computable, so is $F$. Let $V_F$ be the three-message quantum verifier circuit for $F$ guaranteed to exist by \cref{cor:watrous}, with soundness parameter $2^{-2n}$. As mentioned in \cref{sec:gqp} we can assume without loss of generality that $V_F$ is unitary. We can also assume without loss of generality that $V_F$ preserves the classical state $\ket x$ of its input register, e.g.\ by defining a verifier circuit that makes a copy of $x$ and simulates $V_F$ on the copy.

We name certain registers associated with $V_F$ as follows. Let $\reg A$ be the input register, and write $\reg A = \reg A_1 \dotsb \reg A_t$ where each $\reg A_j$ is an $m$-qubit register. Let $\reg S$ be the output register (which on input $x$, ideally holds $F(x)$), and write $\reg S = \reg S_1 \dotsb \reg S_t$ where each $\reg S_j$ is a one-qubit register. Let $\reg Z$ be the one-qubit register indicating whether to accept or reject, and let $\reg W$ be the register disjoint from $\reg{AZS}$ that holds the rest of the output of $V_F$'s final circuit.

\cref{alg:main} describes a verifier circuit for constructing $\rho$. There are six messages in total, because \cref{line:4} requires four messages (including sending $x$ to the prover) and \cref{line:10,line:11} each require one message.

\begin{algorithm}[t]
	\caption{$\cc{stateQIP}_{\eps, \delta} (6)$ verifier circuit for $\rho$}
	\label{alg:main}
	\begin{algorithmic}[1]
		\For{$j \in [t]$} construct $\ket\phi_{\reg A_j}$.
		\EndFor
		\Ctrl{the classical state $\ket{x}_{\reg A}$,}
			\State Simulate $V_F$ on input $x$. \hlabel{line:4}
		\EndCtrl \hlabel{line:5}
		\If{a standard-basis measurement of $\reg Z$ outputs 0} \textbf{reject} and \textbf{abort}. \hlabel{line:6}
		\EndIf
		\State Sample $k \in [t]$ uniformly at random. \hlabel{line:8}
		\State Apply the Pauli $Z$ matrix in $\reg S_k$. \Comment $Z = \kb0 - \kb1$ \hlabel{line:9}
		\State Send $\reg S \reg W$ to the prover. \hlabel{line:10}
		\State Receive a $t m$-qubit register $\reg M$ from the prover. \hlabel{line:11}
		\Ctrl{the classical state $\ket x_{\reg A}$,}
			\State XOR $x$ into $\reg M$.
		\EndCtrl \hlabel{line:14}
		\For{$j \in [t] \backslash \{k\}$} perform the projective measurement $(\phi, I-\phi)$ in $\reg A_j$. \hlabel{line:15}
			\If{the measurement outcome is $I - \phi$} \textbf{reject} and \textbf{abort}.
			\EndIf
		\EndFor \hlabel{line:18}
		\State Apply $C_{\reg A_k}$. \hlabel{line:19}
		\State \textbf{accept} and \Return the first $n$ qubits of $\reg A_k$.
	\end{algorithmic}
\end{algorithm}

\subsection{Proof of completeness} \label{sec:pc}

We describe an honest prover $P$. On \cref{line:4} $P$ simulates an honest prover $P_F$ for $V_F$. We can assume without loss of generality that if $x$ denotes $V_F$'s input string, then the final state of $P_F$'s workspace includes a copy of $x$ (e.g.\ by having $P_F$ make an extra copy of $x$ at the beginning of its computation). Write $\ket\phi^{\otimes t} = \sum_{x \in \cube{t m}} \alpha_x \ket x$; then we can write the state of the system immediately after \cref{line:4} as
\begin{equation*}
	\sum_{\mathclap{x \in \cube{t m}}} \alpha_x \ket{x}_{\reg A} \ket{F(x)}_{\reg S} \ket1_{\reg Z} \ket{x}_{\reg M} \ket{\theta_x}_{\reg{WQ}}.
\end{equation*}
Here $\reg M$ is a register held by $P$ (which will later be sent to the verifier in \cref{line:11}), the register $\reg Q$ denotes the remainder of $P$'s private workspace, and $\ket{\theta_x}$ is some state.

Let $k$ be the value chosen by the verifier in \cref{line:8}. Given the above state, clearly applying $Z_{\reg S_k}$ has the same effect that applying $D_{\reg A_k}$ would have, so the state of the system after \cref{line:10} is
\begin{equation*}
	D_{\reg A_k} \cdot \sum_{\mathclap{x \in \cube{t m}}} \alpha_x \ket{x}_{\reg A} \ket{F(x)}_{\reg S} \ket{x}_{\reg M} \ket{\theta_x}_{\reg{WQ}}
\end{equation*}
where $\reg A$ is held by the verifier and $\reg{SMWQ}$ is held by $P$.

Next $P$ uncomputes the state $\ket{F(x)}_{\reg S} \ket{\theta_x}_{\reg{WQ}}$ controlled on $\ket{x}_{\reg M}$, and then sends $\reg M$ to the verifier in \cref{line:11}. After \cref{line:14} the verifier holds the state
\begin{equation*}
	D_{\reg A_k} \cdot \sum_{\mathclap{x \in \cube{t m}}} \alpha_x \ket{x}_{\reg A}
	= D_{\reg A_k} \cdot \bigotimes_{j \in [t]} \ket\phi_{\reg A_j},
\end{equation*}
which clearly passes the subsequent measurements with probability 1.

\subsection{Proof of soundness}

It will be convenient to refer to the output register in a manner independent of the random variable $k$ from \cref{line:8}. To this end, let $\reg O$ be an $m$-qubit register, and imagine that the verifier's final action is to apply the channel $\Phi_k$ that acts as the identity on the system except that $\Phi_k$ renames $\reg A_k$ as $\reg O$. Fix a prover such that the verifier accepts with probability $\eps^\prime \ge \eps$. Let $\tau$ denote the state of the system at the end of the protocol, conditioned on accepting, and let $\tau^O$ denote the reduced state of $\tau$ on $\reg O$. Then $\ptr{>n}{\tau^O}$ is the output state conditioned on accepting.

Let $n^\prime$ be the number of qubits comprising $\ket\psi$. By the triangle inequality, \cref{eq:tdc,eq:fvdg}, and various definitions from \cref{sec:prot}, it holds that
\begin{equation} \label{eq:sound1}
	\td\Paren{\trg n {\tau^O}, \rho}
	\le \td\Paren{\trg n {\tau^O}, \sigma} + \td(\sigma, \rho)
	\le \td\Paren{\trg n {\tau^O}, \trg n \psi} + \delta/2
\end{equation}
and that
\begin{align}
	\td\Paren{\trg n {\tau^O}, \trg n \psi}
	&\le \td\Paren{\trg {n^\prime} {\tau^O}, \psi} \nonumber \\
	&\le \td\Paren{\ptr{>n^\prime}{\tau^O}, \ptr {>n^\prime} {C D \phi D \adj C}} + \td\Paren{\ptr {>n^\prime} {C D \phi D \adj C}, \psi}\nonumber  \\
	&\le \td\Paren{\tau^O, C D \phi D \adj C} + 2^{-n}
	\le \sqrt{\tr{\tau \cdot (I - C D \phi D \adj C)_{\reg O}}} + 2^{-n}. \label{eq:sound2}
\end{align}

Let $\ket\varphi$ denote the state of the system after \cref{line:5}, and let $U$ be the unitary jointly applied by the verifier and prover from \cref{line:10} to \cref{line:14}. Then
\begin{equation*}
	\eps^\prime \tau = \frac1t \sum_{k=1}^t \Phi_k \Paren{\theta_k}
	\qquad \text{for} \qquad
	\ket{\theta_k} = \bigotimes_{j \neq k} \bra\phi_{\reg A_j} \cdot C_{\reg A_k} U Z_{\reg S_k} \bra1_{\reg Z} \ket\varphi,
\end{equation*}
where $\ket{\theta_k}$ is (in general) subnormalized and $\theta_k = \kb{\theta_k}$. Let
\begin{equation*}
	Q = \sum_{\mathclap{x \in \cube{t m}}} x_{\reg A} \otimes F(x)_{\reg S},
\end{equation*}
and similarly define a matrix $\tilde\tau$ as follows:
\begin{equation*}
	\eps^\prime \tilde\tau = \frac1t \sum_{k=1}^t \Phi_k \Paren{\tilde \theta_k}
	\qquad \text{for} \qquad
	\ket{\tilde\theta_k} = \bigotimes_{j \neq k} \bra\phi_{\reg A_j} \cdot C_{\reg A_k} U Z_{\reg S_k} Q \bra1_{\reg Z} \ket\varphi.
\end{equation*}

We now argue that $\tilde\tau$ is a close approximation of $\tau$, using the soundness property of $V_F$. For $k \in [t]$ it holds that $\Norm{\ket{\tilde\theta_k} - \ket{\theta_k}}^2 \le \Norm{(I-Q) \bra1_{\reg Z} \ket\varphi}^2$. This bound equals the probability that if the register $\reg{ASZ}$ of $\ket\varphi$ is measured in the standard basis, then the measurement outcome is of the form $\ket{x}_{\reg A} \ket{y}_{\reg S} \ket 1_{\reg Z}$ where $y \neq F(x)$. Conditioning on $x$ and applying the soundness of $V_F$ shows that this event has probability at most $2^{-2n}$, so $\Norm{\ket{\tilde\theta_k} - \ket{\theta_k}} \le 2^{-n}$. Therefore by the triangle inequality,
\begin{align*}
	\eps^\prime \Norm{\tilde\tau - \tau}_1
	&\le \frac1t \sum_{k=1}^t \Norm{\kb{\tilde\theta_k} - \kb{\theta_k}}_1 \\
	&\le \frac1t \sum_{k=1}^t \Paren{\Norm{\Paren{\ket{\tilde\theta_k} - \ket{\theta_k}} \bra{\tilde\theta_k}}_1 + \Norm{\ket{\theta_k} \Paren{\bra{\tilde\theta_k} - \bra{\theta_k}}}_1} \\
	&\le \frac1t \sum_{k=1}^t 2^{-n} \Paren{\Norm{\ket{\tilde\theta_k}} + \Norm{\ket{\theta_k}}}
	\le 2 \cdot 2^{-n}.
\end{align*}
Since $\eps^\prime \ge \eps \ge 1/\poly(n)$ it follows that $\Norm{\tilde\tau - \tau}_1 \le \exp(-\Omega(n))$.

Let $P = (I - CD\phi D \adj C)_{\reg O}$. Since $P$ is an orthogonal projection,
\begin{equation}
	\tr{\tau P}
	\le \tr{\tilde\tau P} + \frac {\Norm{\tilde\tau - \tau}_1} 2
	\le \tr{\tilde\tau P} + \exp(-\Omega(n)). \label{eq:sound3}
\end{equation}
By reasoning similar to that in \cref{sec:pc}, it holds that $U Z_{\reg S_k} Q = U D_{\reg A_k} Q = D_{\reg A_k} U Q$, so defining the subnormalized vector $\ket{\varphi^\prime} = U Q \bra1_{\reg Z} \ket\varphi$ it holds that
\begin{equation*}
	\ket{\tilde\theta_k} = \bigotimes_{j \neq k} \bra\phi_{\reg A_j} \cdot (CD)_{\reg A_k} \ket{\varphi^\prime}.
\end{equation*}
Therefore since trace is linear,
\begin{align*}
	\eps^\prime \tr{\tilde\tau P}
	&= \frac1t \sum_{k=1}^t \tr{\Phi_k \Paren{\tilde\theta_k} P}
	= \frac1t \sum_{k=1}^t \tr{\tilde\theta_k \cdot (I - CD\phi D \adj C)_{\reg A_k}} \\
	&= \frac1t \tr{\varphi^\prime \cdot \sum_{k=1}^t \bigotimes_{j \neq k} \phi_{\reg A_j} \otimes \Paren{I - \phi}_{\reg A_k}}
	\le \frac1t \tr{\varphi^\prime}
	\le \frac1t,
\end{align*}
where we used that $\sum_{k=1}^t \bigotimes_{j \neq k} \phi_{\reg A_j} \otimes \Paren{I - \phi}_{\reg A_k}$ is an orthogonal projection. Since $\eps^\prime \ge \eps$ it follows that
\begin{equation}
	\tr{\tilde\tau P} \le 1/(\eps t). \label{eq:sound4}
\end{equation}

Choose $t = \ceil*{16/\Paren{\eps \delta^2}} \le \poly(n)$. Then for all sufficiently large $n$, it follows from \cref{eq:sound1,eq:sound2,eq:sound3,eq:sound4} that
\begin{equation*}
	\td\Paren{\ptr{>n}{\tau^O}, \rho}
	\le \sqrt{\frac1{\eps t} + \exp(-\Omega(n))} + 2^{-n} + \delta/2
	\le \frac2{\sqrt{\eps t}} + \frac\delta2
	\le \delta.
\end{equation*}

\section{Barrier to \qaczf lower bounds for constructing explicit states} \label{sec:qacfz}

Call a state sequence $(\ket{\psi_n})_n$ \emph{explicit} if $\ket{\psi_n}$ is an $n$-qubit state whose description can be computed in time $\exp(\poly(n))$ as a function of $n$. For example, every pure state sequence in $\sPe$ is explicit up to global phases, by \cref{lem:exp,lem:msp} and the fact that $\cc{PSPACE} \subseteq \cc{EXP}$. We say that a language is in \qaczf if it can be decided with bounded error by a nonuniform sequence of polynomial-size \qaczf circuits. The following is one way to more formally state \cref{obs:bar}:

\begin{thm}
	Assume there exists an explicit state sequence $\Paren{\ket{\psi_n}}_n$ and function $\eps(n) = \exp(-\poly(n))$ such that for all sequences $(C_n)_n$ of polynomial-size \qaczf circuits, it holds that $\Norm{C_n \ket\zs - \ket{\psi_n} \ket\zs} \ge \eps(n)$. Then $\cc{EXP} \nsubseteq \cc{QAC_f^0}$.
\end{thm}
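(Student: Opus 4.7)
The plan is to prove the contrapositive: assuming $\cc{EXP} \subseteq \cc{QAC_f^0}$, I will show that every explicit state sequence $(\ket{\psi_n})_n$ can be constructed to within exponentially small error by a sequence of polynomial-size \qaczf circuits.

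Fix an explicit state sequence $(\ket{\psi_n})_n$ and a target error $\eps(n) = \exp(-\poly(n))$. First I would invoke the clean, four-query version of the main theorem (\cref{thm:fqa}) applied to $(\ket{\psi_n})_n$ and error $\eps$, obtaining a uniform sequence $(C_n)_n$ of polynomial-size \qaczf circuits, each making four queries to a classical oracle $f_{\ket{\psi_n}}$, such that $\Norm{C_n^{f_{\ket{\psi_n}}} \ket\zs - \ket{\psi_n} \ket\zs} \le \eps(n)$. The crucial ``\poly(n)-space" clause of \cref{thm:fqa} says that $f_{\ket{\psi_n}}(x)$ can be computed in $\poly(n)$ space from the description of $\ket{\psi_n}$ and the query string $x$. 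Since $(\ket{\psi_n})_n$ is explicit, an $\exp(-\poly(n))$-precision description of $\ket{\psi_n}$ can be written down in $\exp(\poly(n))$ time given $n$, and then $f_{\ket{\psi_n}}(x)$ can be computed in $\poly(n)$ space (hence $\exp(\poly(n))$ time) from this description. Therefore the boolean function sequence $(g_n)_n$ defined by $g_n(x) = f_{\ket{\psi_n}}(x)$ lies in $\cc{EXP}$.

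By the assumption $\cc{EXP} \subseteq \cc{QAC_f^0}$, there is a sequence of polynomial-size \qaczf circuits $(B_n)_n$ computing $(g_n)_n$. Substituting $B_n$ for the oracle inside $C_n$ (using the standard translation between $U_f$-style and $V_f$-style queries via Hadamards, which are free in \qaczf, and uncomputing intermediate output bits) yields a polynomial-size \qaczf circuit $\tilde C_n$ with no queries such that $\Norm{\tilde C_n \ket\zs - \ket{\psi_n} \ket\zs} \le \eps(n)$. This contradicts the hypothesis of the theorem, completing the proof.

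The only step requiring a bit of care is the substitution of $B_n$ for the oracle while staying inside \qaczf: one must check that a \qaczf circuit computing $g_n$ can be used to implement the unitaries $U_{g_n}$ and $V_{g_n}$ with only a constant-depth overhead and with the ancillae properly returned to $\ket0$. This is straightforward because a \qaczf circuit computes $g_n$ reversibly on its workspace, so we can compute $g_n(x)$, XOR the result into the query-answer qubit (or apply a $Z$), and then uncompute $g_n(x)$; the result is again a polynomial-size \qaczf circuit. Nothing else in the argument involves any obstacle beyond chasing the definitions.
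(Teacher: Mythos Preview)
Your proposal is correct and follows essentially the same argument as the paper's proof: prove the contrapositive, invoke \cref{thm:fqa} to obtain $C_n^{f_n}$, argue that $(f_n)_n \in \cc{EXP}$ by composing the explicit description of $\ket{\psi_n}$ with the $\poly(n)$-space oracle algorithm, apply the assumption $\cc{EXP} \subseteq \cc{QAC_f^0}$, and substitute the resulting \qaczf circuits for the oracle calls. You spell out the oracle-substitution step in slightly more detail than the paper does, but the approach is identical.
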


\begin{proof}
	We prove the contrapositive statement: if $\cc{EXP} \subseteq \cc{QAC_f^0}$ then for all functions $\eps(n) = \exp(-\poly(n))$, every explicit state sequence $(\ket{\psi_n})_n$ can be constructed to within error $\eps$ in \qaczf. Let $C_n^{f_n}$ be the circuit-oracle combination for constructing $\ket{\psi_n}$ from \cref{thm:fqa}. We argue that $(f_n)_n$ is in $\cc{EXP}$: given $n$, first compute the description of $\ket{\psi_n}$ (which takes exponential time since $(\ket{\psi_n})_n$ is explicit) and then run the assumed algorithm for $f_n$ from \cref{thm:fqa} (which takes polynomial space and therefore exponential time). By the assumption that $\cc{EXP} \subseteq \cc{QAC_f^0}$ it follows that $(f_n)_n \in \cc{QAC_f^0}$, and therefore $\Paren{C_n^{f_n}}_n$ can be implemented in \qaczf.
\end{proof}

\section{Approximately constructing arbitrary states} \label{sec:states}
\subsection{Upper bound}

\rubs*

\begin{proof}
	Let $\mc G$ be any universal gate set that includes the Toffoli and NOT gates. By \cref{thm:tqa} and the Solovay--Kitaev theorem~\cite{BG21,DN06} there exists a $\poly(n)$-size circuit $A$ over $\mc G$, making ten queries to a Boolean function $f$, such that $\Norm{A^f \ket\zs - \ket\psi \ket\zs} \le \eps$. Inspection of the proof of \cref{thm:tqa} reveals that $f$ has $n + \log \log(1/\eps) + O(1)$ input bits, and that only the first output bit of $f$ depends on the input to $f$. For all $m$ every function from $m$ bits to 1 bit can be computed by an $O(2^m/m)$-size Boolean circuit~\cite{Juk12,Lup58}, so $f$ can be computed by an $O(2^n \log(1/\eps) / n)$-size Boolean circuit, where the output bits not depending on the input are hard-coded into the circuit. Since Boolean circuits can be cleanly simulated by quantum circuits consisting only of Toffoli and NOT gates with a constant-factor blowup in size, it follows that $f$ can be computed by an $O(2^n \log(1/\eps) / n)$-size circuit over $\mc G$. Combining this circuit with $A$ yields the desired result.
\end{proof}

\subsection{Lower bound}

\rlbs*

\begin{proof}	
	Let $S_n(r) = \{x \in \R^{n+1}: \norm x = r\}$ and $S_n = S_n(1)$. The set of $n$-qubit pure states can be identified with $S_{2^{n+1}-1}$, because an $n$-qubit pure state is described by $2^n$ complex amplitudes, each of which has a real part and an imaginary part, and these $2^{n+1}$ real numbers form a unit vector. Let $\mu_n$ denote $n$-dimensional volume; then $\mu_n (S_n)$ obeys the recurrence
	\begin{align*}
		&\mu_0(S_0) = 2,&
		&\mu_1(S_1) = 2\pi,&
		&\mu_{n+1} (S_{n+1}) = 2\pi \mu_{n-1}(S_{n-1}) / n \quad \text{for $n \ge 1$}
	\end{align*}
	and $\mu_n(S_n(r)) = r^n \mu_n(S_n)$~\cite{Wik23}. We will write $\mu = \mu_n$ when $n$ is clear from the context.
	
	For an $n$-qubit mixed state $\rho$ and $\eps \ge 0$, let $N_\eps(\rho)$ denote the set of pure states $\ket\psi$ such that $\td(\rho, \psi) \le \eps$. If $\rho$ itself is rank-1, say $\rho = \kb\rho$, then for all pure states $\ket\psi$ it is well known that $\td(\rho, \psi) = \sqrt{1 - |\ip \rho \psi|^2}$, and so $\ket\psi$ is in $N_\eps(\rho)$ if and only if $|\ip \rho \psi|^2 \ge 1-\eps^2$. Therefore
	\begin{equation*}
		\mu\Paren{N_\eps(\rho)} = \int_{\theta=0}^{\arcsin\eps} \mu\Paren{S_1(\cos \theta)} \mu\Paren{S_{2^{n+1} - 3}(\sin \theta)} d\theta,
	\end{equation*}
	because $\ip\rho\psi$ is described by two real numbers whose squares sum to a value $\cos^2 \theta$ between $1$ and $1-\eps^2$, and the rest of $\ket\psi$ is described by $2^{n+1} - 2$ real numbers whose squares sum to $\sin^2 \theta$. It follows that for $m = 2^{n+1}$,
	\begin{align*}
		\mu\Paren{N_\eps(\rho)}
		&= \int_{\theta=0}^{\arcsin\eps} \cos \theta \sin^{m-3} \theta d\theta \cdot \mu(S_1) \mu(S_{m-3})
		= \int_{u=0}^\eps u^{m-3} du \cdot \mu(S_1) \mu(S_{m-3}) \\
		&= \eps^{m-2} \mu(S_1) \mu(S_{m-3}) / (m-2)
		= \eps^{m-2} \mu(S_{m-1}).
	\end{align*}
	
	More generally, consider an $n$-qubit mixed state $\rho$ of arbitrary rank. If $N_\eps(\rho)$ is nonempty then there exists a state $\ket\psi \in N_\eps(\rho)$, so for all $\ket\phi \in N_\eps(\rho)$, by the triangle inequality $\td(\psi, \phi) \le \td(\psi, \rho) + \td(\rho, \phi) \le 2\eps$. In other words $N_\eps(\rho) \subseteq N_{2\eps} (\psi)$. It follows from the case proved above that
	\begin{equation*}
		\mu\Paren{N_\eps(\rho)}
		\le \mu\Paren{N_{2\eps}(\psi)}
		\le (2\eps)^{m-2} \mu(S_{m-1})
		\le \eps^{(m-2)/2} \mu(S_{m-1}),
	\end{equation*}
	where the last inequality holds because $\eps \le 1/4$.

	For $s \in \N$ let $\mc C_s$ denote the set of circuits over $\mc G$ consisting of $s$ gates. Circuits in $\mc C_s$ act on $O(s)$ qubits without loss of generality, and there are $\poly(s)$ ways to choose a gate from $\mc G$ and the qubits that it acts on out of $O(s)$ total qubits, so $\Mag{\mc C_s} \le \poly(s)^s \le 2^{O(s \log s)}$. In particular, if $s \le o(2^n \log(1/\eps) / n)$ then $\log s \le O(n) + \log \log(1/\eps) \le O(n)$ and so $2^{O(s \log s)} \le (1/\eps)^{o(2^n)}$; therefore
	\begin{align*}
		&\mu\Paren{\bigcup_{C \in \mc C_s} N_\eps \Paren{\trg n {C \kb\zs \adj C}}}
		\le \sum_{C \in \mc C_s} \mu\Paren{N_\eps \Paren{\trg n {C \kb\zs \adj C}}} \\
		&\quad\le \sum_{C \in \mc C_s} \eps^{(m-2)/2} \mu(S_{m-1})
		\le \eps^{(m-2)/2 - o(m)} \mu(S_{m-1})
		\le o\Paren{\mu(S_{m-1})}. \qedhere
	\end{align*}
\end{proof}

\appendix
\crefalias{section}{appendix}

\section{Proof of \texorpdfstring{\cref{lem:a5}}{Lemma 3.2}} \label{app:a5}

Recall that in \cref{sec:halg} we defined $\alpha = 0.35$ and
\begin{equation*}
	\ket{p_{\eta, C}} = C \cdot 2^{-n/2} \sum_{\mathclap{x \in \cube n}} \sr(\bra\eta C \ket x) \ket x
\end{equation*}
for a Clifford unitary $C$ and vector $\ket\eta \in \Paren{\C^2}^{\otimes n}$. We establish the following fact:

\raf*

Eq.~(A.22) of Irani et al.~\cite{INN+22}---where their $\ket\tau$ equals our $\ket\eta$, their $\gamma$ can be set to $0.24999$, and their $d$ equals $2^n$---implies that
\begin{equation*}
	\PR{\Norm{\mrm{Re} \Paren{C \ket\eta}}_1 \ge \sqrt{0.24999 \cdot 2^n}} > 0
\end{equation*}
for a random Clifford unitary $C$. Therefore there exists a fixed Clifford unitary $C$ such that $2^{-n/2} \Norm{\mrm{Re} \Paren{\adj C \ket\eta}}_1 \ge 0.4999$. Finally it follows from the definition of $\ket{p_{\eta, C}}$ that
\begin{equation*}
	\mrm{Re}(\ip \eta {p_{\eta, C}}) = 2^{-n/2} \sum_x \Mag{\mrm{Re}(\bra\eta C \ket x)} = 2^{-n/2} \Norm{\mrm{Re}(\adj C \ket\eta)}_1,
\end{equation*}
implying that \cref{lem:a5} holds with $\alpha = 0.4999$.

We instead define $\alpha = 0.35$ because we believe that there is a typo in Irani et al.~\cite{INN+22}, and that the right side of their Eq.~(A.22) should be $1/2 - 4\gamma$ instead of $1/2 - 2\gamma$. So in the above analysis we should actually set $\gamma$ to be slightly less than $1/8$, and so the value of $\alpha$ should be slightly less than $\sqrt{1/8} \approx 0.354$. The exact value of $\alpha$ is not important for our main results however.

Our disagreement with the argument in Irani et al.~\cite{INN+22} is as follows. We will use their notation; in particular they assign a different meaning to the variable $\alpha$ than we have done. First---and this part is actually an understatement by Irani et al., not an error---in Eq.~(A.13) the expression $\sqrt{\Paren{2^n+1} / \Paren{2\alpha}}$ can trivially be replaced by $\sqrt{\Paren{2^n+1} / (4 \alpha)}$, and so Eq.~(A.15) can be replaced by ``$\ge 1 - 1/(2\alpha)$". Applying this strengthening of Eq.~(A.15) with $\alpha = 1/(16\gamma)$ implies that $\PR{\Norm{\ket\psi}_1 \ge 2\sqrt{\gamma 2^n}} \ge 1 - 8\gamma$, where $\ket\psi$ is as defined in Lemma A.5 of Irani et al.

Write $\ket\psi = \ket{a} + i \ket{b}$ where $\ket a, \ket b \in \R^{2^n}$. Then
\begin{equation*}
	\PR{\norm{\ket a}_1 \ge \sqrt{\gamma 2^n}}
	\ge \PR{\norm{\ket a}_1 \ge \sqrt{\gamma 2^n}  \, \middle| \, \Norm{\ket\psi}_1 \ge 2\sqrt{\gamma 2^n}}
	\PR{\Norm{\ket\psi}_1 \ge 2\sqrt{\gamma 2^n}}.
\end{equation*}
By the triangle inequality $\Norm{\ket\psi}_1 \le \Norm{\ket a}_1 + \Norm{\ket b}_1$, so conditioned on $\Norm{\ket\psi}_1 \ge 2\sqrt{\gamma 2^n}$ either $\Norm{\ket a}_1 \ge \sqrt{\gamma 2^n}$ or $\Norm{\ket b}_1 \ge \sqrt{\gamma 2^n}$ (or both). Furthermore $\Norm{\ket a}_1$ and $\Norm{\ket b}_1$ are identically distributed conditioned on any value of $\Norm{\ket \psi}_1$, because applying a global phase of $i$ to $\ket\psi$ has the effect of swapping $\Norm{\ket a}_1$ and $\Norm{\ket b}_1$ without changing $\Norm{\ket \psi}_1$. Therefore
\begin{equation*}
	\PR{\norm{\ket a}_1 \ge \sqrt{\gamma 2^n}  \, \middle| \, \Norm{\ket\psi}_1 \ge 2\sqrt{\gamma 2^n}} \ge 1/2
\end{equation*}
and so $\PR{\norm{\ket a}_1 \ge \sqrt{\gamma 2^n}} \ge \frac12\Paren{1 - 8\gamma} = 1/2 - 4\gamma$.

\section{State synthesis using perfect linear hash functions} \label{app:var}

Below we argue that in the proof of (a statement similar to) \cref{lem:help}, instead of using states of the form $C \cdot 2^{-n/2} \sum_{x \in \cube n} \pm \ket x$ where $C$ is a Clifford unitary, we could alternatively use what we call ``hash states":

\begin{dfn}[Hash states]
	A hash state is an $n$-qubit state $\ket\phi$ such that there exists a set $S \subseteq \cube n$, with $|S| = 2^k$ a power of 2, such that $\ket\phi = |S|^{-1/2} \sum_{x \in S} \sigma_x \ket x$ where $\sigma_x \in \{1,-1\}$, and furthermore there exists a linear transformation $A: \ft^n \to \ft^k$ that is one-to-one on $S$. In particular if $k=0$ then $A$ exists vacuously.
\end{dfn}

\begin{rmk*}
	The resulting variant of \cref{lem:help} would have a lower \qaczf circuit depth, but a measurement of the first $t$ qubits would output $0^t$ with probability $\Theta(1/n)$ instead of $\Theta(1)$. This is not a problem for our proofs of \cref{thm:mi,thm:spq,obs:bar}, but would be in our proof of \cref{thm:ubs}.
\end{rmk*}

A hash state $\ket\phi$ can be constructed with one query as follows. First prepare $\ket{+^k}$ in a register $\reg R$. Then controlled on the state $\ket y_{\reg R}$ where $y \in \cube k$, query the unique string $x \in S$ such that $Ax = y$, while simultaneously making a query to apply a phase of $\sigma_x$. Finally use $A$ to uncompute $y$ controlled on $x$, using that parity is in \qaczf~\cite{Gre+02}.

More generally, for $0 \le j < T$ let $\ket{\phi_j}$ be a hash state and let $A_j \in \ft^{k_j \times n}$ be the linear transformation associated with $\ket{\phi_j}$. To construct $\ket{\phi_j}$ controlled on $j$, first construct $\ket{+^n}_{\reg R}$, and then proceed as above controlled on $j$. Here the oracle ignores the last $n - k_j$ qubits of $\reg R$, and also outputs descriptions of $A_0, \dotsc,A_{T-1}$. Finally uncompute $\ket{+^{n-k_j}}$ in the last $n-k_j$ qubits of $\reg R$, controlled on $k_j$ (which is implicit in the description of $A_j$).

All that remains is to write an arbitrary $n$-qubit state $\ket\psi$ as a linear combination of hash states, in a manner suitable to an LCU application like that in \cref{sec:oqa}. (It will be apparent from our proof that the queries can be computed in $\poly(n)$ space given the description of $\ket\psi$, by reasoning similar to that in \cref{sec:qps}.) By writing $\ket\psi = \ket{\psi_R} + i \ket{\psi_I}$ where $\ket{\psi_R}$ and $\ket{\psi_I}$ are real-valued vectors, it suffices to write a \emph{real-valued} vector with norm at most 1 as such a linear combination of hash states. To do this we will need the following lemma, which is proved using the probabilistic method:

\begin{lem} \label{lem:ft}
	Let $n>0$. For all $S \subseteq \ft^n$ with $|S|=2^k$ a power of $2$, there exists a matrix $A \in \ft^{k \times n}$ satisfying $|\{Ax: x \in S\}| > \frac12 \cdot 2^k$.
\end{lem}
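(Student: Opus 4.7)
The plan is a standard probabilistic-method argument: sample $A \in \ft^{k \times n}$ uniformly at random, and show that in expectation the number of distinct images $N(A) = |\{Ax : x \in S\}|$ strictly exceeds $\tfrac12 \cdot 2^k$, which guarantees the existence of a specific matrix $A$ satisfying the required inequality.

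The key pointwise fact is that for any fixed $x_1 \neq x_2$ in $\ft^n$, the difference $x_1 - x_2$ is a nonzero vector in $\ft^n$, and since the $k$ rows of $A$ are independent and uniform on $\ft^n$, the image $A(x_1 - x_2)$ is uniform on $\ft^k$; hence $\Pr[Ax_1 = Ax_2] = 2^{-k}$. Grouping $S$ into fibers $B_y = \{x \in S : Ax = y\}$, let $C(A) = \sum_y \binom{|B_y|}{2}$ denote the number of unordered colliding pairs in $S$. Using the elementary inequality $\ell - 1 \le \binom{\ell}{2}$ valid for every nonnegative integer $\ell$, I would bound the ``excess'' elements fiber by fiber:
\[
|S| - N(A) \;=\; \sum_{y : |B_y| \ge 1} \bigl(|B_y| - 1\bigr) \;\le\; \sum_y \binom{|B_y|}{2} \;=\; C(A).
\]

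Linearity of expectation then gives $\E[C(A)] = \binom{|S|}{2} \cdot 2^{-k} = (2^k - 1)/2$, and therefore
\[
\E[N(A)] \;\ge\; 2^k - \frac{2^k - 1}{2} \;=\; 2^{k-1} + \tfrac12 \;>\; \tfrac12 \cdot 2^k,
\]
so some concrete $A$ achieves $N(A) > \tfrac12 \cdot 2^k$, as required. There is no substantive obstacle: the whole argument is a textbook application of the probabilistic method, and the only mildly nontrivial ingredient is the bound $|S| - N(A) \le C(A)$, which rests on the elementary binomial inequality above. The boundary case $k=0$ is automatic since then $|S|=1$, the empty matrix is the only choice, and $N(A)=1 > 1/2$.
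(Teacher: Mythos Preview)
Your proof is correct and follows the same probabilistic-method-via-collision-counting strategy as the paper. Two minor differences: you sample $A$ uniformly from all of $\ft^{k\times n}$ (giving collision probability exactly $2^{-k}$) rather than from full-rank matrices as the paper does, and you relate collisions to image size via the elementary bound $\ell-1\le\binom{\ell}{2}$ summed over fibers, whereas the paper first fixes an $A$ with few collisions and then applies Jensen's inequality to $\sum_y\binom{|B_y|}{2}$; both routes yield the same conclusion, and yours is slightly more streamlined.
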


We remark that Alon, Dietzfelbinger, Miltersen, Petrank and Tardos~\cite{ADM+99} also investigated the properties of random linear hash functions from $S \subseteq \ft^n$ to $\ft^k$. However, they did not bound the number of nonempty buckets when $|S| = 2^k$.

\begin{proof}[Proof of \cref{lem:ft}]
	Let $A \in \ft^{k \times n}$ be uniform random conditioned on having rank $k$. The kernel of $A$ has dimension $n-k$ and therefore contains $2^{n-k}$ elements, one of which is the all-zeros vector. Therefore any fixed nonzero vector is in $\ker(A)$ with probability $p \coloneqq \frac{2^{n-k} - 1}{2^n - 1}$.\footnote{One way to see this is as follows. Let $x,y \in \ft^n$ be nonzero vectors, and let $B \in \ft^{n \times n}$ be an invertible matrix such that $Bx = y$. Then $AB$ is distributed identically to $A$, so $\pr{Ax = 0} = \pr{ABx = 0} = \pr{Ay = 0}$.} We say that distinct strings $x,y \in S$ \emph{collide} if $Ax = Ay$. Since any distinct $x, y \in S$ collide with probability $\pr{A(x+y) = 0} = p$, the expected number of collisions is
	\begin{equation*}
		\binom{|S|}2 \cdot p
		= \frac{2^k (2^k - 1)} 2 \cdot \frac{2^{n-k} - 1}{2^n - 1}
		= \frac{2^k - 1}2 \cdot \frac{2^n - 2^k}{2^n-1}
		< \frac{2^k}2.
	\end{equation*}
	Therefore there exists a fixed matrix $A$ with less than $2^k/2$ collisions.
	
	Let $T = \{Ax: x \in S\}, t = |T|$ and for $y \in T$ let $S_y = \{x \in S: Ax = y\}$. The sets $S_y$ form a partition of $S$, so by Jensen's inequality the number of collisions is\footnote{Define $\binom r 2 = r(r-1)/2$ even for non-integer values of $r$.}
	\begin{equation*}
		\sum_{y \in T} \binom{|S_y|}2
		\ge t \cdot \binom{\sum_{y \in T} |S_y|/t}2
		= t \cdot \binom{2^k/t}2
		= \frac{2^k}2 \cdot \Paren{\frac{2^k}{t} - 1}.
	\end{equation*}
	Since $2^k/2$ is greater than the number of collisions which is at least $2^k/2 \cdot (2^k/t - 1)$, it follows that $t > 2^k/2$.
\end{proof}

Using \cref{lem:ft} we prove the following:

\begin{lem} \label{lem:ar}
	For all $n$-qubit states $\ket\psi$ with real (standard-basis) amplitudes, there exists a hash state $\ket\phi$ such that $\ip \phi \psi \ge \Omega(1/\sqrt n)$.
\end{lem}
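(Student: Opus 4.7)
My plan is to pick a ``heavy'' subset of the computational basis via a bucketing argument and then apply \cref{lem:ft} to hash it. Writing $\ket\psi = \sum_x \alpha_x \ket x$ with $\alpha_x \in \R$, I partition the support into buckets $B_j = \{x \in \cube n : 2^{-j-1} < \alpha_x^2 \le 2^{-j}\}$ for $j = 0, 1, \ldots, n$. Amplitudes with $\alpha_x^2 \le 2^{-n-1}$ contribute at most $1/2$ to $\sum_x \alpha_x^2 = 1$, so $\bigcup_j B_j$ carries at least half of the $\ell^2$ mass; by pigeonhole, some bucket $B_{j^*}$ satisfies $\sum_{x \in B_{j^*}} \alpha_x^2 \ge 1/(2(n+1))$. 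Combining this with the upper bound $\alpha_x^2 \le 2^{-j^*}$ on $B_{j^*}$, I get $b := |B_{j^*}| \ge 2^{j^*}/(2(n+1))$.

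Next I would set $k = \lfloor \log_2 b \rfloor$ (so $2^k \in [b/2, b]$) and let $B \subseteq B_{j^*}$ be any subset of size $2^k$. Applying \cref{lem:ft} to $B$ yields a linear map $A \in \ft^{k \times n}$ with $|A(B)| > 2^{k-1}$. A key side observation is that this $A$ is automatically surjective onto $\ft^k$: the image $A(\ft^n)$ is a subspace of $\ft^k$ and so its cardinality is a power of $2$, and the only power of $2$ in the interval $(2^{k-1}, 2^k]$ is $2^k$ itself. I can therefore build a transversal $S$ of $\ker A$ in $\ft^n$ by picking, for each coset, a representative in $B$ whenever one is available (giving $|A(B)|$ reps from $B$) and an arbitrary element of $\ft^n$ otherwise. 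Then $|S| = 2^k$ and $A|_S$ is a linear bijection onto $\ft^k$, so with $\sigma_x = \sgn(\alpha_x)$ the state $\ket\phi = 2^{-k/2} \sum_{x \in S} \sigma_x \ket x$ is a valid hash state.

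For the overlap, each representative drawn from $B \subseteq B_{j^*}$ contributes at least $2^{-(j^*+1)/2}$ to $\sum_{x \in S} \sigma_x \alpha_x = \sum_{x \in S} |\alpha_x|$ and the remaining representatives contribute nonnegatively, so
\begin{equation*}
	\ip\phi\psi \;\ge\; 2^{-k/2} \cdot |A(B)| \cdot 2^{-(j^*+1)/2} \;>\; 2^{(k-j^*)/2 - 3/2}.
\end{equation*}
Substituting the bound $2^k \ge b/2 \ge 2^{j^*}/(4(n+1))$ then yields $\ip\phi\psi > 1/(4\sqrt{2(n+1)}) = \Omega(1/\sqrt n)$, which is the required estimate.

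The main obstacle is making sure the hash $A$ produced by \cref{lem:ft} is surjective, since this is what allows $S$ to be a transversal of size exactly $2^k$; otherwise the power-of-$2$ constraint in the definition of a hash state would force $|S|$ to shrink by a further factor of $2$ and the overlap bound would no longer close. Fortunately the subspace-image argument above resolves this for free, without requiring any modification to the probabilistic proof of \cref{lem:ft} itself.
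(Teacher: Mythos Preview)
Your argument is correct. The overall architecture matches the paper's proof: pick a ``heavy'' set of size $2^k$, apply \cref{lem:ft}, observe that the resulting $A$ must be surjective because its image is a subspace of $\ft^k$ of cardinality exceeding $2^{k-1}$, complete to a transversal $S$ preferring the heavy elements, and take $\sigma_x = \sgn(\alpha_x)$. The only real difference is in how the heavy set is chosen. The paper sorts the amplitudes as $\beta_1 \ge \beta_2 \ge \cdots$, shows via the harmonic series $\sum_{j \le 2^n} 1/j \le O(n)$ that $\mu := \max_j \beta_j \sqrt j \ge \Omega(1/\sqrt n)$, and then takes the top $2^k$ amplitudes with $2^k \le j < 2^{k+1}$ at the maximizing $j$; you instead use dyadic bucketing on $\alpha_x^2$ and pigeonhole over the $n+1$ buckets. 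Your route is a touch more elementary, the paper's slightly tighter in the constants, and both yield the same $\Omega(1/\sqrt n)$ overlap.
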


\begin{proof}[Proof of \cref{lem:ar}]
	Write $\ket\psi = \sum_{x \in \cube n} \alpha_x \ket x$. By a limiting argument we can assume without loss of generality that the $|\alpha_x|$ are all distinct. Let $0 \le k \le n$ be a parameter to be chosen later, and let $S$ be the set of the $2^k$ largest elements of $\cube n$ according to the total order defined by $x > y$ when $|\alpha_x| > |\alpha_y|$. By \cref{lem:ft} there exists a matrix $A \in \ft^{k \times n}$ such that $|\{Ax: x \in S\}| > \frac12 \cdot 2^k$.
	
	Define a function $f: \cube k \to \cube n$ as follows: for all $y \in \cube k$, if there exists $x \in S$ such that $Ax = y$ then let $f(y)$ be the lexicographically first $x \in S$ such that $Ax = y$, and otherwise let $f(y)$ be the lexicographically first $x \in \cube n$ such that $Ax = y$. (To see that such an $x$ exists in the latter case, note that $A$ has rank $k$ because the image of $A$ has cardinality greater than $2^{k-1}$.) Let $\ket\phi = 2^{-k/2} \sum_{x \in \im f} \sgn(\alpha_x) \ket{x}$ where $\im f$ denotes the image of $f$. Clearly $\ket\phi$ is a hash state, and
	\begin{align*}
		\ip\phi\psi
		&= 2^{-k/2} \sum_{\mathclap{x \in \im f}} |\alpha_x|
		\ge 2^{-k/2} \sum_{\mathclap{x \in \im f \cap S}} |\alpha_x|
		\ge 2^{-k/2} \cdot |\im f \cap S| \cdot \min_{x \in S} |\alpha_x| \\
		&= 2^{-k/2} \cdot |\{Ax: x \in S\}| \cdot \min_{x \in S} |\alpha_x|
		\ge \frac12 \cdot 2^{k/2} \cdot \min_{x \in S} |\alpha_x|.
	\end{align*}
	
	For $j \in [2^n]$ let $\beta_j$ be the $j$'th largest element of the set $\{|\alpha_x|: x \in \cube n\}$, and let $\mu = \max_{j \in [2^n]} \beta_j \sqrt j$. Then
	\begin{equation*}
		1 = \sum_{\mathclap{x \in \cube n}} \alpha_x^2
		= \sum_{j=1}^{2^n} \beta_j^2
		\le \sum_{j=1}^{2^n} (\mu/\sqrt j)^2
		= \mu^2 \sum_{j=1}^{2^n} 1/j
		\le O(\mu^2 n),
	\end{equation*}
	so $\mu \ge \Omega(1/\sqrt n)$. Let $j \in [2^n]$ be such that $\mu = \beta_j \sqrt j$, and choose $k$ such that $2^k \le j < 2^{k+1}$. Then,
	\begin{equation*}
		\ip\phi\psi
		\ge \frac12 \cdot \sqrt{2^k} \cdot \beta_{2^k}
		\ge \frac12 \cdot \sqrt{j/2} \cdot \beta_j
		= \frac1{2 \sqrt 2} \cdot \mu
		\ge \Omega(1/\sqrt n). \qedhere
	\end{equation*}
\end{proof}

\printbibliography[heading=bibintoc]

\end{document}